\documentclass[final,3p]{elsarticle}
 \usepackage{graphics}
 \usepackage{graphicx}
 \usepackage{epsfig}
\usepackage{amssymb}
 \usepackage{amsthm}
 \usepackage{lineno}
 \usepackage{amsmath}
   \numberwithin{equation}{section}
\usepackage{mathrsfs}

\NeedsTeXFormat{LaTeX2e}
\ProvidesPackage{natbib}
\newtheorem{thm}{Theorem}[section]

\newtheorem{lem}[thm]{Lemma}
\newtheorem{prop}[thm]{Proposition}
\newtheorem{defn}[thm]{Definition}

 \setcounter{section}{0}
\biboptions{sort&compress,square}
\begin{document}
\begin{frontmatter}
\author[rvt1,rvt2]{Jian Wang}
\ead{wangj068@gmail.com}
\author[rvt1]{Yong Wang\corref{cor2}}
\ead{wangy581@nenu.edu.cn}
\cortext[cor2]{Corresponding author.}

\address[rvt1]{School of Mathematics and Statistics, Northeast Normal University,
Changchun, 130024, P.R.China}
\address[rvt2]{Chengde Petroleum College,
Chengde, Hebei, 067000, P.R.China}

\title{A Kastler-Kalau-Walze Type Theorem for $5$-dimensional \\ Manifolds with  Boundary}
\begin{abstract}
The Kastler-Kalau-Walze theorem, announced by Alain Connes, shows that the Wodzicki residue of the inverse square of the Dirac operator
 is proportional to the Einstein-Hilbert action of general relativity. In this paper, we prove a  Kastler-Kalau-Walze type theorem
for $5$-dimensional manifolds with  boundary.
\end{abstract}
\begin{keyword} Dirac operators; Noncommutative residue for manifolds with boundary.
\end{keyword}
\end{frontmatter}
\section{Introduction}
\label{1}
The noncommutative residue found in \cite{Gu,Wo} plays a prominent role in noncommutative geometry.
For one-dimensional manifolds, the noncommutative residue was discovered by Adler \cite{MA}
 in connection with geometric aspects of nonlinear partial differential equations.
 For arbitrary closed compact $n$-dimensional manifolds, the noncommutative reside was introduced by Wodzicki in \cite{Wo} using the theory of zeta
 functions of elliptic pseudodifferential operators.
In \cite{Co1}, Connes used the noncommutative residue to derive a conformal 4-dimensional Polyakov action analogy.
Furthermore, Connes made a challenging observation that the noncommutative residue of the square of the inverse of the
Dirac operator was proportional to the Einstein-Hilbert action in \cite{Co2}.
Let $s$ be the scalar curvature and Wres denote  the noncommutative residue. Then the Kastler-Kalau-Walze
theorem gives an operator-theoretic explanation of the gravitational action and says that for a $4-$dimensional closed spin manifold,
 there exists a constant $c_0$, such that
 \begin{equation*}
{\rm  Wres}(D^{-2})=c_0\int_Ms{\rm dvol}_M.
\end{equation*}
In \cite{Ka}, Kastelr gave a brute-force proof of this theorem. In \cite{KW}, Kalau and Walze proved this theorem in the
normal coordinates system simultaneously. And then, Ackermann proved that
the Wodzicki residue ${\rm  Wres}(D^{-2})$ in turn is essentially the second  coefficient
of the heat kernel expansion of $D^{2}$ in \cite{Ac}.

On the other hand, Fedosov etc. defined a noncommutative residue on Boutet de Monvel's algebra and proved that it was a
unique continuous trace in \cite{FGLS}. In \cite{S}, Schrohe gave the relation between the Dixmier trace and the noncommutative residue for
manifolds with boundary.  For an oriented spin manifold  $M$ with boundary $\partial M$,  by the composition formula in Boutet de Monvel's algebra and
the definition of $ \widetilde{{\rm Wres}}$ \cite{Wa1},  $\widetilde{{\rm Wres}}[(\pi^+D^{-1})^2]$ should be the sum of two terms from
interior and boundary of $M$, where $\pi^+D^{-1}$ is an element in Boutet de Monvel's algebra  \cite{Wa1}.
 It is well known that the  gravitational action for manifolds with boundary is
also the sum of two terms from interior and boundary of $M$ \cite{H}. Considering the Kastler-Kalau-Walze Theorem for manifolds without boundary,
then the term from interior is proportional to gravitational action from interior, so it is natural to hope to get the gravitational action
for manifolds with boundary by computing $\widetilde{{\rm Wres}}[(\pi^+D^{-1})^2]$.
Based on the motivation, Wang \cite{Wa3} proved a Kastler-Kalau-Walze type theorem for $4$-dimensional  spin manifolds with boundary
  \begin{equation*}
 \widetilde{{\rm Wres}}[(\pi^+D^{-1})^2]=-\frac{\Omega_3}{3}\int_Ms{\rm dvol}_M,
\end{equation*}
where $\Omega_3$ is the canonical volume of $S^{3}$. Furthermore, Wang \cite{Wa4} found a Kastler-Kalau-Walze type theorem for higher dimensional manifolds with boundary and generalized the definition
of lower  dimensional volumes in \cite{RP} to  manifolds with boundary. For $5$-dimensional  spin  manifolds with boundary  \cite{Wa4}, Wang get
  \begin{equation*}
\widetilde{{\rm Wres}}[(\pi^+D^{-2})^2]=\frac{\pi i}{2}\Omega_2{\rm vol}_{\partial M},
\end{equation*}
 and for $6$-dimensional spin manifolds with boundary
  \begin{equation*}
\widetilde{{\rm Wres}}[(\pi^+D^{-2})^2]=-\frac{5\Omega_5}{3}\int_Ms{\rm dvol}_M.
\end{equation*}
In order to get the  boundary term,  we computed the lower dimensional volume ${\rm Vol}^{(1,3)}_6$ for $6$-dimensional spin manifolds with
boundary associated with $D^{-1}$, $D^{-3}$ in \cite{WW2}, and obtained the volume with the  boundary term
\begin{equation*}
\widetilde{{\rm Wres}}[\pi^+D^{-1}\circ\pi^+D^{-3}]=-\frac{5\Omega_4}{3}\int_Ms{\rm dvol}_M
 +\pi \Omega_3\int_{\partial M}Kd{\rm vol}_{\partial M},
\end{equation*}
where $K$ is the extrinsic curvature.

 Recently, Ackermann and Tolksdorf \cite{AT} proved a generalized version of the well-known Lichnerowicz formula for the square of the
most general Dirac operator with torsion  $D_{T}$ on an even-dimensional spin manifold associated to a metric connection with torsion.
 Meanwhile,  Pf$\ddot{a}$ffle and Stephan considered compact Riemannian spin manifolds without boundary equipped with orthogonal connections,
and investigated the induced Dirac operators in \cite{PS}. In \cite{PS1}, Pf$\ddot{a}$ffle and Stephan considered
orthogonal connections with arbitrary torsion on compact Riemannian manifolds, and for the induced
Dirac operators, twisted Dirac operators and Dirac operators of Chamseddine-Connes type they computed the spectral
action. For the associated Dirac operators with torsion $D^{*}_{T}, D_{T}$ \cite{WWY},
we got the Kastler-Kalau-Walze theorem associated to Dirac operators with
torsion on $4$-dimensional compact manifolds with boundary
 \begin{equation*}
\widetilde{Wres}[\pi^{+}(D_{T}^{*})^{-1} \circ\pi^{+}D_{T}^{-1}]=-\frac{1}{48\pi^{2}}\int_{M}\tilde{R}(x)\texttt{d}x
-\int_{\partial_{M}}\sum_{i}A_{iin}\pi \Omega_{2}\texttt{d}x',
\end{equation*}
where  definitions of $\tilde{R}(x)$, $A_{iin}$, see  \cite{PS1}.
In addition,  we proved the Kastler-Kalau-Walze type theorems for foliations with or without boundary
associated with sub-Dirac operators  in \cite{WW1}
\begin{equation*}
 \widetilde{{\rm Wres}}[(\pi^+D_{F}^{-1})^2]=-\frac{1}{24\sqrt{2}\cdot2^p\pi^{p+\frac{q}{2}+1}}\int_Ms_M{\rm dvol}_M.
\end{equation*}

In fact, in previous papers, we computed  $\widetilde{{\rm Wres}}[\pi^+D^{-p_{1}}\circ \pi^+D^{-p_{2}}]$ for $n$-dimensional spin manifolds with boundary
in case of
$n-p_{1}-p_{2}\leq 2$. In the present paper, we shall restrict our attention to the case of $n-p_{1}-p_{2}= 3$. We compute
$\widetilde{{\rm Wres}}[(\pi^+D^{-1})^2]$ for $5$-dimensional manifolds with boundary.
Our main result is as follows.

{\bf Main Theorem:}  The following identity holds
  \begin{equation*}
 \widetilde{{\rm Wres}}[(\pi^+D^{-1})^2]=\frac{\pi^{3}}{16}\int_{\partial_{M}}\Big( \frac{225}{64}K^{2}
 +\frac{29}{4}s_{M}\big|_{\partial_{M}}
 + \big(\frac{197}{12}+3i \big)s_{\partial_{M}}\Big){\rm dvol}_{\partial_{M}}.
\end{equation*}
where $s_{M}$, $s_{\partial_{M}}$ are respectively scalar curvatures on $M$ and $\partial_{M}$.
Compared with the previous results, up to the extrinsic curvature, the scalar curvature on $\partial_{M}$ and  the scalar curvature on $M$
appear in the boundary term.  This case essentially makes the whole calculations more difficult, and the boundary term  is the sum of
fifteen terms. As in computations of the boundary term, we  shall consider some new traces of multiplication
of Clifford elements. And the inverse 3-order symbol of the Dirac operator and higher derivatives of  -1-order, -2-order symbols of the Dirac operators
will be extensively used.

 This paper is organized as follows: In Section 2, we define lower dimensional volumes of compact Riemannian manifolds
 with  boundary. In Section 3, for $5$-dimensional spin manifolds with boundary and  the associated Dirac operators,
 we compute $ \widetilde{{\rm Wres}}[(\pi^+D^{-1})^2]$ and get a Kastler-Kalau-Walze type theorem in this case.

\section{Lower-Dimensional Volumes of Spin Manifolds  with  boundary}

 In this section we consider an $n$-dimensional oriented Riemannian manifold $(M, g^{M})$ with boundary $\partial_{ M}$ equipped
with a fixed spin structure. We assume that the metric $g^{M}$ on $M$ has
the following form near the boundary
 \begin{equation}
 g^{M}=\frac{1}{h(x_{n})}g^{\partial M}+\texttt{d}x _{n}^{2} ,
\end{equation}
where $g^{\partial M}$ is the metric on $\partial M$. Let $U\subset
M$ be a collar neighborhood of $\partial M$ which is diffeomorphic $\partial M\times [0,1)$. By the definition of $h(x_n)\in C^{\infty}([0,1))$
and $h(x_n)>0$, there exists $\tilde{h}\in C^{\infty}((-\varepsilon,1))$ such that $\tilde{h}|_{[0,1)}=h$ and $\tilde{h}>0$ for some
sufficiently small $\varepsilon>0$. Then there exists a metric $\hat{g}$ on $\hat{M}=M\bigcup_{\partial M}\partial M\times
(-\varepsilon,0]$ which has the form on $U\bigcup_{\partial M}\partial M\times (-\varepsilon,0 ]$
 \begin{equation}
\hat{g}=\frac{1}{\tilde{h}(x_{n})}g^{\partial M}+\texttt{d}x _{n}^{2} ,
\end{equation}
such that $\hat{g}|_{M}=g$.
We fix a metric $\hat{g}$ on the $\hat{M}$ such that $\hat{g}|_{M}=g$.

Let us give the expression of Dirac operators near the boundary. Set  $\widetilde{E}_{n}=\frac{\partial}{\partial x_{n}}$,
 $\widetilde{E}_{j}=\sqrt{h(x_{n})}E_{j}~~(1\leq j \leq n-1)$, where  $\{E_{1},\cdots,E_{n-1}\}$ are orthonormal basis of $T\partial_{M}$.
  Let $\nabla^L$ denote the Levi-civita connection
about $g^M$.
 In the local coordinates $\{x_i; 1\leq i\leq n\}$ and the fixed orthonormal frame $\{\widetilde{E}_{1},\cdots,\widetilde{E}_{n}\}$,
 the connection matrix
 $(\omega_{s,t})$
is defined by
 \begin{equation}
 \nabla^L(\widetilde{E}_{1},\cdots,\widetilde{E}_{n})^{t}= (\omega_{s,t})(\widetilde{E}_{1},\cdots,\widetilde{E}_{n})^{t}.
 \end{equation}
 The Dirac operator is defined by
 \begin{equation}
D=\sum^n_{j=1}c(\widetilde{E_{j}})\Big[\widetilde{E_j}+\frac{1}{4}\sum_{s,t}\omega_{s,t}(\widetilde{E_j})c(\widetilde{E_s})c(\widetilde{E_t})\Big].
\end{equation}
By Lemma 6.1 in \cite{WW1} and Proposition 2.2, Proposition 2.4 in \cite{DB}, we have
\begin{lem}
Let $f=\frac{1}{\sqrt{h}} $ and $\tilde{M}=I\times _{f}M$ be a Riemannian manifold with the metric $g_{f}=\texttt{d}x_{n}^{2}+f^{2}(x_{n})g$.
For vector fields $X,Y$ in $\mathcal{L}(M)$, then
\begin{eqnarray}
&&(1) \ \tilde{\nabla}_{\partial_{x_{n}}}\partial_{x_{n}}=0;\\
&&(2) \ \tilde{\nabla}_{\partial_{x_{n}}} X=\tilde{\nabla}_{X}\partial_{x_{n}}=(ln f)'X;\\
&&(3) \ \nabla_{X}Y=\nabla_{X}^{M}Y-\frac{g(X,Y)}{f}\texttt{grad} (f).
\end{eqnarray}
\end{lem}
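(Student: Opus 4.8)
The plan is to verify the three identities directly from the Koszul formula for the Levi-Civita connection $\tilde{\nabla}$ of the warped metric $g_f=\texttt{d}x_n^2+f^2(x_n)g$, treating $\partial_{x_n}$ and the lifts of fiber fields $X,Y\in\mathcal{L}(M)$ on an equal footing. The only structural inputs I will use are that $f$ depends on $x_n$ alone, that the lifts of fiber fields may be chosen with components independent of $x_n$ (so that $[\partial_{x_n},X]=0$ and $X(f)=0$, while $[X,Y]$ is again a fiber field), together with the orthogonality relations $g_f(\partial_{x_n},\partial_{x_n})=1$, $g_f(\partial_{x_n},X)=0$, and $g_f(X,Y)=f^2g(X,Y)$. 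Since the claimed formulas are tensorial in $X,Y$, verifying them for such adapted lifts suffices.

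First I would test each covariant derivative against the two types of vectors, $\partial_{x_n}$ and an arbitrary fiber field, which pins down the normal and tangential components separately. For $(1)$, pairing $\tilde{\nabla}_{\partial_{x_n}}\partial_{x_n}$ with $\partial_{x_n}$ gives $\tfrac12\partial_{x_n}g_f(\partial_{x_n},\partial_{x_n})=0$, and pairing with a fiber $X$ leaves only bracket terms, all vanishing by $[\partial_{x_n},X]=0$; hence $\tilde{\nabla}_{\partial_{x_n}}\partial_{x_n}=0$. For $(2)$, pairing $\tilde{\nabla}_{\partial_{x_n}}X$ with $\partial_{x_n}$ returns $0$, while pairing with a fiber $Y$ yields $\tfrac12\partial_{x_n}\big(f^2 g(X,Y)\big)=f f' g(X,Y)=g_f\!\big(\tfrac{f'}{f}X,Y\big)$, so $\tilde{\nabla}_{\partial_{x_n}}X=(\ln f)'X$; the equality $\tilde{\nabla}_X\partial_{x_n}=\tilde{\nabla}_{\partial_{x_n}}X$ then follows from torsion-freeness together with $[\partial_{x_n},X]=0$.

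The substantive computation is $(3)$. Pairing $\tilde{\nabla}_XY$ with a fiber $Z$, every occurrence of $g_f$ on fiber fields carries a common factor $f^2$ that $X,Y,Z$ do not differentiate (since $X(f)=0$, etc.), so the Koszul expression collapses to $f^2$ times the Koszul formula for $(M,g)$; this identifies the tangential part as $\nabla^M_XY$. Pairing instead with $\partial_{x_n}$ kills the two metric-derivative terms involving $g_f(\cdot,\partial_{x_n})=0$ and leaves $-\tfrac12\partial_{x_n}g_f(X,Y)=-f f' g(X,Y)$, which is exactly $g_f$-dual to $-\tfrac{1}{f}g_f(X,Y)\,\mathrm{grad}(f)$ once one records that $\mathrm{grad}(f)=f'\partial_{x_n}$. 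Combining the two components yields $(3)$.

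I expect the only real pitfall to be the bookkeeping of the normal component and the attendant conventions: the inner product in the subtracted term of $(3)$ must be the ambient $g_f$ (equivalently, the factor $f^2$ hidden in $g_f(X,Y)=f^2g(X,Y)$ is what reconciles the naive scaling), and $\mathrm{grad}(f)$ must be the gradient with respect to $g_f$, which is $f'\partial_{x_n}$ because the $x_n$-direction is unwarped. With these conventions fixed, the three formulas are precisely the O'Neill warped-product identities, consistent with Proposition 2.2 and Proposition 2.4 in \cite{DB} and Lemma 6.1 in \cite{WW1}, and no further estimates are required.
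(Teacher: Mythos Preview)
Your proof is correct. The paper does not actually supply its own argument for this lemma: it simply invokes Lemma~6.1 in \cite{WW1} and Propositions~2.2, 2.4 in \cite{DB} and states the result. Your direct verification via the Koszul formula is precisely the standard derivation of the O'Neill warped-product connection identities that those references record, so the approaches coincide in substance. Your observation that the inner product in the subtracted term of $(3)$ must be read as the ambient $g_f$ (equivalently, that the written ``$g(X,Y)$'' carries the hidden factor $f^2$) is the right way to reconcile the stated formula with the Koszul computation, and is consistent with the convention in \cite{DB}.
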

Denote $A_{j s}^{t}=2\langle \nabla^{ L, \partial_{ M}}_{E_{j}} E_{s}, E_{t}\rangle$, then we obtain
\begin{lem}
 The following identity holds:
\begin{eqnarray}
&&(1) \ \langle \nabla^{ L}_{ \widetilde{E}_{i}}\partial_{x_{n}}, \widetilde{E}_{j}\rangle=-\frac{h'}{2h};\\
&&(2) \ \langle \nabla^{ L}_{ \widetilde{E}_{i}} \widetilde{E}_{j}, \partial_{x_{n}}\rangle=\frac{h'}{2h};\\
&&(3) \ \langle \nabla^{ L}_{ \widetilde{E}_{j}} \widetilde{E}_{s}, \widetilde{E}_{t}\rangle=\frac{\sqrt{h}}{2} A_{j s}^{t}.
\end{eqnarray}
Others are zeros.
\end{lem}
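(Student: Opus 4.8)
The plan is to reduce all three identities to the warped-product formulas already recorded in Lemma 2.1. The point is that near $\partial_{M}$ the metric $g^{M}=\texttt{d}x_{n}^{2}+\frac{1}{h(x_{n})}g^{\partial M}$ is exactly the warped product $I\times_{f}\partial M$ with warping function $f=\frac{1}{\sqrt{h}}$, so Lemma 2.1 applies once we take its ``$M$'' to be $\partial M$, its ``$g$'' to be $g^{\partial M}$, its ``$\nabla$'' to be $\nabla^{L}$, and its ``$\nabla^{M}$'' to be $\nabla^{L,\partial_{M}}$. Three elementary facts will be used throughout: $(\ln f)'=-\frac{h'}{2h}$; $\texttt{grad}(f)=f'\partial_{x_{n}}$, since $f=f(x_{n})$ and $\partial_{x_{n}}$ is a $g^{M}$-unit normal; and $E_{j}(\sqrt{h})=0$, since $\sqrt{h}=\sqrt{h(x_{n})}$ while $E_{j}$ is tangent to $\partial_{M}$.

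For (1) I would expand $\widetilde{E}_{i}=\sqrt{h}\,E_{i}$ and apply Lemma 2.1(2): $\nabla^{L}_{\widetilde{E}_{i}}\partial_{x_{n}}=\sqrt{h}\,\nabla^{L}_{E_{i}}\partial_{x_{n}}=\sqrt{h}\,(\ln f)'E_{i}=(\ln f)'\widetilde{E}_{i}$, so pairing with $\widetilde{E}_{j}$ and using $g^{M}$-orthonormality of $\{\widetilde{E}_{k}\}$ gives $(\ln f)'\delta_{ij}=-\frac{h'}{2h}\delta_{ij}$. Identity (2) then follows at once from metric compatibility applied to $\langle\widetilde{E}_{j},\partial_{x_{n}}\rangle\equiv 0$ together with (1). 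For (3) the decisive step is that $E_{j}(\sqrt{h})=0$ yields $\nabla^{L}_{\widetilde{E}_{j}}\widetilde{E}_{s}=h\,\nabla^{L}_{E_{j}}E_{s}$, after which Lemma 2.1(3) rewrites the right-hand side as $h\big(\nabla^{L,\partial_{M}}_{E_{j}}E_{s}-\frac{\delta_{js}f'}{f}\partial_{x_{n}}\big)$; pairing with $\widetilde{E}_{t}=\sqrt{h}\,E_{t}$ annihilates the normal term, and converting $g^{M}$ to $g^{\partial M}$ on vectors tangent to $\partial_{M}$ (a factor $1/h$) leaves $\sqrt{h}\,\langle\nabla^{L,\partial_{M}}_{E_{j}}E_{s},E_{t}\rangle=\frac{\sqrt{h}}{2}A_{js}^{t}$. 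For the ``others are zeros'' clause I would observe that $\nabla^{L}_{\partial_{x_{n}}}\widetilde{E}_{j}=\frac{h'}{2\sqrt{h}}E_{j}+\sqrt{h}\,(\ln f)'E_{j}=0$, so every bracket with $\widetilde{E}_{n}=\partial_{x_{n}}$ in the differentiating slot vanishes, and brackets with $\partial_{x_{n}}$ repeated vanish by compatibility.

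I do not anticipate a genuine obstacle: granted Lemma 2.1, each identity is a couple of lines of frame algebra. The one thing I would double-check is the bookkeeping of which inner product each angle bracket refers to — $\{E_{j}\}$ is $g^{\partial M}$-orthonormal, $\{\widetilde{E}_{k}\}$ is $g^{M}$-orthonormal, and $A_{js}^{t}$ is defined through $g^{\partial M}$ (equivalently through $\nabla^{L,\partial_{M}}$) — so that the conversion factors $\sqrt{h}$ and $1/h$ are inserted in exactly the right places; a stray power of $h$ is precisely what would corrupt the coefficient $\frac{\sqrt{h}}{2}$ in (3).
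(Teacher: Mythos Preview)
Your proposal is correct and follows exactly the route the paper intends: the paper does not give a detailed proof of this lemma but simply states it as an immediate consequence of Lemma~2.1, and your argument spells out precisely that deduction --- applying Lemma~2.1(2) for identity~(1), metric compatibility for~(2), Lemma~2.1(3) with the $g^{M}$-versus-$g^{\partial M}$ conversion for~(3), and the cancellation $\partial_{x_{n}}(\sqrt{h})+\sqrt{h}(\ln f)'=0$ for the vanishing of the remaining brackets. Your bookkeeping of the factors $\sqrt{h}$ and $1/h$ is correct, as is the observation that the stated identities~(1) and~(2) implicitly carry a $\delta_{ij}$, the off-diagonal cases being absorbed into ``Others are zeros.''
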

By Lemma 2.2, we have
\begin{defn}
 The following identity holds in the coordinates near the boundary
\begin{equation}
D=\sum^n_{\beta=1}c(\widetilde{E_\beta})\widetilde{E_\beta}-\frac{h'}{h}c(\texttt{d}x_{n})
+\frac{\sqrt{h}}{8}\sum_{s,\alpha}A_{\beta s}^{\alpha}
c(\widetilde{E_\beta })c(\widetilde{E_s})c(\widetilde{E_{\alpha}}).
\end{equation}
\end{defn}

To define the lower dimensional volume, some basic facts and formulae about Boutet de Monvel's calculus which can be found  in Sec.2 in \cite{Wa1}
are needed.

 Let  \begin{equation}
  F:L^2({\bf R}_t)\rightarrow L^2({\bf R}_v);~F(u)(v)=\int e^{-ivt}u(t)dt
  \end{equation}
   denote the Fourier transformation and
$\Phi(\overline{{\bf R}^+}) =r^+\Phi({\bf R})$ (similarly define $\Phi(\overline{{\bf R}^-}$)), where $\Phi({\bf R})$
denotes the Schwartz space and
  \begin{equation}
r^{+}:C^\infty ({\bf R})\rightarrow C^\infty (\overline{{\bf R}^+});~ f\rightarrow f|\overline{{\bf R}^+};~
 \overline{{\bf R}^+}=\{x\geq0;x\in {\bf R}\}.
\end{equation}
We define $H^+=F(\Phi(\overline{{\bf R}^+}));~ H^-_0=F(\Phi(\overline{{\bf R}^-}))$ which are orthogonal to each other. We have the following
 property: $h\in H^+~(H^-_0)$ iff $h\in C^\infty({\bf R})$ which has an analytic extension to the lower (upper) complex
half-plane $\{{\rm Im}\xi<0\}~(\{{\rm Im}\xi>0\})$ such that for all nonnegative integer $l$,
 \begin{equation}
\frac{d^{l}h}{d\xi^l}(\xi)\sim\sum^{\infty}_{k=1}\frac{d^l}{d\xi^l}(\frac{c_k}{\xi^k})
\end{equation}
as $|\xi|\rightarrow +\infty,{\rm Im}\xi\leq0~({\rm Im}\xi\geq0)$.

 Let $H'$ be the space of all polynomials and $H^-=H^-_0\bigoplus H';~H=H^+\bigoplus H^-.$ Denote by $\pi^+~(\pi^-)$ respectively the
 projection on $H^+~(H^-)$. For calculations, we take $H=\widetilde H=\{$rational functions having no poles on the real axis$\}$ ($\tilde{H}$
 is a dense set in the topology of $H$). Then on $\tilde{H}$,
 \begin{equation}
\pi^+h(\xi_0)=\frac{1}{2\pi i}\lim_{u\rightarrow 0^{-}}\int_{\Gamma^+}\frac{h(\xi)}{\xi_0+iu-\xi}d\xi,
\end{equation}
where $\Gamma^+$ is a Jordan close curve included ${\rm Im}\xi>0$ surrounding all the singularities of $h$ in the upper half-plane and
$\xi_0\in {\bf R}$. Similarly, define $\pi^{'}$ on $\tilde{H}$,
 \begin{equation}
\pi'h=\frac{1}{2\pi}\int_{\Gamma^+}h(\xi)d\xi.
\end{equation}
So, $\pi'(H^-)=0$. For $h\in H\bigcap L^1(R)$, $\pi'h=\frac{1}{2\pi}\int_{R}h(v)dv$ and for $h\in H^+\bigcap L^1(R)$, $\pi'h=0$.

Let $M$ be an $n$-dimensional compact oriented manifold with boundary $\partial M$.
Denote by $\mathcal{B}$ Boutet de Monvel's algebra, we recall the main theorem in \cite{FGLS}.
\begin{thm}\label{th:32}{\bf(Fedosov-Golse-Leichtnam-Schrohe)}
 Let $X$ and $\partial X$ be connected, ${\rm dim}X=n\geq3$,
 $A=\left(\begin{array}{lcr}\pi^+P+G &   K \\
T &  S    \end{array}\right)$ $\in \mathcal{B}$ , and denote by $p$, $b$ and $s$ the local symbols of $P,G$ and $S$ respectively.
 Define:
 \begin{eqnarray}
{\rm{\widetilde{Wres}}}(A)&=&\int_X\int_{\bf S}{\rm{tr}}_E\left[p_{-n}(x,\xi)\right]\sigma(\xi)dx \nonumber\\
&&+2\pi\int_ {\partial X}\int_{\bf S'}\left\{{\rm tr}_E\left[({\rm{tr}}b_{-n})(x',\xi')\right]+{\rm{tr}}
_F\left[s_{1-n}(x',\xi')\right]\right\}\sigma(\xi')dx',
\end{eqnarray}
Then~~ a) ${\rm \widetilde{Wres}}([A,B])=0 $, for any
$A,B\in\mathcal{B}$;~~ b) It is a unique continuous trace on
$\mathcal{B}/\mathcal{B}^{-\infty}$.
\end{thm}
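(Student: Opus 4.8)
The plan is to prove the two assertions separately: the trace property (a) by a direct symbolic computation showing that the full residue density is annihilated by commutators, and the uniqueness (b) by showing that the commutator subspace has codimension one in $\mathcal{B}/\mathcal{B}^{-\infty}$. Throughout I would work with the symbol filtration on $\mathcal{B}$ and the composition formula of Boutet de Monvel's calculus, which expresses the four entries of a product $AB$ in terms of the entries of $A$ and $B$ together with the half-line operations $r^{+}$, $\pi^{+}$.

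For part (a), I would first recall that on the doubled closed manifold $\widehat{X}$ the interior density $\mathrm{tr}_{E}[p_{-n}(x,\xi)]\,\sigma(\xi)$ is the classical Wodzicki residue density, which is already known to vanish on commutators on the algebra of classical pseudodifferential operators. On a manifold with boundary, however, this interior term \emph{alone} is not a trace on $\mathcal{B}$: computing the symbol of $[A,B]$ and running the integration-by-parts argument that underlies the closed-manifold case produces a nonvanishing leftover concentrated at $\partial X$, coming precisely from the half-line cut-off $r^{+}$ and the projection $\pi^{+}$. The content of the theorem is that the boundary term built from $\mathrm{tr}_{E}[\mathrm{tr}\,b_{-n}]$ and $\mathrm{tr}_{F}[s_{1-n}]$ is exactly calibrated to cancel this remainder. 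I would therefore organize the computation so as to pair the interior Stokes remainder against the boundary symbol contribution degree by degree over the cosphere bundles $\mathbf{S}$ and $\mathbf{S}'$, reducing $\widetilde{\rm Wres}([A,B])=0$ to a finite list of homogeneous symbolic identities that can be verified directly.

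For part (b), let $\tau$ be any continuous trace on $\mathcal{B}/\mathcal{B}^{-\infty}$. Since $\tau$ vanishes on commutators and on smoothing operators, the task is to show that $\mathcal{B}/(\,[\mathcal{B},\mathcal{B}]+\mathcal{B}^{-\infty})$ is one-dimensional. I would establish this by a structure-theoretic reduction: using the symbol filtration, I would show that every element of $\mathcal{B}$ is, modulo commutators and smoothing terms, equivalent to a canonical residue normal form whose only invariant is the pair of integrals defining $\widetilde{\rm Wres}$. Concretely, one exhibits enough elementary commutators — for instance with multiplication operators by coordinate functions and with the generators of the symbol algebra — to annihilate every homogeneous component except the single logarithmically obstructed one, exactly as in the Wodzicki--Guillemin argument, but now carried out simultaneously for the interior symbol $p$ and for the \emph{operator-valued} boundary symbols $b$ and $s$ acting on the half-line $\overline{\mathbf{R}^{+}}$. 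It then follows that $\tau$ agrees with a scalar multiple of $\widetilde{\rm Wres}$ on the dense subalgebra, and continuity extends the identity to the closure, giving uniqueness.

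The main obstacle is the uniqueness statement (b), and within it the boundary component. On the interior one can invoke the established closed-manifold theory essentially verbatim, but the boundary term forces one to understand traces on the algebra of Boutet de Monvel boundary symbols, whose entries are operators on $L^{2}(\overline{\mathbf{R}^{+}})$ rather than scalar functions on a cosphere bundle. Showing that the only surviving trace invariant there is the prescribed combination of $\mathrm{tr}\,b_{-n}$ and $s_{1-n}$ — and in particular that the Poisson and trace entries $K$ and $T$ contribute nothing new modulo commutators — is the delicate point. I expect it to require a careful Hochschild-homological bookkeeping of the half-line symbol calculus, rather than any single estimate, to confirm that the codimension is exactly one and not larger.
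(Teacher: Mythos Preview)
The paper does not prove this theorem at all: it is quoted verbatim as ``the main theorem in \cite{FGLS}'' and used as a black box to justify the definition of $\widetilde{\rm Wres}$ on Boutet de Monvel's algebra. There is therefore no proof in the paper to compare your proposal against.

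That said, your outline is a reasonable high-level summary of the strategy actually carried out by Fedosov, Golse, Leichtnam and Schrohe in the cited reference. Part (a) is indeed handled by a symbolic computation pairing the interior Stokes remainder with the boundary contribution, and part (b) by a codimension-one argument for the commutator subspace. Your identification of the main difficulty --- controlling traces on the algebra of operator-valued boundary symbols and showing that the Poisson and trace entries contribute nothing new modulo commutators --- is accurate, and in the original paper this is the most technical part, carried out via a detailed analysis of the boundary symbol algebra rather than an abstract Hochschild argument. If you want to turn your sketch into an actual proof you would need to supply those symbolic identities explicitly; as written it is an outline, not a proof.
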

 Let $p_{1},p_{2}$ be nonnegative integers and $p_{1}+p_{2}\leq n$. Then by Sec 2.1 of \cite{Wa3},  we have
\begin{defn} Lower-dimensional volumes of spin manifolds with boundary  are defined by
   \begin{equation}\label{}
  {\rm Vol}^{(p_1,p_2)}_nM:=\widetilde{{\rm Wres}}[\pi^+D^{-p_1}\circ\pi^+D^{-p_2}].
\end{equation}
\end{defn}

 Denote by $\sigma_{l}(A)$ the $l$-order symbol of an operator A. An application of (2.1.4) in \cite{Wa1} shows that
\begin{equation}
\widetilde{{\rm Wres}}[\pi^+D^{-p_1}\circ\pi^+D^{-p_2}]=\int_M\int_{|\xi|=1}{\rm
trace}_{S(TM)}[\sigma_{-n}(D^{-p_1-p_2})]\sigma(\xi)\texttt{d}x+\int_{\partial
M}\Phi,
\end{equation}
where
 \begin{eqnarray}
\Phi&=&\int_{|\xi'|=1}\int^{+\infty}_{-\infty}\sum^{\infty}_{j, k=0}
\sum\frac{(-i)^{|\alpha|+j+k+1}}{\alpha!(j+k+1)!}
 {\rm trace}_{S(TM)}
\Big[\partial^j_{x_n}\partial^\alpha_{\xi'}\partial^k_{\xi_n}
\sigma^+_{r}(D^{-p_1})(x',0,\xi',\xi_n)\nonumber\\
&&\times\partial^\alpha_{x'}\partial^{j+1}_{\xi_n}\partial^k_{x_n}\sigma_{l}
(D^{-p_2})(x',0,\xi',\xi_n)\Big]d\xi_n\sigma(\xi')\texttt{d}x',
\end{eqnarray}
and the sum is taken over $r-k+|\alpha|+\ell-j-1=-n,r\leq-p_{1},\ell\leq-p_{2}$.

 \section{ A Kastler-Kalau-Walze type theorem for $5$-dimensional spin manifolds with boundary }
In this section, we compute the lower dimensional volume for 5-dimensional compact manifolds with boundary and get a
Kastler-Kalau-Walze type formula in this case.
From now on we always assume that $M$ carries a spin structure so that the spinor bundle and the
Dirac operator are defined on $M$.

The following proposition is the key of the computation of lower-dimensional volumes of spin
manifolds with boundary.
\begin{prop}\cite{Wa4}  The following identity holds:
 \begin{eqnarray}
&& 1)~  When~ p_1+p_2=n,~ then, ~ {\rm Vol}^{(p_1,p_2)}_nM=c_0{\rm Vol}_M;\\
&& 2)~  when~  p_1+p_2\equiv n ~ {\rm mod}~  1, ~{\rm Vol}^{(p_1,p_2)}_nM=\int_{\partial M}\Phi.
\end{eqnarray}
\end{prop}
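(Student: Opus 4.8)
\medskip
\noindent The plan is to start from the decomposition recalled just above, in which $\widetilde{{\rm Wres}}[\pi^{+}D^{-p_{1}}\circ\pi^{+}D^{-p_{2}}]$ is the sum of the interior integral $\int_{M}\int_{|\xi|=1}{\rm trace}_{S(TM)}[\sigma_{-n}(D^{-p_{1}-p_{2}})]\sigma(\xi)\,\texttt{d}x$ and the boundary integral $\int_{\partial M}\Phi$, and to analyze each summand separately in the two regimes $p_{1}+p_{2}=n$ and $p_{1}+p_{2}<n$.

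For 1) the operator $D^{-p_{1}-p_{2}}=D^{-n}$ has order exactly $-n$, so $\sigma_{-n}(D^{-n})$ is its principal symbol. Since $\sigma_{1}(D)=\sqrt{-1}\,c(\xi)$ and $c(\xi)^{2}=-|\xi|^{2}$, one finds $\sigma_{-n}(D^{-n})=(\sqrt{-1}\,c(\xi))^{-n}=|\xi|^{-n}{\rm Id}_{S(TM)}$ when $n$ is even, and integrating its trace over the unit sphere turns the interior term into $c_{0}{\rm Vol}_{M}$ with $c_{0}$ a universal constant; when $n$ is odd this principal symbol is a scalar multiple of $c(\xi)$, hence trace-free, and $c_{0}=0$. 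It then remains to show $\int_{\partial M}\Phi=0$: in the sum defining $\Phi$ the indices obey $r-k+|\alpha|+\ell-j-1=-n$ with $r\le-p_{1}$, $\ell\le-p_{2}$, and I would carry out, term by term, the inner $\xi_{n}$-contour integral and then the integration over $|\xi'|=1$ and check that every resulting summand vanishes — this being the boundary analogue of the fact that on a closed manifold ${\rm Wres}(D^{-n})$ is purely a volume term.

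For 2) the boundary term $\int_{\partial M}\Phi$ is in general nonzero, and the point is that the interior term drops out; this is the case that matters here ($n=5$, $p_{1}=p_{2}=1$, $p_{1}+p_{2}<n$). I would isolate a parity lemma: in $\sigma(D^{-k})\sim\sum_{j\ge0}\sigma_{-k-j}(D^{-k})$ each homogeneous component $\sigma_{-k-j}(D^{-k})$ is, as a function of $\xi$, even when $k+j$ is even and odd when $k+j$ is odd. Indeed $D$ itself already satisfies "$\xi$-parity $\equiv$ degree $\pmod{2}$" componentwise, its degree-one part being the odd symbol $\sqrt{-1}\,c(\xi)$ and its degree-zero part $\sigma_{0}(D)$ being $\xi$-independent; this invariant is preserved under the recursion for the parametrix, since every correction term acquires either one extra $\xi$-derivative — which reverses $\xi$-parity while lowering the degree by one — or one factor of $\sigma_{0}(D)$, while products of symbols add degrees and multiply parities. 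Hence $\sigma_{-k-j}(D^{-k})$ has $\xi$-parity $(-1)^{k+j}$, and taking $k=p_{1}+p_{2}$, $j=n-p_{1}-p_{2}$ gives that $\sigma_{-n}(D^{-p_{1}-p_{2}})$ has $\xi$-parity $(-1)^{n}$. When $n$ is odd it is therefore an odd function of $\xi$, so ${\rm trace}_{S(TM)}[\sigma_{-n}(D^{-p_{1}-p_{2}})]$ is odd and its integral over the full sphere $|\xi|=1$ is identically zero; the interior term vanishes and ${\rm Vol}^{(p_{1},p_{2})}_{n}M=\int_{\partial M}\Phi$.

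The main obstacle I anticipate is, for 2), the bookkeeping behind the parity lemma: one must verify that the full symbol-composition recursion for $\pi^{+}D^{-p_{1}}\circ\pi^{+}D^{-p_{2}}$, including the $\pi^{+}$-projections and all the Clifford factors $c(\xi)$, $c(\texttt{d}x_{i})$ entering through Definition 2.3, never breaks the "parity $\equiv$ degree" invariant, and that the $\xi$-integration genuinely runs over a complete sphere so that odd integrands vanish exactly; for 1) the obstacle is instead the explicit residue check that the boundary summands cancel. Granting these points, the remaining ingredients — the pointwise computation of $\sigma_{-n}(D^{-n})$ and routine homogeneity and trace manipulations — are standard.
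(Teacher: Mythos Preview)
The paper does not prove this proposition at all: it is quoted verbatim from \cite{Wa4} and used as input, so there is no ``paper's own proof'' to compare your attempt against. What I can do is comment on the soundness of your sketch.

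Your parity argument for 2) is essentially the right idea and is the standard mechanism behind the vanishing of the interior term in odd dimensions: one checks inductively from the parametrix recursion (your equations (3.5)--(3.8)) that each homogeneous component $\sigma_{-m}(D^{-k})$ has $\xi$-parity $(-1)^{m}$, so $\sigma_{-n}(D^{-p_{1}-p_{2}})$ is odd when $n$ is odd and its sphere integral vanishes. However, the obstacle you flag is a red herring. The interior term in the decomposition (2.19) is literally $\int_{M}\int_{|\xi|=1}{\rm trace}\,\sigma_{-n}(D^{-p_{1}-p_{2}})\,\sigma(\xi)\,\texttt{d}x$, built from the \emph{ordinary} symbol calculus of $D^{-p_{1}-p_{2}}$ on the doubled manifold $\hat M$; the projections $\pi^{+}_{\xi_{n}}$ enter only in the boundary summand $\Phi$. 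So you do not need to worry about $\pi^{+}$ breaking the parity invariant --- only about the classical recursion, where the invariant is immediate from the rules ``$\partial_{\xi}$ flips parity and lowers degree by one'' and ``multiplication by $p_{0}$ preserves parity and lowers degree by one''.

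For 1) your computation of the interior contribution is correct, including the observation that $c_{0}=0$ when $n$ is odd. The gap is the claim $\int_{\partial M}\Phi=0$ when $p_{1}+p_{2}=n$: ``I would carry out, term by term, the $\xi_{n}$-contour integral and check that every summand vanishes'' is not a proof, and there is no reason offered why these residues should all cancel. Since the proposition is borrowed wholesale from \cite{Wa4}, the honest thing is to cite that paper for 1) rather than to assert a cancellation you have not verified.
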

Nextly, for $5$-dimensional spin manifolds with boundary, we compute
${\rm Vol}^{(1,1)}_5$. By Proposition 3.1, we have
 \begin{equation}
 \widetilde{{\rm Wres}}[(\pi^+D^{-1})^{2}]=\int_{\partial M}\Phi.
\end{equation}

 Recall  the Dirac operator $D$ of the definition 2.3.
 Write
  \begin{equation}
D_x^{\alpha}=(-\sqrt{-1})^{|\alpha|}\partial_x^{\alpha};~\sigma(D)=p_1+p_0;
~\sigma(D^{-1})=\sum^{\infty}_{j=1}q_{-j}.
\end{equation}
By the composition formula of psudodifferential operators, then we have
\begin{eqnarray}
1=\sigma(D\circ D^{-1})&=&\sum_{\alpha}\frac{1}{\alpha!}\partial^{\alpha}_{\xi}[\sigma(D)]D^{\alpha}_{x}[\sigma(D^{-1})]\nonumber\\
&=&(p_1+p_0)(q_{-1}+q_{-2}+q_{-3}+\cdots)\nonumber\\
& &~~~+\sum_j(\partial_{\xi_j}p_1+\partial_{\xi_j}p_0)(
D_{x_j}q_{-1}+D_{x_j}q_{-2}+D_{x_j}q_{-3}+\cdots)\nonumber\\
&=&p_1q_{-1}+(p_1q_{-2}+p_0q_{-1}+\sum_j\partial_{\xi_j}p_1D_{x_j}q_{-1})\nonumber\\
   &&~~~+(p_1q_{-3}+p_0q_{-2}+\sum_j\partial_{\xi_j}p_1D_{x_j}q_{-2}) +\cdots .
\end{eqnarray}
Thus, we get
\begin{eqnarray}
q_{-1}&=&p_1^{-1};  \\
q_{-2}&=&-p_1^{-1}\Big[p_0p_1^{-1}+\sum_j\partial_{\xi_j}p_1D_{x_j}(p_1^{-1})\Big]; \\
q_{-3}&=&-p_1^{-1}\Big[p_0q_{-2}+\sum_{j=1}^{n-1}c(\texttt{d}x_{j})\partial_{x_j}q_{-2} +c(\texttt{d}x_{n})\partial_{x_n}q_{-2}\Big].
\end{eqnarray}

 By Lemma 2.1 in \cite{Wa3}, we have
 \begin{lem}\label{le:31}
The symbol of the Dirac operator
\begin{eqnarray}
\sigma_{-1}(D^{-1})&=&\frac{\sqrt{-1}c(\xi)}{|\xi|^{2}}; \\
\sigma_{-2}(D^{-1})&=&\frac{c(\xi)p_{0}c(\xi)}{|\xi|^{4}}+\frac{c(\xi)}{|\xi|^{6}}\sum_{j}c(\texttt{d}x_{j})
\Big[\partial_{x_{j}}(c(\xi))|\xi|^{2}-c(\xi)\partial_{x_{j}}(|\xi|^{2})\Big],
\end{eqnarray}
where
 \begin{equation}
p_{0}=-\frac{h'}{h}c(\texttt{d}x_{n})
+\frac{\sqrt{h}}{8}\sum_{s,\alpha}A_{\beta s}^{\alpha}
c(\widetilde{E_\beta })c(\widetilde{E_s})c(\widetilde{E_{\alpha}}).
\end{equation}
\end{lem}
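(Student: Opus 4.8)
The plan is to read off both symbols directly from the recursion for the homogeneous components $q_{-j}$ of $\sigma(D^{-1})$ that was just produced from the composition formula, so that essentially only careful Clifford algebra is involved. First I would record the full symbol of $D$ from Definition 2.3: near $\partial M$ one has $\sigma(D)=p_1+p_0$ with $p_1=\sqrt{-1}\,c(\xi)$ (the principal symbol of the Clifford-contracted gradient $\sum_\beta c(\widetilde{E_\beta})\widetilde{E_\beta}$) and $p_0=-\frac{h'}{h}c(\texttt{d}x_n)+\frac{\sqrt{h}}{8}\sum_{s,\alpha}A_{\beta s}^{\alpha}c(\widetilde{E_\beta})c(\widetilde{E_s})c(\widetilde{E_{\alpha}})$ collecting the connection terms. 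The single algebraic input needed is the Clifford relation $c(\xi)^2=-|\xi|^2$, which gives $c(\xi)^{-1}=-c(\xi)/|\xi|^2$ and hence $p_1^{-1}=\big(\sqrt{-1}\,c(\xi)\big)^{-1}=\sqrt{-1}\,c(\xi)/|\xi|^2$. Since $q_{-1}=p_1^{-1}$, this is already the stated formula for $\sigma_{-1}(D^{-1})$.

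For $\sigma_{-2}(D^{-1})=q_{-2}$ I would substitute $p_1^{-1}$ into $q_{-2}=-p_1^{-1}\big[p_0p_1^{-1}+\sum_j\partial_{\xi_j}p_1\,D_{x_j}(p_1^{-1})\big]$. The first term becomes $-p_1^{-1}p_0p_1^{-1}=-(\sqrt{-1})^{2}\,c(\xi)p_0c(\xi)/|\xi|^{4}=c(\xi)p_0c(\xi)/|\xi|^{4}$. For the second I compute $\partial_{\xi_j}p_1=\sqrt{-1}\,c(\texttt{d}x_j)$ and, using $D_x^{\alpha}=(-\sqrt{-1})^{|\alpha|}\partial_x^{\alpha}$, $D_{x_j}(p_1^{-1})=-\sqrt{-1}\,\partial_{x_j}\big(\sqrt{-1}\,c(\xi)/|\xi|^{2}\big)=\partial_{x_j}\big(c(\xi)/|\xi|^{2}\big)$; expanding the last derivative by the quotient rule yields $\big[\partial_{x_j}(c(\xi))|\xi|^{2}-c(\xi)\partial_{x_j}(|\xi|^{2})\big]/|\xi|^{4}$. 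Multiplying this by $-p_1^{-1}\partial_{\xi_j}p_1=-(\sqrt{-1})^{2}c(\xi)c(\texttt{d}x_j)/|\xi|^{2}=c(\xi)c(\texttt{d}x_j)/|\xi|^{2}$ and summing over $j$ gives exactly $\frac{c(\xi)}{|\xi|^{6}}\sum_j c(\texttt{d}x_j)\big[\partial_{x_j}(c(\xi))|\xi|^{2}-c(\xi)\partial_{x_j}(|\xi|^{2})\big]$, the claimed second term.

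The step I would treat as the main obstacle is the bookkeeping: the factors $\sqrt{-1}$ must be propagated correctly through $D_x^{\alpha}=(-\sqrt{-1})^{|\alpha|}\partial_x^{\alpha}$, and because Clifford multiplication is noncommutative the products $c(\xi)$, $p_0$, $c(\texttt{d}x_j)$ and $\partial_{x_j}(c(\xi))$ must be kept in precisely the order in which they appear in the recursion. A secondary point is that $\partial_{x_j}(c(\xi))$ and $\partial_{x_j}(|\xi|^{2})$ do not vanish, since the frame and metric coefficients vary near $\partial M$; for this lemma they are left symbolic, their explicit evaluation via the metric form near the boundary and Lemma 2.2 being postponed to the computation of $\Phi$. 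As both the recursion and the symbol of $D$ from Definition 2.3 are in hand, nothing further is required; alternatively one may simply invoke Lemma 2.1 of \cite{Wa3}.
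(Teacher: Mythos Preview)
Your proposal is correct and is essentially the same approach as the paper: the paper sets up exactly the recursion $q_{-1}=p_1^{-1}$, $q_{-2}=-p_1^{-1}\big[p_0p_1^{-1}+\sum_j\partial_{\xi_j}p_1\,D_{x_j}(p_1^{-1})\big]$ and then simply cites Lemma~2.1 of \cite{Wa3} for the outcome, whereas you carry out the substitution explicitly. The Clifford and $\sqrt{-1}$ bookkeeping you do is accurate, so your argument is a valid direct verification of the cited result.
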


Since $\Phi$ is a global form on $\partial M$, so for any fixed point $x_{0}\in\partial M$, we can choose the normal coordinates
$U$ of $x_{0}$ in $\partial M$(not in $M$) and compute $\Phi(x_{0})$ in the coordinates $\widetilde{U}=U\times [0,1)$ and the metric
$\frac{1}{h(x_{n})}g^{\partial M}+\texttt{d}x _{n}^{2}$. The dual metric of $g^{M}$ on $\widetilde{U}$ is
$h(x_{n})g^{\partial M}+\texttt{d}x _{n}^{2}.$ Write
$g_{ij}^{M}=g^{M}(\frac{\partial}{\partial x_{i}},\frac{\partial}{\partial x_{j}})$;
$g^{ij}_{M}=g^{M}(d x_{i},dx_{j})$, then

\begin{equation}
[g_{i,j}^{M}]=
\begin{bmatrix}\frac{1}{h( x_{n})}[g_{i,j}^{\partial M}]&0\\0&1\end{bmatrix};\quad
[g^{i,j}_{M}]=\begin{bmatrix} h( x_{n})[g^{i,j}_{\partial M}]&0\\0&1\end{bmatrix},
\end{equation}
and
\begin{equation}
\partial_{x_{s}} g_{ij}^{\partial M}(x_{0})=0,\quad 1\leq i,j\leq n-1;\quad g_{i,j}^{M}(x_{0})=\delta_{ij}.
\end{equation}

Let $\{E_{1},\cdots, E_{n-1}\}$ be an orthonormal frame field in $U$ about $g^{\partial M}$ which is parallel along geodesics and
$E_{i}=\frac{\partial}{\partial x_{i}}(x_{0})$, then $\{\widetilde{E_{1}}=\sqrt{h(x_{n})}E_{1}, \cdots,
\widetilde{E_{n-1}}=\sqrt{h(x_{n})}E_{n-1},\widetilde{E_{n}}=dx_{n}\}$ is the orthonormal frame field in $\widetilde{U}$ about $g^{M}.$
Locally $S(TM)|\widetilde{U}\cong \widetilde{U}\times\wedge^{*}_{C}(\frac{n}{2}).$ Let $\{f_{1},\cdots,f_{n}\}$ be the orthonormal basis of
$\wedge^{*}_{C}(\frac{n}{2})$. Take a spin frame field $\sigma: \widetilde{U}\rightarrow Spin(M)$ such that
$\pi\sigma=\{\widetilde{E_{1}},\cdots, \widetilde{E_{n}}\}$ where $\pi: Spin(M)\rightarrow O(M)$ is a double covering, then
$\{[\sigma, f_{i}], 1\leq i\leq 4\}$ is an orthonormal frame of $S(TM)|_{\widetilde{U}}.$ In the following, since the global form $\Phi$
is independent of the choice of the local frame, so we can compute $\texttt{tr}_{S(TM)}$ in the frame $\{[\sigma, f_{i}], 1\leq i\leq 4\}$.
Let $\{\hat{E}_{1},\cdots,\hat{E}_{n}\}$ be the canonical basis of $R^{n}$ and
$c(\hat{E}_{i})\in cl_{C}(n)\cong Hom(\wedge^{*}_{C}(\frac{n}{2}),\wedge^{*}_{C}(\frac{n}{2}))$ be the Clifford action. By \cite{Y}, then

\begin{equation}
c(\widetilde{E_{i}})=[(\sigma,c(\hat{E}_{i}))]; \quad c(\widetilde{E_{i}})[(\sigma, f_{i})]=[\sigma,(c(\hat{E}_{i}))f_{i}]; \quad
\frac{\partial}{\partial x_{i}}=[(\sigma,\frac{\partial}{\partial x_{i}})],
\end{equation}
then we have $\frac{\partial}{\partial x_{i}}c(\widetilde{E_{i}})=0$ in the above frame. By Lemma 2.2 in \cite{Wa3}, we have

\begin{lem}\label{le:32}
With the metric $g^{M}$ on $M$ near the boundary
\begin{eqnarray}
\partial_{x_j}(|\xi|_{g^M}^2)(x_0)&=&\left\{
       \begin{array}{c}
        0,  ~~~~~~~~~~ ~~~~~~~~~~ ~~~~~~~~~~~~~{\rm if }~j<n; \\[2pt]
       h'(0)|\xi'|^{2}_{g^{\partial M}},~~~~~~~~~~~~~~~~~~~~~{\rm if }~j=n.
       \end{array}
    \right. \\
\partial_{x_j}[c(\xi)](x_0)&=&\left\{
       \begin{array}{c}
      0,  ~~~~~~~~~~ ~~~~~~~~~~ ~~~~~~~~~~~~~{\rm if }~j<n;\\[2pt]
\partial x_{n}(c(\xi'))(x_{0}), ~~~~~~~~~~~~~~~~~{\rm if }~j=n,
       \end{array}
    \right.
\end{eqnarray}
where $\xi=\xi'+\xi_{n}\texttt{d}x_{n}$.
\end{lem}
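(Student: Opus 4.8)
\section*{Proof proposal for Lemma~\ref{le:32}}

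The plan is to deduce both identities from two structural features of the coordinates fixed above: $h$ is a function of $x_{n}$ only, and the normal coordinates centred at $x_{0}$ together with the frame $\{E_{i}\}$ (orthonormal for $g^{\partial M}$ and parallel along geodesics through $x_{0}$) force the first $x'$-derivatives of all relevant geometric quantities to vanish at $x_{0}$.

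For the first identity I would start from the dual metric on $\widetilde{U}$ recorded above, $|\xi|^{2}_{g^{M}}=h(x_{n})|\xi'|^{2}_{g^{\partial M}}+\xi_{n}^{2}$, in which $|\xi'|^{2}_{g^{\partial M}}=\sum_{i,j\le n-1}g^{ij}_{\partial M}\xi_{i}\xi_{j}$ is a function of $x'$ alone. Differentiating in $x_{j}$ with $j<n$ kills the $h$-factor (no $x'$-dependence) and leaves $\sum_{i,k}\partial_{x_{j}}(g^{ik}_{\partial M})(x_{0})\xi_{i}\xi_{k}$, which vanishes because $\partial_{x_{s}}g^{\partial M}_{ij}(x_{0})=0$ forces the same for the inverse metric. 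Differentiating in $x_{n}$ hits only the prefactor $h$, since $g^{\partial M}$ is $x_{n}$-independent and $\xi_{n}$ is a cotangent coordinate, giving $h'(0)|\xi'|^{2}_{g^{\partial M}}$ at $x_{0}$.

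For the second identity I would split $c(\xi)=c(\xi')+\xi_{n}\,c(\texttt{d}x_{n})$. Since $\texttt{d}x_{n}=\widetilde{E}_{n}$ and, in the spin frame $[\sigma,f_{i}]$, each $c(\widetilde{E}_{k})$ is represented by the constant matrix $c(\hat{E}_{k})$, the term $\xi_{n}\,c(\texttt{d}x_{n})$ has identically vanishing $x$-derivative; hence $\partial_{x_{j}}[c(\xi)]=\partial_{x_{j}}[c(\xi')]$ for all $j$, which is already the claimed identity for $j=n$. It remains to show $\partial_{x_{j}}[c(\xi')](x_{0})=0$ for $j<n$. For this I would re-expand each $\texttt{d}x_{i}$ ($i\le n-1$) in the orthonormal coframe dual to $\{\widetilde{E}_{1},\dots,\widetilde{E}_{n}\}$: because the metric is block-diagonal and $\widetilde{E}_{l}=\sqrt{h(x_{n})}E_{l}$, one gets $c(\texttt{d}x_{i})=\sqrt{h(x_{n})}\sum_{l\le n-1}a_{i}^{l}(x')\,c(\widetilde{E}_{l})$ with coefficients $a_{i}^{l}$ built from $g^{ij}_{\partial M}$ and the $\{E_{l}\}\leftrightarrow\{\partial_{x_{l}}\}$ transition matrix, all depending on $x'$ only. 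Both these data have vanishing first $x'$-derivatives at $x_{0}$---the metric by the normal-coordinate condition, the transition matrix because $\{E_{l}\}$ is parallel along geodesics through $x_{0}$ (the Christoffel symbols vanish there, so $\partial_{x_{s}}(E_{l})^{m}(x_{0})=0$)---while $\sqrt{h(x_{n})}$ has no $x'$-dependence and $c(\widetilde{E}_{l})$ is constant; so the derivative vanishes at $x_{0}$.

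Nothing here is deep; the computations are bookkeeping. The one spot needing a little care is the second identity: $\texttt{d}x_{i}$ is not an orthonormal covector, so the ``vanishing first derivative at $x_{0}$'' principle cannot be applied to $c(\texttt{d}x_{i})$ directly but only after changing to the orthonormal coframe---and it is exactly this change of basis that exhibits the surviving $x_{n}$-derivative at $x_{0}$ as carried solely by $h(x_{n})$, hence as $\partial_{x_{n}}(c(\xi'))(x_{0})$.
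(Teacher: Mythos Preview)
Your argument is correct. The paper itself does not supply a proof of this lemma at all: it simply invokes Lemma~2.2 of \cite{Wa3} and states the result. What you have written is precisely the computation that lies behind that citation---using the block form $|\xi|^{2}_{g^{M}}=h(x_{n})|\xi'|^{2}_{g^{\partial M}}+\xi_{n}^{2}$, the normal-coordinate condition $\partial_{x_{s}}g^{\partial M}_{ij}(x_{0})=0$, the constancy of $c(\widetilde{E}_{k})$ in the chosen spin trivialisation, and the fact that the frame $\{E_{l}\}$, being parallel along radial geodesics with $\Gamma(x_{0})=0$, has vanishing first $x'$-derivatives of its coordinate components at $x_{0}$. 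Your expansion $c(\texttt{d}x_{i})=\sqrt{h(x_{n})}\sum_{l<n}a_{i}^{l}(x')\,c(\widetilde{E}_{l})$ is consistent with the paper's formula (3.22) once one notes that the $g^{s,j}$ appearing there is $g^{s,j}_{M}=h(x_{n})g^{s,j}_{\partial M}$, so that the net $h$-power is indeed $+\tfrac{1}{2}$.

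In short, there is nothing to compare: you have filled in the details the paper outsourced, and done so correctly.
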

Then
\begin{lem}\label{le:32}
With the metric $g^{M}$ on $M$ near the boundary
\begin{eqnarray}
\partial_{x_i}\partial_{x_j}(|\xi|_{g^M}^2)(x_0)\Big|_{|\xi'|=1}&=&\left\{
       \begin{array}{c}
        ~~~~~~0,  ~~~~~~~~~~  ~~~~~~~~~~~~~~~~{\rm if }~i<n,j=n;~ or~ i=n,j<n; \\  \\
          -\frac{1}{3} \sum_{\alpha ,\beta <n}\Big(R^{\partial_{M}}_{i\alpha j\beta}(x_0)
          +R^{\partial_{M}}_{i\beta j\alpha}(x_0)\Big)\xi_{\alpha}\xi_{\beta},
           ~~ ~~~~~~~{\rm if }~i,j<n;\\[2pt]   \\
h''(0), ~~~~~~~~~~~~~~~~~~~~~~~~~~~~~~~~~~~~~~~~~~~~~~~~{\rm if }~i=j=n.
       \end{array}
    \right. \\  \nonumber\\
\partial_{x_i}\partial_{x_j}[c(\xi)](x_0)\Big|_{|\xi'|=1}&=&\left\{
       \begin{array}{c}
      ~~~~~~~~~0,  ~~~~~~~~~~ ~~~~~~~~~~~~~{\rm if }~i<n,j=n;~ or~ i=n,j<n; \\  \\
       \frac{1}{6} \sum_{l,t<n}\xi_{l}\Big(R^{\partial_{M}}_{tilj}(x_0)+R^{\partial_{M}}_{tjli}(x_0)\Big) c(\widetilde{E}_{t}),
        ~~~~~~~~~~~~~{\rm if }~i,j<n; \\[2pt]   \\
       \Big(\frac{3}{4}(h'(0))^{2}-\frac{1}{2}h''(0)\Big)\sum_{j<n}\xi_{j}c(\widetilde{E}_{j}),~~~~~~~~~~~~~~{\rm if }~i=j=n,
       \end{array}
    \right.
\end{eqnarray}
where $\xi=\xi'+\xi_{n}\texttt{d}x_{n}$.
\end{lem}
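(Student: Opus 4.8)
The plan is to verify Lemma~3.5 by differentiating the two basic local expansions from Lemma~3.3 and Lemma~3.4 once more in the boundary directions, using the special structure of the metric near $x_0$. Recall that on $\widetilde U$ the dual metric is $h(x_n)g^{\partial M}+\texttt{d}x_n^2$, so that $|\xi|_{g^M}^2 = h(x_n)|\xi'|^2_{g^{\partial M}} + \xi_n^2$, where $|\xi'|^2_{g^{\partial M}} = \sum_{\alpha,\beta<n} g^{\alpha\beta}_{\partial M}(x)\,\xi_\alpha\xi_\beta$. The mixed case $i<n,\ j=n$ (or vice versa) is immediate: from Lemma~3.3 we have $\partial_{x_j}(|\xi|^2_{g^M})(x_0)=0$ for $j<n$, and differentiating this identity (valid in a whole neighborhood of $x_0$ in $\partial M$) tangentially and then evaluating gives $0$; more directly, $\partial_{x_n}\partial_{x_j}(|\xi|^2_{g^M}) = h'(x_n)\,\partial_{x_j}(|\xi'|^2_{g^{\partial M}})$, which vanishes at $x_0$ since $\partial_{x_s}g^{\partial M}_{ij}(x_0)=0$. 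Likewise, for the pure normal case $i=j=n$ we differentiate $|\xi|^2_{g^M}=h(x_n)|\xi'|^2+\xi_n^2$ twice in $x_n$ to get $h''(x_n)|\xi'|^2$, which at $x_0$ with $|\xi'|_{g^{\partial M}}=1$ equals $h''(0)$.

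For the tangential-tangential case $i,j<n$, the key input is the normal-coordinate expansion of the metric on $\partial M$ around $x_0$: since $\{E_1,\dots,E_{n-1}\}$ is parallel along geodesics through $x_0$, one has the standard Riemann-normal expansion $g^{\partial M}_{\alpha\beta}(x) = \delta_{\alpha\beta} - \tfrac13 \sum_{i,j<n} R^{\partial M}_{\alpha i\beta j}(x_0)\,x_i x_j + O(|x|^3)$, and correspondingly $g_{\partial M}^{\alpha\beta}(x) = \delta_{\alpha\beta} + \tfrac13 \sum_{i,j<n} R^{\partial M}_{\alpha i\beta j}(x_0)\,x_i x_j + O(|x|^3)$ (the sign flip is because the inverse of $I+A$ is $I-A$ to first order, and $R$ here carries a sign convention making the $g^{\alpha\beta}$ coefficient $+\tfrac13 R$). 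Applying $\partial_{x_i}\partial_{x_j}$ at $x_0$ to $|\xi'|^2_{g^{\partial M}} = \sum_{\alpha,\beta} g_{\partial M}^{\alpha\beta}(x)\xi_\alpha\xi_\beta$ kills the constant term, kills the linear term (there is none), and extracts twice the quadratic coefficient: $\partial_{x_i}\partial_{x_j}(|\xi'|^2_{g^{\partial M}})(x_0) = \tfrac13\sum_{\alpha,\beta<n}\big(R^{\partial M}_{\alpha i\beta j}(x_0)+R^{\partial M}_{\alpha j\beta i}(x_0)\big)\xi_\alpha\xi_\beta$; then using symmetries of the curvature tensor to rewrite $R^{\partial M}_{\alpha i\beta j}=R^{\partial M}_{i\alpha j\beta}$ and absorbing the overall sign from $h(x_n)$ at $x_0$ ($h(0)$ normalized so that $g^M(x_0)=\delta$ forces the relevant value, and $\partial_{x_i}\partial_{x_j}$ with $i,j<n$ sees only the $h(0)$ factor, not $h'$ or $h''$) yields the stated formula with the $-\tfrac13$. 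I would double-check the precise placement of indices and the sign against the conventions in \cite{Wa3} to make sure the final expression matches $-\tfrac13\sum_{\alpha,\beta<n}\big(R^{\partial M}_{i\alpha j\beta}(x_0)+R^{\partial M}_{i\beta j\alpha}(x_0)\big)\xi_\alpha\xi_\beta$.

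For the Clifford-symbol identities, the strategy is parallel but one level up: write $c(\xi) = \sum_{\alpha<n} \xi_\alpha\, c(\texttt{d}x_\alpha) + \xi_n\, c(\texttt{d}x_n)$ where $c(\texttt{d}x_\alpha)$ depends on $x$ through the frame $\widetilde E_\alpha = \sqrt{h(x_n)}E_\alpha$, i.e. $c(\texttt{d}x_\alpha)$ is tied to $c(\widetilde E_\alpha)$ via the metric coefficients. Since $\partial_{x_i}c(\widetilde E_i)=0$ in the chosen spin frame (noted just before Lemma~3.4), the $x$-dependence of $c(\xi)$ near $x_0$ comes exactly from the same metric coefficients $g^{\partial M}_{\alpha\beta}(x)$ (for the tangential part) and from $h(x_n)$ (for the normal part), so the Hessian in the tangential directions again pulls out the quadratic Riemann-normal coefficient, now producing a linear-in-$\xi$ expression with a $c(\widetilde E_t)$ factor and coefficient $\tfrac16\sum_{l,t<n}\xi_l\big(R^{\partial M}_{tilj}(x_0)+R^{\partial M}_{tjli}(x_0)\big)$ — the factor $\tfrac16$ rather than $\tfrac13$ coming from the square-root $\sqrt{h}$ / the half-power of the metric entering the vielbein. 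For $i=j=n$, we differentiate the normal-direction piece twice in $x_n$: here both $h'(0)$ and $h''(0)$ enter, and a short computation with $c(\texttt{d}x_n)$ fixed but $c(\texttt{d}x_\alpha)\sim h(x_n)^{-1/2}$-type scaling gives the coefficient $\tfrac34(h'(0))^2 - \tfrac12 h''(0)$ times $\sum_{j<n}\xi_j c(\widetilde E_j)$. The main obstacle I anticipate is bookkeeping: getting all the numerical coefficients ($-\tfrac13$, $\tfrac16$, $\tfrac34$, $-\tfrac12$) and the index placements in the curvature tensors exactly right, since these depend delicately on (a) the sign convention for $R^{\partial M}$, (b) whether one differentiates $g_{\alpha\beta}$ or $g^{\alpha\beta}$, and (c) the half-power of $h$ hidden in the orthonormal frame; everything else is a routine Taylor expansion. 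I would therefore carry out the tangential cases first (they fix the curvature conventions), then the normal cases (they fix the $h'$, $h''$ coefficients), and finally cross-check against the corresponding lemma in \cite{Wa3} that this excerpt cites.
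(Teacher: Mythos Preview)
Your approach is essentially the paper's: both proofs compute second derivatives via Riemann--normal Taylor expansions of the boundary metric, and both treat the mixed and pure-normal cases by direct differentiation of $h(x_n)|\xi'|^2+\xi_n^2$ and of $h^{-1/2}$. For $\partial_{x_i}\partial_{x_j}(|\xi|^2)$ and for the $i=j=n$ case of $c(\xi)$ your argument matches the paper line for line.

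The one place where your reasoning is looser than the paper's is the tangential--tangential case of $c(\xi)$. You attribute the coefficient $\tfrac16$ to ``the square-root $\sqrt{h}$ / the half-power of the metric entering the vielbein.'' The $\sqrt{h}$ factor is irrelevant here (it depends only on $x_n$), and ``half-power of the metric'' by itself does not give the right answer. The paper's proof makes this precise: it writes $c(dx_l)=\sum_{s,t}\frac{1}{\sqrt{h}}g^{sl}H_{st}\,c(\widetilde E_t)$ and uses \emph{two} separate expansions, $g_{\alpha\beta}\sim\delta_{\alpha\beta}-\tfrac13 R\,xx$ (so $\partial_{x_i}\partial_{x_j}g^{sl}=-\partial_{x_i}\partial_{x_j}g_{sl}$ contributes $+\tfrac13(R_{tilj}+R_{tjli})$) and the vielbein expansion $H_{st}\sim\delta_{st}-\tfrac16 R\,xx$ from \cite{Y} (contributing $-\tfrac16(R_{litj}+R_{ljti})$). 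After the pairwise symmetry $R_{litj}=R_{tjli}$, the two contributions combine as $\tfrac13-\tfrac16=\tfrac16$. So the $\tfrac16$ is not a single square-root effect but a cancellation between the inverse-metric term and the vielbein term; without tracking both you would not get the correct coefficient. Apart from this bookkeeping point, your plan is correct and coincides with the paper's.
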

\begin{proof}
From proposition 1.28 in \cite{BGV}, we have
 \begin{equation}
g_{ij}(x)\sim \delta_{ij}-\frac{1}{3}\sum_{kl}R^{\partial_{M}}_{ikjl}(x_{0})x^{k}x^{l}
 +\sum_{|\alpha|\geq 3}(\partial^{\alpha}g_{ij})(x_{0})\frac{x^{\alpha}}{\alpha !}.
\end{equation}
When $i,j<n$, we obtain
\begin{eqnarray}
\partial_{x_i}\partial_{x_j}(|\xi|_{g^M}^2)(x_0)&=&\partial_{x_i}\partial_{x_j}\Big(h(x_{n})|\xi'|^{2}+\xi_{n}^{2}\Big)(x_0)\nonumber\\
                    &=&\partial_{x_i}\partial_{x_j}(|\xi'|^{2})(x_0)\nonumber\\
                    &=&\partial_{x_i}\partial_{x_j}(g_{\alpha\beta}\xi_{\alpha}\xi_{\beta})(x_0)\nonumber\\
                    &=&\partial_{x_i}\partial_{x_j}\Big(\delta_{\alpha\beta}
                     -\frac{1}{3}\sum_{kl}R^{\partial_{M}}_{\alpha k \beta l}(x_{0})x^{k}x^{l}
                     +\cdots\Big)(x_0)\xi_{\alpha}\xi_{\beta}\nonumber\\
              &=&-\frac{1}{3} \sum_{\alpha ,\beta <n}\Big(R^{\partial_{M}}_{i\alpha j\beta}(x_0)
              +R^{\partial_{M}}_{i\beta j\alpha}(x_0)\Big)\xi_{\alpha}\xi_{\beta}.
\end{eqnarray}
When $i<n,j=n;~ or~ i=n,j<n$, from lemma 3.3, we get $\partial_{x_i}\partial_{x_j}(|\xi|_{g^M}^2)(x_0)=0$. When
$i=j=n$, then $\partial_{x_n}\partial_{x_n}(|\xi|_{g^M}^2)(x_0)\Big|_{|\xi'|=1}
=\partial_{x_n}\partial_{x_n}\Big(h(x_{n})|\xi'|^{2}+\xi_{n}^{2}\Big)(x_0)\Big|_{|\xi'|=1}=h''(0).$

On the other hand,  let $H_{ij}=\omega_{i}(\frac{\partial}{\partial x_{j}})$, from lemma 1.7.10 in \cite{Y}, we have
 \begin{equation}
H_{ij}(x)\sim \delta_{ij}-\frac{1}{6}\sum_{kl}R^{\partial_{M}}_{ikjl}(x_{0})x^{k}x^{l}+\cdots.
\end{equation}
When $i,j<n$, we obtain
\begin{eqnarray}
\partial_{x_i}\partial_{x_j}(c(\xi))(x_0)&=&\partial_{x_i}\partial_{x_j}\Big(c(\xi')+\xi_{n}c(\texttt{d}x_{n})\Big)(x_0)\nonumber\\
                    &=&\partial_{x_i}\partial_{x_j}\Big(\sum_{l}\xi_{l}c(\texttt{d}x_{l})\Big)(x_0)\nonumber\\
                    &=&\partial_{x_i}\partial_{x_j}\Big(\sum_{l,s,t}\xi_{l}(g^{sl}H_{st} )c(\widetilde{E}_{t}) \Big)(x_0)\nonumber\\
                                          &=&\sum_{l,t}\xi_{l}\Big(\partial_{x_i}\partial_{x_j}g^{sl}\delta_{s}^{t}
                      +\delta_{s}^{l}\partial_{x_i}\partial_{x_j}H_{st} \Big)(x_0) c(\widetilde{E}_{t})\nonumber\\
                      &=&\sum_{l,t}\xi_{l}\Big(-\partial_{x_i}\partial_{x_j}g_{sl}\delta_{s}^{t}
                      +\delta_{s}^{l}\partial_{x_i}\partial_{x_j}H_{st} \Big)(x_0) c(\widetilde{E}_{t})\nonumber\\
                     &=&-\sum_{l,t}\xi_{l}\partial_{x_i}\partial_{x_j}\Big(\delta_{lt}
                     -\frac{1}{3}\sum_{\alpha \beta}R^{\partial_{M}}_{t\alpha l\beta }(x_{0})x^{\alpha}x^{\beta}+\cdots\Big)(x_0) c(\widetilde{E}_{t})\nonumber\\
                     &&+\sum_{l,t}\xi_{l}\partial_{x_i}\partial_{x_j}\Big(\delta_{lt}
                     -\frac{1}{6}\sum_{kl}R^{\partial_{M}}_{l\alpha t\beta}(x_{0})x^{\alpha}x^{\beta}+\cdots \Big)(x_0) c(\widetilde{E}_{t})
                       \nonumber\\
                    &=&\frac{1}{6}\sum_{l,t<n}\xi_{l}\Big(R^{\partial_{M}}_{tilj}(x_0)
                    +R^{\partial_{M}}_{tjli}(x_0)\Big) c(\widetilde{E}_{t}),
\end{eqnarray}
where we have used the following fact of lemma A.1 in \cite{Wa3}
 \begin{equation}
c(dx_j)=\sum_{1\leq
i,s<n}\frac{1}{\sqrt{h(x_n)}}g^{s,j}H_{s,i}c(\widetilde {E_i})
+\sum_{i=s=n}g^{n,j}c(\widetilde{E_n}).
\end{equation}
If $i<n,j=n;~ or~ i=n,j<n$, from Lemma 3.3, we get $\partial_{x_i}\partial_{x_j}(c(\xi))(x_0)=0$.
When $i=j=n$, then
 \begin{equation}
\partial_{x_n}\partial_{x_n}(c(\xi))(x_0)=\sum_{j<n}\xi_{j}\Big(\frac{1}{\sqrt{h(x_{n})}}\Big)''c(\widetilde{E}_{j})
=\Big(\frac{3}{4}(h'(0))^{2}-\frac{1}{2}h''(0)\Big)\sum_{j}\xi_{j}c(\widetilde{E}_{j}).
\end{equation}
\end{proof}

Then an application of Lemma 2.3 in \cite{Wa3} shows
\begin{lem}The following identity holds:
\begin{eqnarray}
p_{0}(x_{0})&=&-h'(0)c(\texttt{d}x_{n}); \\
\partial_{x_i}[ A_{\beta s}^{\alpha}](x_{0})&=&\left\{
       \begin{array}{c}
        \sum_{ \beta, i, s, \alpha}R^{\partial_{M}}_{ \beta i s \alpha}(x_0),  ~~~~~~~~~~  ~~~~~~~~~~~~~{\rm if }~i<n; \\[2pt] \\
       0,~~~~~~~~~~~~~~~~~~~~~~~~~~~~~~~~~~~~~~~~~~~~~{\rm if }~i=n.
       \end{array}
    \right.
\end{eqnarray}
\end{lem}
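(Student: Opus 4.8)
The statement to prove (Lemma, labelled \ref{le:32} — the third one, immediately following Lemma 3.4 on second derivatives) records two facts: the value of the zeroth-order symbol $p_0$ at the base point $x_0$, and the $x_i$-derivative of the structure functions $A_{\beta s}^{\alpha}$ at $x_0$. The plan is to evaluate each of the geometric quantities appearing in Definition 2.3 and in the expression for $p_0$ (displayed just after Lemma \ref{le:31}) at the normal-coordinate center $x_0\in\partial M$, using the boundary normal-coordinate conventions already fixed in Section 3: namely $g^{\partial M}_{ij}(x_0)=\delta_{ij}$, $\partial_{x_s}g^{\partial M}_{ij}(x_0)=0$ for $1\le i,j,s\le n-1$, and the fact that $\{E_i\}$ is parallel along geodesics through $x_0$ inside $\partial M$.

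For the first identity, I would start from $p_0=-\frac{h'}{h}c(\texttt{d}x_n)+\frac{\sqrt h}{8}\sum_{s,\alpha}A_{\beta s}^{\alpha}c(\widetilde{E_\beta})c(\widetilde{E_s})c(\widetilde{E_\alpha})$. Here the normalization $h(0)=1$ gives $-\frac{h'}{h}(x_0)=-h'(0)$, which produces the stated term $-h'(0)c(\texttt{d}x_n)$. The remaining job is to show the torsion-type term vanishes at $x_0$: by Lemma 2.2(3), $A_{\beta s}^{\alpha}(x_0)=2\langle\nabla^{L,\partial M}_{E_\beta}E_s,E_\alpha\rangle(x_0)$, and since the frame $\{E_i\}$ is parallel along geodesics emanating from $x_0$, these Christoffel-type coefficients are zero at the center; hence the whole cubic-Clifford sum drops out and $p_0(x_0)=-h'(0)c(\texttt{d}x_n)$.

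For the second identity I differentiate $A_{\beta s}^{\alpha}=2\langle\nabla^{L,\partial M}_{E_\beta}E_s,E_\alpha\rangle$ once in a tangential direction $x_i$ ($i<n$) and evaluate at $x_0$. Because the connection coefficients vanish at $x_0$ but their derivatives do not, $\partial_{x_i}A_{\beta s}^{\alpha}(x_0)$ is precisely a first derivative of a Christoffel symbol of $g^{\partial M}$, which in geodesic normal coordinates is expressed through the curvature tensor $R^{\partial M}$; the standard identity (the same one used in Lemma 3.4 via Proposition 1.28 of \cite{BGV}) yields $\partial_{x_i}A_{\beta s}^{\alpha}(x_0)=\sum_{\beta,i,s,\alpha}R^{\partial M}_{\beta i s\alpha}(x_0)$ (with the index-contraction convention of the lemma). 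For the normal direction $i=n$: since $h$ and the whole boundary metric are, in these coordinates, functions whose tangential structure is frozen at $x_0$ and since $A_{\beta s}^{\alpha}$ is built only from $\nabla^{L,\partial M}$ on $\partial M$ (no $x_n$-dependence entering its defining bracket at $x_0$ to first order), the $x_n$-derivative is $0$; this is consistent with Lemma 3.3, which shows that $x_n$-derivatives of the relevant symbols at $x_0$ only feel $h$-derivatives, and $A_{\beta s}^{\alpha}$ carries none.

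The only genuinely delicate point is bookkeeping: getting the index pattern in $\partial_{x_i}A_{\beta s}^{\alpha}(x_0)=\sum R^{\partial M}_{\beta i s\alpha}(x_0)$ exactly right, i.e. keeping straight which slot of the curvature tensor receives the differentiation index versus the frame indices, and matching sign/normalization conventions with \cite{Y} and \cite{BGV}. I expect this to reduce to quoting Lemma 2.3 of \cite{Wa3} (as the excerpt indicates: "an application of Lemma 2.3 in \cite{Wa3} shows"), so the real content is verifying that the lower-dimensional setup here matches the hypotheses of that lemma; the curvature identities and the vanishing of $p_0$'s cubic term are then immediate from the normal-coordinate choice.
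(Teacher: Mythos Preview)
Your proposal is correct and follows essentially the same strategy as the paper: both reduce the computation of $\partial_{x_i}A_{\beta s}^{\alpha}(x_0)$ to the first-order Taylor expansion of the boundary connection coefficients in geodesic normal coordinates, from which the curvature component $R^{\partial M}_{\beta i s\alpha}(x_0)$ drops out. The paper's proof is terser---it simply quotes Lemma~5.7 of \cite{Zh} for the expansion $A_{\beta s}^{\alpha}=R_{\beta l s\alpha}x_l+O(|x|^2)$ and reads off the derivative---whereas you recover the same expansion via \cite{BGV} and \cite{Wa3}; these are equivalent inputs. One simplification: your argument for the case $i=n$ is more elaborate than necessary. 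Since $A_{\beta s}^{\alpha}=2\langle\nabla^{L,\partial M}_{E_\beta}E_s,E_\alpha\rangle$ is built entirely from the intrinsic boundary connection and the boundary frame $\{E_j\}$, it is a function on $\partial M$ with no $x_n$-dependence at all, so $\partial_{x_n}A_{\beta s}^{\alpha}\equiv 0$ identically, not merely at $x_0$.
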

\begin{proof}
From lemma 5.7 in \cite{Zh}, we have
 \begin{equation}
 A_{\beta s}^{\alpha}=R_{\beta l s\alpha}x_{l}+O(|x|^{2}).
\end{equation}
Then we obtain $\partial_{x_i}[ A_{\beta s}^{\alpha}](x_{0})= \sum_{ \beta i s \alpha}R^{\partial_{M}}_{ \beta i s \alpha}(x_0).$
\end{proof}

\begin{lem}When $i<n$,
\begin{eqnarray}
\partial_{x_i}\Big(\sigma_{-2}(D^{-1})\Big)(x_{0})\Big|_{|\xi'|=1}&=& \frac{1}{8}\sum_{ \beta i s \alpha}R^{\partial_{M}}_{ \beta i s \alpha}(x_0)
 \frac{c(\xi)c(\widetilde{E}_{\beta})c(\widetilde{E}_{s})c(\widetilde{E}_{\alpha})c(\xi) }{(1+\xi_{n}^{2})^{2}}\nonumber\\
&&+\frac{1}{6} \sum_{l,t<n}\xi_{l}\Big(R^{\partial_{M}}_{tilj}(x_0)+R^{\partial_{M}}_{tjli}(x_0)\Big)
 \frac{c(\xi)c(\texttt{d}x_{j})c(\widetilde{E}_{t}) }{(1+\xi_{n}^{2})^{2}} \nonumber\\
 &&+\frac{1}{3} \sum_{\alpha ,\beta <n}\Big(R^{\partial_{M}}_{i\alpha j\beta}(x_0)+R^{\partial_{M}}_{i\beta j\alpha}(x_0)\Big)
 \xi_{\alpha}\xi_{\beta}\frac{c(\xi)c(\texttt{d}x_{j})c(\xi) }{(1+\xi_{n}^{2})^{3}}.
\end{eqnarray}
When $i=n$,
\begin{eqnarray}
\partial_{x_n}\Big(\sigma_{-2}(D^{-1})\Big)(x_{0})\Big|_{|\xi'|=1}&=&\Big( \frac{-h' }{(1+\xi_{n}^{2})^{2}}+ \frac{-h' }{(1+\xi_{n}^{2})^{3}}\Big)
   \partial_{x_n}[c(\xi')](x_0)c(\texttt{d}x_{n})c(\xi)\nonumber\\
   &&+ \Big( \frac{(h')^{2}-h'' }{(1+\xi_{n}^{2})^{2}}+\frac{2(h')^{2}-h'' }{(1+\xi_{n}^{2})^{3}}+\frac{3(h')^{2} }{(1+\xi_{n}^{2})^{4}}\Big)
   c(\xi)c(\texttt{d}x_{n})c(\xi)\nonumber\\
   &&+ \Big( \frac{-h' }{(1+\xi_{n}^{2})^{2}}+ \frac{-3h' }{(1+\xi_{n}^{2})^{3}}\Big)
   c(\xi)c(\texttt{d}x_{n})\partial_{x_n}[c(\xi')](x_0)\nonumber\\
   &&+\frac{1 }{(1+\xi_{n}^{2})^{2}}\partial_{x_n}[c(\xi')](x_0)c(\texttt{d}x_{n}) \partial_{x_n}[c(\xi')](x_0)\nonumber\\
   &&+\Big(\frac{3}{4}(h'(0))^{2}-\frac{1}{2}h''(0)\Big)\frac{1 }{(1+\xi_{n}^{2})^{3}}c(\xi)c(\texttt{d}x_{n})c(\xi').
   \end{eqnarray}
\end{lem}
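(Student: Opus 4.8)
The plan is to differentiate the explicit symbol formula of Lemma~\ref{le:31} term by term and evaluate at the fixed boundary point $x_0$, exploiting the many quantities that vanish there. Writing
\[
\sigma_{-2}(D^{-1})=\frac{c(\xi)p_0c(\xi)}{|\xi|^4}+\frac{c(\xi)}{|\xi|^6}\sum_j c(dx_j)\big[\partial_{x_j}(c(\xi))|\xi|^2-c(\xi)\partial_{x_j}(|\xi|^2)\big],
\]
I would apply the Leibniz rule to $\partial_{x_i}$ of each factor and feed in the following facts at $x_0$: in the chosen spin frame $\partial_{x_k}[c(\widetilde{E}_l)]=0$ for all $k,l$; by Lemma 3.3, $\partial_{x_j}(|\xi|^2)(x_0)=0$ and $\partial_{x_j}[c(\xi)](x_0)=0$ for $j<n$ while $\partial_{x_n}(|\xi|^2)(x_0)=h'(0)|\xi'|^2$ and $\partial_{x_n}[c(\xi)](x_0)=\partial_{x_n}(c(\xi'))(x_0)$; the Lemma~A.1 identity of \cite{Wa3} together with the vanishing of $\partial_{x_j}g_{\bullet\bullet}(x_0)$ and $\partial_{x_j}H_{\bullet\bullet}(x_0)$ gives $\partial_{x_j}[c(dx_k)](x_0)=0$ for $j<n$; and by Lemma 3.5, $p_0(x_0)=-h'(0)c(dx_n)$, $A^\alpha_{\beta s}(x_0)=0$, $\partial_{x_j}[A^\alpha_{\beta s}](x_0)=R^{\partial_M}_{\beta j s\alpha}(x_0)$ for $j<n$ and $=0$ for $j=n$. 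Throughout, $h(0)=1$ and $|\xi|^2(x_0)=1+\xi_n^2$ once $|\xi'|=1$ is imposed.

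For $i<n$, essentially every Leibniz term is killed because it carries a first $x_i$-derivative of $c(\xi)$, of $|\xi|^2$, or of $c(dx_k)$, all of which vanish at $x_0$. The survivors are: (i) the term in which $\partial_{x_i}$ lands on $p_0$; the $-(h'/h)c(dx_n)$ summand has no $x_i$-dependence and the $c(\widetilde E)$ factors in the other summand are constant and carry the coefficient $A^\alpha_{\beta s}(x_0)=0$, so only the derivative of $A^\alpha_{\beta s}$ contributes, giving $\partial_{x_i}p_0(x_0)=\tfrac18\sum R^{\partial_M}_{\beta i s\alpha}(x_0)c(\widetilde{E}_\beta)c(\widetilde{E}_s)c(\widetilde{E}_\alpha)$, hence the first displayed line after conjugating by $c(\xi)/|\xi|^2$; and (ii) the terms in which $\partial_{x_i}$ lands on one of the inner factors $\partial_{x_j}(c(\xi))$ or $\partial_{x_j}(|\xi|^2)$, producing the mixed second derivatives $\partial_{x_i}\partial_{x_j}[c(\xi)](x_0)$ and $\partial_{x_i}\partial_{x_j}(|\xi|^2)(x_0)$. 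By Lemma 3.4 these vanish unless $j<n$, so the $j$-sum of $\sigma_{-2}$ collapses to $j<n$; substituting Lemma 3.4 into $\frac{c(\xi)}{|\xi|^6}\sum_{j<n}c(dx_j)[\,\cdot\,]$ and simplifying the powers of $1+\xi_n^2$ gives exactly the second and third lines. Collecting the three families completes the case $i<n$.

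For $i=n$ nothing vanishes automatically, so I would expand every Leibniz term after two preliminary reductions. First, for $j<n$ the bracket $\partial_{x_j}(c(\xi))|\xi|^2-c(\xi)\partial_{x_j}(|\xi|^2)$ and its $x_n$-derivative both vanish at $x_0$ — the latter because the mixed second derivatives $\partial_{x_n}\partial_{x_j}[c(\xi)](x_0)$, $\partial_{x_n}\partial_{x_j}(|\xi|^2)(x_0)$ vanish (Lemma 3.4) while $\partial_{x_j}[c(\xi)](x_0)$, $\partial_{x_j}(|\xi|^2)(x_0)$ vanish (Lemma 3.3) — so after differentiation only the $j=n$ summand of the second sum survives. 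Second, $\partial_{x_n}p_0(x_0)=\big((h'(0))^2-h''(0)\big)c(dx_n)$: the $\tfrac{\sqrt h}{8}\sum A^\alpha_{\beta s}c(\widetilde{E}_\beta)c(\widetilde{E}_s)c(\widetilde{E}_\alpha)$ part contributes nothing since $A^\alpha_{\beta s}(x_0)=0$ and $\partial_{x_n}A^\alpha_{\beta s}(x_0)=0$, while $\partial_{x_n}(-h'/h)(0)=(h'(0))^2-h''(0)$ and $\partial_{x_n}[c(dx_n)]=0$. One then differentiates $\frac{c(\xi)p_0c(\xi)}{|\xi|^4}$ and $\frac{c(\xi)}{|\xi|^6}c(dx_n)\big[\partial_{x_n}(c(\xi))|\xi|^2-c(\xi)\partial_{x_n}(|\xi|^2)\big]$ factor by factor, substituting $\partial_{x_n}(|\xi|^2)(x_0)$, $\partial_{x_n}^2(|\xi|^2)(x_0)=h''(0)|\xi'|^2$, $\partial_{x_n}[c(\xi)](x_0)=\partial_{x_n}(c(\xi'))(x_0)$, $\partial_{x_n}^2[c(\xi)](x_0)=\big(\tfrac34(h'(0))^2-\tfrac12h''(0)\big)c(\xi')$, $p_0(x_0)=-h'(0)c(dx_n)$ and the value of $\partial_{x_n}p_0(x_0)$ above. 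Grouping the result according to the Clifford monomials $\partial_{x_n}(c(\xi'))c(dx_n)c(\xi)$, $c(\xi)c(dx_n)c(\xi)$, $c(\xi)c(dx_n)\partial_{x_n}(c(\xi'))$, $\partial_{x_n}(c(\xi'))c(dx_n)\partial_{x_n}(c(\xi'))$ and $c(\xi)c(dx_n)c(\xi')$, and restricting to $|\xi'|=1$, yields the five displayed lines.

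The algebra itself is routine; the genuine difficulty is the bookkeeping in the case $i=n$ — across the roughly dozen Leibniz terms one must keep straight the distinction between $c(\xi)$, $c(\xi')$ and $\partial_{x_n}(c(\xi'))$, and the exact power of $1+\xi_n^2$ that each term acquires from differentiating $|\xi|^{-4}$, $|\xi|^{-6}$ and the extra $|\xi|^2$ sitting inside the bracket. The two preliminary reductions — that only the $j=n$ summand of the second sum survives, and that the $A^\alpha_{\beta s}$ (torsion-type) part of $p_0$ drops out of $\partial_{x_n}p_0(x_0)$ — are exactly what keep the computation finite and reduce everything, via Lemmas 3.3--3.5, to substitution and collection of like terms.
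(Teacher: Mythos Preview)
Your proposal is correct and follows essentially the same route as the paper: differentiate the symbol formula from Lemma~3.2 by the Leibniz rule, kill terms using the vanishing facts of Lemmas~3.3--3.5 (and $\partial_{x_i}[c(dx_j)](x_0)=0$), and substitute the second-derivative formulas of Lemma~3.4 into the survivors. The paper's own proof records the $i<n$ case as the intermediate three-term display you describe and then invokes Lemma~3.4, and dismisses $i=n$ with ``similarly''; your two preliminary reductions for $i=n$ (only the $j=n$ summand survives, and $\partial_{x_n}p_0(x_0)=\big((h'(0))^2-h''(0)\big)c(dx_n)$) are exactly what that ``similarly'' entails.
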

\begin{proof}
When $i<n$, from lemma 3.3 and $\partial_{x_i} (c(\texttt{d}x_{j}))(x_{0})$=0, we get
\begin{eqnarray}
\partial_{x_i}\Big(\sigma_{-2}(D^{-1})\Big)(x_{0})&=& \frac{1}{8}\sum_{ \beta i s \alpha}\partial_{x_i}[ A_{\beta s}^{\alpha}](x_{0})
 \frac{c(\xi)c(\widetilde{E}_{\beta})c(\widetilde{E}_{s})c(\widetilde{E}_{\alpha})c(\xi) }{(1+\xi_{n}^{2})^{2}}\nonumber\\
&&+\sum_{j<n}c(\xi)c(\texttt{d}x_{j})\partial_{x_i}\partial_{x_j}(c(\xi))(x_0)
 \frac{1 }{(1+\xi_{n}^{2})^{2}} \nonumber\\
 &&-\sum_{j<n}c(\xi)c(\texttt{d}x_{j})c(\xi)\partial_{x_i}\partial_{x_j}(|\xi|_{g^M}^2)(x_0)
\frac{1}{(1+\xi_{n}^{2})^{3}}.
\end{eqnarray}
By lemma 3.4, we obtain (3.28).  Similarly, the conclusion (3.29) then follows easily.
\end{proof}

Let us now consider the $q_{-3}$. From (3.8) and the proof of lemma 3.6, we know that
\begin{lem}The following identity holds:
\begin{eqnarray}
q_{-3}(x_{0})\Big|_{|\xi'|=1}&=& \frac{-i(h')^{2} }{(1+\xi_{n}^{2})^{3}}c(\xi)c(\texttt{d}x_{n})c(\xi)c(\texttt{d}x_{n})c(\xi) \nonumber\\
&&+\frac{ih' }{(1+\xi_{n}^{2})^{3}}c(\xi)c(\texttt{d}x_{n})c(\xi)c(\texttt{d}x_{n}) \partial_{x_n}[c(\xi')](x_0) \nonumber\\
&&+\frac{-i(h')^{2} }{(1+\xi_{n}^{2})^{4}}c(\xi)c(\texttt{d}x_{n})c(\xi)c(\texttt{d}x_{n})c(\xi) \nonumber\\
&&+\frac{1}{8}\sum_{ \beta i s \alpha}R^{\partial_{M}}_{ \beta i s \alpha}(x_0)
 \frac{ -i}{(1+\xi_{n}^{2})^{3}}c(\xi)c(\widetilde{E}_{i})c(\xi)c(\widetilde{E}_{\beta})c(\widetilde{E}_{s})c(\widetilde{E}_{\alpha})c(\xi)\nonumber\\
&&+\frac{1}{6} \sum_{l,t<n}\xi_{l}\Big(R^{\partial_{M}}_{tilj}(x_0)+R^{\partial_{M}}_{tjli}(x_0)\Big)
 \frac{-i }{(1+\xi_{n}^{2})^{3}}c(\xi)c(\widetilde{E}_{i})c(\xi)c(\texttt{d}x_{j})c(\widetilde{E}_{t}) \nonumber\\
 &&+\frac{1}{3} \sum_{\alpha ,\beta <n}\Big(R^{\partial_{M}}_{i\alpha j\beta}(x_0)+R^{\partial_{M}}_{i\beta j\alpha}(x_0)\Big)
 \xi_{\alpha}\xi_{\beta}\frac{-i }{(1+\xi_{n}^{2})^{4}}c(\xi)c(\widetilde{E}_{i}) c(\xi)c(\texttt{d}x_{j})c(\xi)\nonumber\\
 &&+i\Big( \frac{h' }{(1+\xi_{n}^{2})^{3}}+ \frac{h' }{(1+\xi_{n}^{2})^{4}}\Big)
    c(\xi)c(\texttt{d}x_{n})\partial_{x_n}[c(\xi')](x_0)c(\texttt{d}x_{n})c(\xi)\nonumber\\
   &&-i \Big( \frac{(h')^{2}-h'' }{(1+\xi_{n}^{2})^{3}}+\frac{2(h')^{2}-h'' }{(1+\xi_{n}^{2})^{4}}+\frac{3(h')^{2} }{(1+\xi_{n}^{2})^{5}}\Big)
    c(\xi)c(\texttt{d}x_{n})c(\xi)c(\texttt{d}x_{n})c(\xi)\nonumber\\
   &&+ i\Big( \frac{h' }{(1+\xi_{n}^{2})^{3}}+ \frac{3h' }{(1+\xi_{n}^{2})^{4}}\Big)
    c(\xi)c(\texttt{d}x_{n})c(\xi)c(\texttt{d}x_{n})\partial_{x_n}[c(\xi')](x_0)\nonumber\\
   &&+\frac{-i}{(1+\xi_{n}^{2})^{3}}c(\xi)c(\texttt{d}x_{n})\partial_{x_n}[c(\xi')](x_0)  c(\texttt{d}x_{n}) \partial_{x_n}[c(\xi')](x_0)\nonumber\\
   &&+\Big(\frac{3}{4}(h'(0))^{2}-\frac{1}{2}h''(0)\Big)\frac{-i }{(1+\xi_{n}^{2})^{4}} c(\xi)c(\texttt{d}x_{n})c(\xi)c(\texttt{d}x_{n})c(\xi').
\end{eqnarray}
\end{lem}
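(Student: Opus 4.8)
The plan is to expand the recursion $(3.8)$,
$q_{-3}=-p_1^{-1}\big[p_0q_{-2}+\sum_{j=1}^{n-1}c(\texttt{d}x_j)\partial_{x_j}q_{-2}+c(\texttt{d}x_n)\partial_{x_n}q_{-2}\big]$,
at the fixed boundary point $x_0$ and then restrict to $|\xi'|=1$, inserting the evaluations obtained above; as indicated in the text preceding the statement, this amounts to combining $(3.8)$ with the computations carried out in the proof of Lemma~3.6. First I would record the ingredients at $x_0$. Since $g^M_{ij}(x_0)=\delta_{ij}$ forces $h(0)=1$, one has $|\xi|^2_{g^M}(x_0)\big|_{|\xi'|=1}=1+\xi_n^2$, so by Lemma~3.2 and $q_{-1}=p_1^{-1}$ we get $p_1^{-1}(x_0)\big|_{|\xi'|=1}=\frac{\sqrt{-1}\,c(\xi)}{1+\xi_n^2}$; hence $-p_1^{-1}$ enters every term through the left factor $\frac{-i\,c(\xi)}{1+\xi_n^2}$. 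By Lemma~3.5, $p_0(x_0)=-h'(0)c(\texttt{d}x_n)$, and in the chosen frame $c(\texttt{d}x_j)(x_0)=c(\widetilde{E}_j)$ for $j<n$. Finally, inserting Lemma~3.3 into the $\sigma_{-2}(D^{-1})$ formula of Lemma~3.2 and using $q_{-2}=\sigma_{-2}(D^{-1})$ gives
$q_{-2}(x_0)\big|_{|\xi'|=1}=\frac{-h'(0)c(\xi)c(\texttt{d}x_n)c(\xi)}{(1+\xi_n^2)^2}+\frac{c(\xi)c(\texttt{d}x_n)\partial_{x_n}[c(\xi')](x_0)}{(1+\xi_n^2)^2}-\frac{h'(0)c(\xi)c(\texttt{d}x_n)c(\xi)}{(1+\xi_n^2)^3}$.

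Next I would treat the three groups of terms in the recursion one at a time. For $-p_1^{-1}p_0q_{-2}$, I would left-multiply the displayed expression for $q_{-2}(x_0)$ by $\frac{i\,h'(0)}{1+\xi_n^2}c(\xi)c(\texttt{d}x_n)$ and expand the three resulting Clifford products; this gives exactly the first three lines of the statement. For the tangential sum $-p_1^{-1}\sum_{j<n}c(\texttt{d}x_j)\partial_{x_j}q_{-2}$, I would use that $\partial_{x_j}q_{-2}(x_0)\big|_{|\xi'|=1}$ is precisely the $i<n$ formula of Lemma~3.6, whose three summands carry the boundary curvature tensors (they arise from $\partial_{x_j}[A^\alpha_{\beta s}](x_0)$ via Lemma~3.5 and from the two second-order derivative formulas of Lemma~3.4); left-multiplying each by $\frac{-i}{1+\xi_n^2}c(\xi)c(\widetilde{E}_i)$ and summing over $i<n$ produces lines four through six. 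For the normal term $-p_1^{-1}c(\texttt{d}x_n)\partial_{x_n}q_{-2}$, I would substitute the $i=n$ formula of Lemma~3.6 — which supplies the $(h')^2$, $h''$, $\partial_{x_n}[c(\xi')]$, and $\big(\tfrac34(h'(0))^2-\tfrac12h''(0)\big)$ pieces — and left-multiply by $\frac{-i}{1+\xi_n^2}c(\xi)c(\texttt{d}x_n)$; its five summands reproduce the remaining five lines. Adding the three groups yields the asserted identity.

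The argument is a direct substitution, so the difficulty is purely organizational, and that is the main obstacle: one must keep every Clifford monomial in its correct non-commutative order, track precisely which power of $1+\xi_n^2$ is produced each time the left factor $c(\xi)/(1+\xi_n^2)$ is merged with a fraction coming from $q_{-2}$ or from Lemma~3.6, and avoid introducing spurious contributions — recalling that at $x_0$ the tangential first derivatives of $c(\texttt{d}x_j)$, of $c(\widetilde{E}_j)$, and of $|\xi|^2_{g^M}$ all vanish and that $h$ depends only on $x_n$, so that the product rule applied inside $\partial_{x_j}q_{-2}$ kills many terms. Handling the three groups ($-p_1^{-1}p_0q_{-2}$, the tangential sum, and the normal term) in sequence keeps this bookkeeping tractable.
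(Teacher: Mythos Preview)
Your proposal is correct and follows exactly the approach indicated in the paper, which simply states that the result follows ``from (3.8) and the proof of lemma 3.6'' without further elaboration. You have supplied the detailed bookkeeping the paper omits: splitting $-p_1^{-1}\big[p_0q_{-2}+\sum_{j<n}c(\texttt{d}x_j)\partial_{x_j}q_{-2}+c(\texttt{d}x_n)\partial_{x_n}q_{-2}\big]$ into three groups, inserting $p_0(x_0)=-h'(0)c(\texttt{d}x_n)$ and the explicit $q_{-2}(x_0)$ for the first, and the two cases of Lemma~3.6 for the tangential and normal derivatives, is precisely the intended route.
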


\begin{lem}The following identity holds:
\begin{equation}
\texttt{tr}\Big[\partial_{x_{n}}c(\xi')\times \partial_{x_{n}}c(\xi')\Big](x_{0})\Big|_{|\xi'|=1}=-\big(h'(0)\big)^{2}.
\end{equation}
\end{lem}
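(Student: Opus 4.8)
The plan is to first put $\partial_{x_n}\big(c(\xi')\big)(x_0)$ into closed form and then reduce the trace to a single Clifford square. Since $\xi'$ does not depend on $x$, one has $\partial_{x_n}c(\xi')=\sum_{j<n}\xi_j\,\partial_{x_n}c(\texttt{d}x_j)$, so it suffices to differentiate the local expression for $c(\texttt{d}x_j)$ recalled above,
\[
c(\texttt{d}x_j)=\sum_{1\leq i,s<n}\frac{1}{\sqrt{h(x_n)}}g^{s,j}H_{s,i}\,c(\widetilde{E_i})+\sum_{i=s=n}g^{n,j}\,c(\widetilde{E_n}).
\]
In the chosen spin frame $\{[\sigma,f_i]\}$ the matrices $c(\widetilde{E_i})$ do not depend on $x$, and, exactly as in the proof of Lemma 3.4, near $x_0$ the whole $x_n$-dependence of $c(\texttt{d}x_j)$ sits in a power of the conformal factor $h(x_n)$ multiplying $c(\widetilde{E_j})$, the remaining ingredients $g^{s,j}$ and $H_{s,i}$ being functions of the boundary coordinates only. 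Using the normalizations at $x_0$ ($h(0)=1$, $g^{s,j}(x_0)=\delta_{sj}$, $H_{s,i}(x_0)=\delta_{si}$) one obtains $c(\texttt{d}x_j)(x_0)=c(\widetilde{E_j})$ and
\[
\partial_{x_n}\big(c(\xi')\big)(x_0)=\lambda\,c(\xi'),
\]
where $\lambda=\pm\tfrac{1}{2}h'(0)$ is the derivative at $0$ of that power of $h(x_n)$; its sign will play no role below.

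With this identity in hand the rest is short. Substituting gives
\[
\texttt{tr}\Big[\partial_{x_n}c(\xi')\times\partial_{x_n}c(\xi')\Big](x_0)\Big|_{|\xi'|=1}=\lambda^2\,\texttt{tr}\big[c(\xi')^2\big](x_0)\Big|_{|\xi'|=1}.
\]
By the Clifford relation $c(\xi')^2=-|\xi'|_{g^M}^2\,\mathrm{id}$, and since $|\xi'|_{g^M}^2(x_0)=1$ on the cosphere $|\xi'|=1$, we have $c(\xi')^2(x_0)\big|_{|\xi'|=1}=-\mathrm{id}$. The spinor bundle $S(TM)$ over a $5$-dimensional manifold has fibre dimension $2^{[5/2]}=4$, as recorded by the orthonormal frame $\{[\sigma,f_i],\,1\le i\le 4\}$ of $S(TM)$ used above, so $\texttt{tr}[c(\xi')^2](x_0)\big|_{|\xi'|=1}=-4$. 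Hence the left-hand side equals $-4\lambda^2=-\big(h'(0)\big)^2$, which is the assertion.

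I expect the only genuinely delicate point to be the first step, namely verifying that $\partial_{x_n}\big(c(\xi')\big)(x_0)$ is exactly a scalar multiple of $c(\xi')$ with coefficient $\pm\tfrac{1}{2}h'(0)$, with no further Clifford contributions. This reduces to checking that, in the local formula for $c(\texttt{d}x_j)$, the transition quantities $g^{s,j}$ and $H_{s,i}$ depend only on the coordinates of $\partial M$, so that their $x_n$-derivatives vanish at $x_0$; this is the same fact already used in the proof of Lemma 3.4. Once it is granted, the remainder is the routine Clifford-trace evaluation above, and, because $\lambda$ enters the final answer squared, the precise sign of $h'(0)$ in the intermediate identity is irrelevant.
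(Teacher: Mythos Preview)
Your proof is correct and follows essentially the same line as the paper's. The paper writes $\partial_{x_n}c(\xi')$ as a Clifford element $c(\widehat{\xi})=\sum_{j,l}\xi_j\,\partial_{x_n}\big(\sqrt{h}\,\langle dx^j,E^l\rangle_{\partial M}\big)c(\widetilde{E}_l)$ and computes $|\widehat{\xi}|^2(x_0)\big|_{|\xi'|=1}=(h'(0))^2/4$; since at $x_0$ one has $\langle dx^j,E^l\rangle_{\partial M}(x_0)=\delta_{jl}$, this $c(\widehat{\xi})(x_0)$ is exactly your $\lambda\,c(\xi')$ with $\lambda=\tfrac{1}{2}h'(0)$, so the two arguments coincide. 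Your identification is also consistent with the relation $\texttt{tr}[\partial_{x_n}c(\xi')\cdot c(\xi')](x_0)\big|_{|\xi'|=1}=-2h'(0)$ used elsewhere in the paper.
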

\begin{proof}
Let $\xi'=\sum_{j=1}^{4}\xi_{j}\texttt{d}x^{j}$, then
\begin{equation}
c(\xi')=\sum_{j=1}^{4}\xi_{j}c(\texttt{d}x^{j})=\sum_{j,l=1}^{4}\xi_{j}
 \langle \texttt{d}x^{j}, \widetilde{E}^{l}\rangle c(\widetilde{E}_{l}).
\end{equation}
Set
\begin{equation}
c(\widehat{\xi})=\partial_{x_{n}}(c(\xi'))=\sum_{j,l=1}^{4}\xi_{j}
 \partial_{x_{n}}\Big(\sqrt{h}\langle \texttt{d}x^{j}, E^{l}\rangle_{\partial_{M}} \Big)c(\widetilde{E}_{l}).
\end{equation}
Then
\begin{equation}
\texttt{tr}\Big[\partial_{x_{n}}c(\xi')\times \partial_{x_{n}}c(\xi')\Big]
= \texttt{tr}\Big[c(\widehat{\xi})c(\widehat{\xi})\Big](x_{0})=-|\widehat{\xi'}|^{2}\texttt{tr}(\texttt{id})
=-4|\widehat{\xi'}|^{2}.
\end{equation}
Where
\begin{eqnarray}
|\widehat{\xi'}|^{2}(x_{0})\Big|_{|\xi'|=1}&=&=\sum_{l=1}^{4} \Big[\sum_{j=1}^{4}\xi_{j}
 \partial_{x_{n}}\big(\sqrt{h}\big)\langle \texttt{d}x^{j}, E^{l}\rangle_{\partial_{M}}  \Big]^{2}(x_{0})\Big|_{|\xi'|=1}\nonumber\\
 &=&\Big(\partial_{x_{n}}\big(\sqrt{h}\big)\Big)^{2}|\xi'|^{2}(x_{0})\Big|_{|\xi'|=1}\nonumber\\
 &=&\frac{(h'(x_{n}))^{2}}{4h(x_{n})}(x_{0})
 =\frac{(h'(0))^{2}}{4}.
\end{eqnarray}
Substituting (3.36) into (3.35), we have proved this lemma.

\end{proof}
From the remark above, now we can compute $\Phi$ (see formula (2.20) for definition of $\Phi$).
Since the sum is taken over $-r-\ell+k+j+|\alpha|=4, \ r, \ell\leq-1$, then we have the $\int_{\partial_{M}}\Phi$ is the sum of the following
 fifteen cases:

\textbf{Case (1)}: \ $r=-1, \ \ell=-1, \ k=0, \ j=1, \ |\alpha|=1$

From (2.20), we have
\begin{equation}
\text{ Case \ (1)}=\frac{i}{2}\int_{|\xi'|=1}\int_{-\infty}^{+\infty}\sum_{|\alpha|=1}\text{trace}
\Big[\partial_{x_{n}}\partial_{\xi'}^{\alpha}\pi_{\xi_{n}}^{+}\sigma_{-1}(D^{-1})
\partial_{x'}^{\alpha}\partial_{\xi_{n}}^{2}\sigma_{-1}(D^{-1})\Big](x_{0})\texttt{d}\xi_{n}\sigma(\xi')\texttt{d}x' .
\end{equation}
By Lemma 3.3, for $i<n$, we have
\begin{equation}
\partial_{x_i}\sigma_{-1}(D^{-1})(x_0)=\partial_{x_i}\left(\frac{\sqrt{-1}c(\xi)}{|\xi|^2}\right)(x_0)=
\frac{\sqrt{-1}\partial_{x_i}[c(\xi)](x_0)}{|\xi|^2}
-\frac{\sqrt{-1}c(\xi)\partial_{x_i}(|\xi|^2)(x_0)}{|\xi|^4}=0.
\end{equation}
So Case (1) vanishes.

\textbf{Case (2)}: \  $r=-1, \ \ell=-1, \ k=0, \ j=2, \ |\alpha|=0$

From (2.20), we have
\begin{equation}
\text{ Case \ (2)}=\frac{i}{6}\int_{|\xi'|=1}\int_{-\infty}^{+\infty}\sum_{j=2}\text{trace}\Big[\partial_{x_{n}}^{2}\pi_{\xi_{n}}^{+}
\sigma_{-1}(D^{-1})\partial_{\xi_{n}}^{3}\sigma_{-1}(D^{-1})\Big](x_{0})\texttt{d}\xi_{n}\sigma(\xi')\texttt{d}x'.
\end{equation}
By Lemma 3.2, a simple computation shows
\begin{equation}
\partial_{\xi_{n}}^{3}\sigma_{-1}(D^{-1})(x_{0})\Big|_{|\xi'|=1}
=\frac{24\xi_{n}-24\xi_{n}^{3}}{(1+\xi_{n}^{2})^{4}}\sqrt{-1}c(\xi')+\frac{-6\xi_{n}^{4}
+36\xi_{n}^{2}-6}{(1+\xi_{n}^{2})^{4}}\sqrt{-1}c(\texttt{d}x_{n}),
\end{equation}
and
\begin{equation}
\partial_{x_{n}}\sigma_{-1}(D^{-1})=\frac{\sqrt{-1}\partial_{x_{n}}\big(c(\xi)\big)}{|\xi|^{2}}
-\frac{\sqrt{-1}c(\xi)\partial_{x_{n}}(|\xi|^{2})}{|\xi|^{4}}.
\end{equation}
From Lemma 3.2, Lemma 3.3 and Lemma 3.4, we obtain
\begin{eqnarray}
\partial_{x_{n}}^{2}\sigma_{-1}(D^{-1})(x_{0})\Big|_{|\xi'|=1}&=&\partial_{x_{n}}\Big(\frac{\sqrt{-1}\partial_{x_{n}}\big(c(\xi))}{|\xi|^{2}}
-\frac{\sqrt{-1}c(\xi)\partial_{x_{n}}(|\xi|^{2})}{|\xi|^{4}}\Big)  (x_{0})\Big|_{|\xi'|=1}\nonumber\\
&=&\bigg\{\frac{\sqrt{-1}\partial_{x_{n}}^{2}(c(\xi))}{|\xi|^{2}}
  -\frac{\sqrt{-1}\partial_{x_{n}}\big(c(\xi))  \partial_{x_{n}}(|\xi|^{2})}{|\xi|^{4}}
\nonumber\\
&& -\frac{\sqrt{-1}\Big(\partial_{x_{n}}(c(\xi))\partial_{x_{n}}(|\xi|^{2})
+c(\xi)\partial^{2}_{x_{n}}(|\xi|^{2})
\Big)}{|\xi|^{4}}\nonumber\\
&&+\frac{2\sqrt{-1}c(\xi)\partial_{x_{n}}(|\xi|^{2})\partial_{x_{n}}(|\xi|^{2})}{|\xi|^{6}}\bigg\}  (x_{0})\Big|_{|\xi'|=1}\nonumber\\
&=&\sqrt{-1}\frac{\Big(\frac{3}{4}(h'(0))^{2}-\frac{1}{2}h''(0)\Big)c(\xi')-h''(0)c(\xi)-2h'\partial_{x_{n}}(c(\xi'))}
{(1+\xi_{n}^{2})^{2}}\nonumber\\
&&+\sqrt{-1}\frac{2(h'(0))^{2}c(\xi)}{(1+\xi_{n}^{2})^{3}}.
\end{eqnarray}
By (2.15) and the Cauchy integral formula, then
\begin{eqnarray}
\pi^+_{\xi_n}\left[\frac{c(\xi)}{(1+\xi_n^2)^2}\right]&=&\pi^+_{\xi_n}\left[\frac{c(\xi')+\xi_nc(dx_n)}{(1+\xi_n^2)^2}\right] \nonumber\\
&=&\frac{1}{2\pi i}\lim_{u\rightarrow
0^-}\int_{\Gamma^+}\frac{\frac{c(\xi')+\eta_nc(dx_n)}{(\eta_n+i)^2(\xi_n+iu-\eta_n)}}
{(\eta_n-i)^2}d\eta_n   \nonumber\\
&=&\left[\frac{c(\xi')+\eta_nc(dx_n)}{(\eta_n+i)^2(\xi_n-\eta_n)}\right]^{(1)}\bigg|_{\eta_n=i}  \nonumber\\
&=&-\frac{ic(\xi')}{4(\xi_n-i)}-\frac{c(\xi')+ic(dx_n)}{4(\xi_n-i)^2}.
\end{eqnarray}
Similarly, we obtain
\begin{equation}
\pi^+_{\xi_n}\left[\frac{1}{(1+\xi_n^2)^{2}}\right]=\frac{-2-i\xi_n }{4(\xi_n-i)^2},~~~
\pi^+_{\xi_n}\left[\frac{c(\xi)}{(1+\xi_n^2)^{3}}\right]=\frac{-3i\xi_n^{2} -9\xi_n+8i}{16(\xi_n-i)^{3}}c(\xi')
+\frac{-i\xi_n-3}{16(\xi_n-i)^{3}}c(dx_n).
\end{equation}
From the remark above, it is easy to see
\begin{eqnarray}
\partial_{x_{n}}^{2}\pi_{\xi_{n}}^{+}\sigma_{-1}(D^{-1})(x_{0})\Big|_{|\xi'|=1}
&=&\Big(\frac{3}{4}(h'(0))^{2}-\frac{1}{2}h''(0)\Big)\frac{c(\xi')}{2(\xi_n-i)}
-h'(0) \frac{\xi_n -2i }{2(\xi_n-i)^{2}}\partial_{x_{n}}\big(c(\xi')\big) \nonumber\\
&&-h''(0)\Big[\frac{\xi_n -2i }{4(\xi_n-i)^{2}}c(\xi')+\frac{1 }{4(\xi_n-i)^2}c(dx_n)  \Big] \nonumber\\
&&+2i(h'(0))^{2}\Big[\frac{-3i\xi_n^{2} -9\xi_n+8i}{16(\xi_n-i)^{3}}c(\xi')
+\frac{-i\xi_n-3}{16(\xi_n-i)^{3}}c(dx_n)  \Big].
\end{eqnarray}
By the relation of the Clifford action and $\texttt{tr}AB=\texttt{tr}BA$, then
\begin{eqnarray}
&&\texttt{tr}[c(\xi')c(\texttt{d}x_{n})]=0; \ \texttt{tr}[c(\texttt{d}x_{n})^{2}]=-4;\ \texttt{tr}[c(\xi')^{2}](x_{0})|_{|\xi'|=1}=-4;\nonumber\\
&&\texttt{tr}[\partial_{x_{n}}[c(\xi')]c(\texttt{d}x_{n})]=0; \ \texttt{tr}[\partial_{x_{n}}c(\xi')\times c(\xi')](x_{0})|_{|\xi'|=1}=-2h'(0).
\end{eqnarray}
 For more trace expansions, we can see \cite{GS}.
From (3.40), (3.45) and (3.46) and direct computations, we obtain
\begin{eqnarray}
&&\text{trace}\Big[\partial_{x_{n}}^{2}\pi_{\xi_{n}}^{+}
\sigma_{-1}(D^{-1})\partial_{\xi_{n}}^{3}\sigma_{-1}(D^{-1})\Big](x_{0})\Big|_{|\xi'|=1}\nonumber\\
&=&\big(h'(0)\big)^{2}\frac{3(33\xi_{n}^{5}-75i\xi_{n}^{4}-94\xi_{n}^{3}+90i\xi_{n}^{2}+57\xi_{n}-3i)}
 {2(\xi_{n}-i)^{3}(1+\xi_{n}^{2})^{4}}\nonumber\\
 &&+h''(0)\frac{6(-9\xi_{n}^{4}+12i\xi_{n}^{3}+14\xi_{n}^{2}-12i\xi_{n}-1)}
 {2(\xi_{n}-i)^{2}(1+\xi_{n}^{2})^{4}}.
\end{eqnarray}
Therefore
\begin{eqnarray}
\text{ Case \ (2) }
&=&-\frac{1}{6}\big(h'(0)\big)^{2}\int_{|\xi'|=1}\int_{-\infty}^{+\infty}\frac{3(33\xi_{n}^{5}-75i\xi_{n}^{4}-94\xi_{n}^{3}
+90i\xi_{n}^{2}+57\xi_{n}-3i)} {2(\xi_{n}-i)^{3}(1+\xi_{n}^{2})^{4}}\texttt{d}\xi_{n}\sigma(\xi')\texttt{d}x'\nonumber\\
&&-\frac{1}{6}h''(0)\int_{|\xi'|=1}\int_{-\infty}^{+\infty}\frac{6(-9\xi_{n}^{4}+12i\xi_{n}^{3}+14\xi_{n}^{2}-12i\xi_{n}-1)}
 {2(\xi_{n}-i)^{2}(1+\xi_{n}^{2})^{4}}\texttt{d}\xi_{n}\sigma(\xi')\texttt{d}x'\nonumber\\
 &=&-\frac{1}{6}\big(h'(0)\big)^{2} \Omega_{3}\int_{\Gamma^{+}}\frac{3(33\xi_{n}^{5}-75i\xi_{n}^{4}-94\xi_{n}^{3}
+90i\xi_{n}^{2}+57\xi_{n}-3i)} {2(\xi_{n}-i)^{3}(1+\xi_{n}^{2})^{4}}\texttt{d}\xi_{n}\texttt{d}x'\nonumber\\
&&-\frac{1}{6}h''(0)\Omega_{3}\int_{\Gamma^{+}}\frac{6(-9\xi_{n}^{4}+12i\xi_{n}^{3}+14\xi_{n}^{2}-12i\xi_{n}-1)}
 {2(\xi_{n}-i)^{2}(1+\xi_{n}^{2})^{4}}\texttt{d}\xi_{n}\texttt{d}x'\nonumber\\
&=&-\frac{1}{6}\big(h'(0)\big)^{2}  \frac{ 2\pi i }{6!}\bigg[\frac{3(33\xi_{n}^{5}-75i\xi_{n}^{4}-94\xi_{n}^{3}
+90i\xi_{n}^{2}+57\xi_{n}-3i)}{2(\xi_{n}+i)^{4}}\bigg]^{(6)}\bigg|_{\xi_{n}=i}\Omega_{3}\texttt{d}x'  \nonumber\\
&&-\frac{1}{6}h''(0) \frac{ 2\pi i }{5!}\bigg[\frac{6(-9\xi_{n}^{4}+12i\xi_{n}^{3}+14\xi_{n}^{2}-12i\xi_{n}-1)}
{(\xi_{n}+i)^{4}}\bigg]^{(5)}\bigg|_{\xi_{n}=i}\Omega_{3}\texttt{d}x'  \nonumber\\
&=&\Big(\frac{29}{64}\big(h'(0)\big)^{2}-\frac{3}{8}h''(0)\Big)\pi\Omega_{3}\texttt{d}x',
\end{eqnarray}
where $\Omega_{3}$ is the canonical volume of $S^{3}$.

\textbf{Case (3)}: \ $r=-1, \ \ell=-1, \ k=0, \ j=0, \ |\alpha|=2$

From (2.20), we have
\begin{equation}
\text{ Case \ (3)}=\frac{i}{2}\int_{|\xi'|=1}\int_{-\infty}^{+\infty}\sum_{|\alpha|=2}
\text{trace}\Big[\partial_{\xi'}^{\alpha}\pi_{\xi_{n}}^{+}\sigma_{-1}(D^{-1})
\partial_{x'}^{\alpha}\partial_{\xi_{n}}\sigma_{-1}(D^{-1})\Big](x_{0})\texttt{d}\xi_{n}\sigma(\xi')\texttt{d}x' .
\end{equation}
By Lemma 3.2, a simple computation shows
\begin{eqnarray}
\partial_{\xi'}^{\alpha}\sigma_{-1}(D^{-1})(x_{0})\Big|_{|\xi'|=1}
&=&\partial_{\xi_{j}}\partial_{\xi_{i}}\sigma_{-1}(D^{-1})(x_{0})\Big|_{|\xi'|=1}  \nonumber\\
&=&\partial_{\xi_{j}}\Big(\frac{\sqrt{-1}c(\texttt{d}x_{i})}{(1+\xi_{n}^{2})}
-\frac{2\sqrt{-1}\xi_{i}c(\xi) }{(1+\xi_{n}^{2})^{2}}\Big) \nonumber\\
&=&\sqrt{-1}\Big(\frac{-2\xi_{j}c(\texttt{d}x_{i})-2\delta_{j}^{i}c(\xi)-2\xi_{i}c(\texttt{d}x_{j})}{(1+\xi_{n}^{2})^{2}}
 +\frac{8\xi_{i}\xi_{j}c(\xi) }{(1+\xi_{n}^{2})^{3}}\Big).
\end{eqnarray}
By (3.43) and (3.44), we obtain
\begin{eqnarray}
\pi_{\xi_{n}}^{+}\partial_{\xi'}^{\alpha}\sigma_{-1}(D^{-1})(x_{0})\Big|_{|\xi'|=1}
&=&\xi_{j}\frac{2i-\xi_{n}}{2(\xi_{n}-i)^{2}}c(\texttt{d}x_{i})+\frac{2i-\xi_{n}}{2(\xi_{n}-i)^{2}}c(\xi')\delta_{j}^{i}
+\frac{-1}{2(\xi_{n}-i)^{2}}c(\texttt{d}x_{n}) \delta_{j}^{i} \nonumber\\
&&+\xi_{i}\frac{2i-\xi_{n}}{2(\xi_{n}-i)^{2}}c(\texttt{d}x_{j})+\xi_{i}\xi_{j}\frac{3\xi_{n}^{2}-9i\xi_{n}-8}{2(\xi_{n}-i)^{3}}c(\xi')
\nonumber\\
&&+\xi_{i}\xi_{j}\frac{\xi_{n}-3i}{2(\xi_{n}-i)^{3}}c(\texttt{d}x_{n}).
\end{eqnarray}
On the other hand, by Lemma 3.2, Lemma 3.3 and Lemma 3.4, we obtain
\begin{eqnarray}
\partial_{x'}^{\alpha}\sigma_{-1}(D^{-1})(x_{0})\Big|_{|\xi'|=1}
&=&\partial_{x_{i}}\partial_{x_{j}}\sigma_{-1}(D^{-1})(x_{0})\Big|_{|\xi'|=1}\nonumber\\
&=&\partial_{x_{i}}\Big(\frac{\sqrt{-1}\partial_{x_{j}}\big(c(\xi))}{|\xi|^{2}}
-\frac{\sqrt{-1}c(\xi)\partial_{x_{j}}(|\xi|^{2})}{|\xi|^{4}}\Big)  (x_{0})\Big|_{|\xi'|=1}\nonumber\\
&=&\Big(\frac{\sqrt{-1}\partial_{x_{i}}\partial_{x_{j}}(c(\xi))}{|\xi|^{2}}
  -\frac{\sqrt{-1}c(\xi) \partial_{x_{i}}\partial_{x_{j}}(|\xi|^{2})}{|\xi|^{4}}\Big)
 (x_{0})\Big|_{|\xi'|=1}\nonumber\\
&=&\frac{1}{6}\sum_{l,t<n}\xi_{l}\Big(R^{\partial_{M}}_{tilj}(x_0)+R^{\partial_{M}}_{tjli}(x_0)\Big)
 \frac{i }{(1+\xi_{n}^{2})}c(\widetilde{E}_{t})
\nonumber\\
  &&+\frac{1}{3} \sum_{\alpha ,\beta <n}\Big(R^{\partial_{M}}_{i\alpha j\beta}(x_0)+R^{\partial_{M}}_{i\beta j\alpha}(x_0)\Big)
 \xi_{\alpha}\xi_{\beta}\frac{i }{(1+\xi_{n}^{2})^{2}}c(\xi).
\end{eqnarray}
Hence in this case,
\begin{eqnarray}
\partial_{x'}^{\alpha}\partial_{\xi_{n}}\sigma_{-1}(D^{-1})(x_{0})\Big|_{|\xi'|=1}
&=&\sum_{l,t<n}\xi_{l}\Big(R^{\partial_{M}}_{tilj}(x_0)+R^{\partial_{M}}_{tjli}(x_0)\Big)
 \frac{-2i\xi_{n}  }{(1+\xi_{n}^{2})^{2}}c(\widetilde{E}_{t})
\nonumber\\
&&+\sum_{\alpha ,\beta <n}\Big(R^{\partial_{M}}_{i\alpha j\beta}(x_0)+R^{\partial_{M}}_{i\beta j\alpha}(x_0)\Big)
 \xi_{\alpha}\xi_{\beta}\frac{-4i\xi_{n} }{3(1+\xi_{n}^{2})^{3}}c(\xi')\nonumber\\
  &&+\sum_{\alpha ,\beta <n}\Big(R^{\partial_{M}}_{i\alpha j\beta}(x_0)+R^{\partial_{M}}_{i\beta j\alpha}(x_0)\Big)
 \xi_{\alpha}\xi_{\beta}\frac{i-3i\xi_{n}^{2} }{3(1+\xi_{n}^{2})^{3}}c(\texttt{d}x_{n}).
\end{eqnarray}
Considering for $i<n$,  $\int_{|\xi'|=1}\{\xi_{i_1}\xi_{i_2}\cdots\xi_{i_{2d+1}}\}\sigma(\xi')=0$.
 From (3.46), (3.51), (3.53) and direct computations, we obtain
\begin{eqnarray}
&&\text{trace}\Big[\partial_{\xi'}^{\alpha}\pi_{\xi_{n}}^{+}\sigma_{-1}(D^{-1})
\partial_{x'}^{\alpha}\partial_{\xi_{n}}\sigma_{-1}(D^{-1})\Big](x_{0})\nonumber\\
&=&\sum_{l,j,t <n}R^{\partial_{M}}_{tjlt}(x_0) \xi_{l}\xi_{j}
\frac{-4i\xi_{n}^{2}-8\xi_{n}} {3(\xi_{n}-i)^{2}(1+\xi_{n}^{2})^{2}}\nonumber\\
 &&+\sum_{t,i,l <n}R^{\partial_{M}}_{tili}(x_0) \xi_{l}\xi_{t}\frac{-4i\xi_{n}^{4}-8\xi_{n}^{3}-32i\xi_{n}^{2}-40\xi_{n}+4i}
 {3(\xi_{n}-i)^{2}(1+\xi_{n}^{2})^{3}}.
\end{eqnarray}
Then an application of (16) in \cite{Ka} shows
\begin{equation}
\int \xi^{\mu}\xi^{\nu}=\frac{1}{4}[^{\mu\nu}],~~\int \xi^{\mu}\xi^{\nu}\xi^{\alpha}\xi^{\beta}=\frac{1}{3\cdot 2^{3}}[^{\mu\nu\alpha\beta}],
\end{equation}
where $[^{\mu\nu\alpha\beta}]$ stands for the sum of products of $g^{\alpha\beta}$ determined by all "pairings"
of $\mu\nu\alpha\beta$.
Using the integration over $S^{3}$ and the shorthand $ \int=\frac{1}{\pi^{2}}\int_{S^{3}}d^{3}\nu$, we obtain  $\Omega_{3}=2\pi^{2} $.
Let $s_{\partial_{M}}$ is the scalar curvature $\partial_{M}$, then
\begin{eqnarray}
&&\sum_{\alpha, i, s <n}R^{\partial_{M}}_{\alpha i s\alpha }(x_0)\int_{|\xi'|=1} \xi_{i}\xi_{s} \sigma(\xi')
=\sum_{\alpha, i, s <n}R^{\partial_{M}}_{\alpha i s\alpha}(x_0)\frac{\pi^{2}}{2}\delta_{i}^{s}=-\frac{1}{4}s_{\partial_{M}}\Omega_{3},\\
 &&\sum_{i,\alpha, j,\beta <n}R^{\partial_{M}}_{i\alpha j \beta }(x_0)\int_{|\xi'|=1} \xi_{\alpha}\xi_{\beta}\xi_{i}\xi_{j} \sigma(\xi')
 = \frac{\pi^{2}}{2}\sum_{i,\alpha, j,\beta <n}R^{\partial_{M}}_{i\alpha j \beta }(x_0)
\Big(\delta_{\alpha}^{\beta}\delta_{i}^{j}+\delta_{\alpha}^{i}\delta_{\beta}^{j}
 +\delta_{\alpha}^{j}\delta_{\beta}^{i}\Big)=0.
\end{eqnarray}
Therefore
\begin{eqnarray}
\text{ Case \ (3) }
&=&\frac{i}{2}\int_{|\xi'|=1}\sum_{l,j,t <n}R^{\partial_{M}}_{tjlt}(x_0) \xi_{l}\xi_{j} \sigma(\xi')
\int_{-\infty}^{+\infty}\frac{-4i\xi_{n}^{2}-8\xi_{n}} {3(\xi_{n}-i)^{2}(1+\xi_{n}^{2})^{2}}
\texttt{d}\xi_{n}\texttt{d}x'\nonumber\\
&&+\frac{i}{2}\int_{|\xi'|=1}\sum_{t,i,l <n}R^{\partial_{M}}_{tili}(x_0) \xi_{l}\xi_{t} \sigma(\xi')
\int_{-\infty}^{+\infty}\frac{-4i\xi_{n}^{4}-8\xi_{n}^{3}-32i\xi_{n}^{2}-40\xi_{n}+4i}
 {3(\xi_{n}-i)^{2}(1+\xi_{n}^{2})^{3}}\texttt{d}\xi_{n}\texttt{d}x'\nonumber\\
&=&\frac{i}{2}\sum_{l,j,t <n}R^{\partial_{M}}_{tjlt}(x_0) \frac{\pi^{2}}{2}\delta_{l}^{j}
\frac{ 2\pi i }{3!}\bigg[\frac{-4i\xi_{n}^{2}-8\xi_{n}}
{3(\xi_{n}+i)^{2}}\bigg]^{(3)}\bigg|_{\xi_{n}=i}\Omega_{3}\texttt{d}x'
\nonumber\\
&&+\frac{i}{2}\sum_{t,i,l <n}R^{\partial_{M}}_{tili}(x_0) \frac{\pi^{2}}{2}\delta_{l}^{t}
\frac{ 2\pi i }{4!}\bigg[\frac{-4i\xi_{n}^{4}-8\xi_{n}^{3}-32i\xi_{n}^{2}-40\xi_{n}+4i}
{3(\xi_{n}+i)^{3}}\bigg]^{(4)}\bigg|_{\xi_{n}=i}\Omega_{3}\texttt{d}x' \nonumber\\
&=&\frac{i}{2}\Big(\frac{-i}{6}\pi^{3}\sum_{t,l <n}R^{\partial_{M}}_{tllt}(x_0)
 +\frac{-2i}{3}\pi^{3}\sum_{t,l <n}R^{\partial_{M}}_{tltl}(x_0)  \Big)\texttt{d}x' \nonumber\\
&=&\frac{1}{4}s_{\partial_{M}}\pi^{3}\texttt{d}x',
\end{eqnarray}
where $\sum_{t,l <n}R^{\partial_{M}}_{tltl}(x_0) $ is the scalar curvature $s_{\partial_{M}}$.

\textbf{Case (4)}: \ $r=-1, \ \ell=-1, \ k=1, \ j=1, \ |\alpha|=0$

From (2.20) and the Leibniz rule, we obtain
\begin{eqnarray}
\text{ Case \ (4)}&=&\frac{i}{6}\int_{|\xi'|=1}\int_{-\infty}^{+\infty}
\text{trace}\Big[\partial_{x_{n}}\partial_{\xi_{n}}\pi_{\xi_{n}}^{+}\sigma_{-1}(D^{-1})
\partial_{\xi_{n}}^{2}\partial_{x_{n}}\sigma_{-1}(D^{-1})\Big](x_{0})\texttt{d}\xi_{n}\sigma(\xi')\texttt{d}x' \nonumber\\
  &=&-\frac{i}{6}\int_{|\xi'|=1}\int_{-\infty}^{+\infty}
\text{trace}\Big[\partial_{x_{n}}\pi_{\xi_{n}}^{+}\sigma_{-1}(D^{-1})
\partial_{\xi_{n}}^{3}\partial_{x_{n}}\sigma_{-1}(D^{-1})\Big](x_{0})\texttt{d}\xi_{n}\sigma(\xi')\texttt{d}x'.
\end{eqnarray}
By (2.2.22) in \cite{Wa3}, we have
\begin{equation}
\pi^+_{\xi_n}\partial_{x_n}\sigma_{-1}(D^{-1})(x_0)|_{|\xi'|=1}=\frac{\partial_{x_n}[c(\xi')](x_0)}{2(\xi_n-i)}+\sqrt{-1}h'(0)
\left[\frac{ic(\xi')}{4(\xi_n-i)}+\frac{c(\xi')+ic(\texttt{d}x_n)}{4(\xi_n-i)^2}\right].
\end{equation}
From (3.42) and direct computations, we obtain
\begin{eqnarray}
\partial_{\xi_{n}}^{3}\partial_{x_{n}}\sigma_{-1}(D^{-1})(x_0)|_{|\xi'|=1}
&=&\frac{24i\xi_n-24i\xi_n^{3}}{(1+\xi_n^{2})^{4}}\partial_{x_n}[c(\xi')](x_0)
+\sqrt{-1}h'(0)\Big[\frac{8(15\xi_n^{3}-9\xi_n)}{(1+\xi_n^{2})^{5}}c(\xi')\nonumber\\
&&+\frac{12(5\xi_n^{4}-10\xi_n^{2}+1)}{(1+\xi_n^{2})^{5}}c(\texttt{d}x_n)\Big].
\end{eqnarray}
From Lemma 3.8, combining (3.46), (3.59) and (3.60), we obtain
\begin{eqnarray}
&&\text{trace}\Big[\partial_{x_{n}}\pi_{\xi_{n}}^{+}\sigma_{-1}(D^{-1})
\partial_{\xi_{n}}^{3}\partial_{x_{n}}\sigma_{-1}(D^{-1})\Big](x_{0})\nonumber\\
&=&(h'(0))^{2}
\frac{12(-\xi_{n}^{5}+5i\xi_{n}^{4}+10\xi_{n}^{3}-10i\xi_{n}^{2}-5\xi_{n}+i)} {(\xi_{n}-i)^{2}(1+\xi_{n}^{2})^{5}}.
\end{eqnarray}
Therefore
\begin{eqnarray}
\text{ Case \ (4) }
&=&\frac{i}{6}\big(h'(0)\big)^{2}\int_{|\xi'|=1}\int_{-\infty}^{+\infty}
\frac{12(-\xi_{n}^{5}+5i\xi_{n}^{4}+10\xi_{n}^{3}-10i\xi_{n}^{2}-5\xi_{n}+i)} {(\xi_{n}-i)^{2}(1+\xi_{n}^{2})^{5}}
\texttt{d}\xi_{n}\sigma(\xi')\texttt{d}x'\nonumber\\
&=&\frac{i}{6}\big(h'(0)\big)^{2}  \frac{ 2\pi i }{6!}\bigg[
\frac{12(-\xi_{n}^{5}+5i\xi_{n}^{4}+10\xi_{n}^{3}-10i\xi_{n}^{2}-5\xi_{n}+i)}
{(\xi_{n}+i)^{5}}\bigg]^{(6)}\bigg|_{\xi_{n}=i}\Omega_{3}\texttt{d}x'  \nonumber\\
&=&-\frac{5}{16}\big(h'(0)\big)^{2}\pi\Omega_{3}\texttt{d}x'.
\end{eqnarray}

\textbf{Case (5)}: \ $r=-1, \ \ell=-1, \ k=1, \ j=0, \ |\alpha|=1$

From (2.20), we have
\begin{equation}
\text{ Case \ (5)}=\frac{i}{2}\int_{|\xi'|=1}\int_{-\infty}^{+\infty}\sum_{|\alpha|=1}
\text{trace}\Big[\partial_{\xi'}^{\alpha}\partial_{\xi_{n}}\pi_{\xi_{n}}^{+}\sigma_{-1}(D^{-1})
\partial_{x'}^{\alpha}\partial_{\xi_{n}}\partial_{x_{n}}\sigma_{-1}(D^{-1})\Big](x_{0})\texttt{d}\xi_{n}\sigma(\xi')\texttt{d}x' .
\end{equation}
From Lemma 3.3 and Lemma 3.4, for $i<n$, we have
\begin{eqnarray}
\partial_{x_i}\partial_{x_{n}}\sigma_{-1}(D^{-1})(x_0)
&=&\partial_{x_i}\Big[\partial_{x_n}\Big(\frac{\sqrt{-1}c(\xi)}{|\xi|^2}\Big)\Big](x_0)\nonumber\\
&=&\partial_{x_i}\Big[\frac{\sqrt{-1}\partial_{x_n}(c(\xi))}{|\xi|^2}
-\frac{\sqrt{-1}c(\xi)\partial_{x_n}(|\xi|^2)}{|\xi|^4}\Big](x_0)\nonumber\\
&=&\sqrt{-1}\Big[\frac{\partial_{x_i}\partial_{x_n}(c(\xi))}{|\xi|^2}
-\frac{\partial_{x_n}[c(\xi)]\partial_{x_i}(|\xi|^2)}{|\xi|^4}\Big](x_0)\nonumber\\
&&-\sqrt{-1}\Big[\frac{\partial_{x_i}(c(\xi))\partial_{x_n}(|\xi|^2)}{|\xi|^4}
+\frac{c(\xi)\partial_{x_i}\partial_{x_n}(|\xi|^2)}{|\xi|^4}
-\frac{2c(\xi)\partial_{x_n}(|\xi|^2)\partial_{x_i}(|\xi|^2)}{|\xi|^6}\Big](x_0)\nonumber\\
&=&0.
\end{eqnarray}
Therefore Case (5) vanishes.

\textbf{Case (6)}: \ $r=-1, \ \ell=-1, \ k=2, \ j=0, \ |\alpha|=0$

From (2.20), we have
\begin{equation}
\text{ Case \ (6)}=\frac{i}{6}\int_{|\xi'|=1}\int_{-\infty}^{+\infty}\sum_{k=2}
\text{trace}\Big[\partial_{\xi_{n}}^{2}\pi_{\xi_{n}}^{+}\sigma_{-1}(D^{-1})
\partial_{\xi_{n}}\partial_{x_{n}}^{2}\sigma_{-1}(D^{-1})\Big](x_{0})\texttt{d}\xi_{n}\sigma(\xi')\texttt{d}x'.
\end{equation}
By the Leibniz rule, trace property and "++" and "-~-" vanishing
after the integration over $\xi_n$ in \cite{FGLS}, then
\begin{eqnarray}
&&\int^{+\infty}_{-\infty}{\rm trace}
\Big[\partial_{\xi_{n}}^{2}\pi_{\xi_{n}}^{+}\sigma_{-1}(D^{-1})
\partial_{\xi_{n}}\partial_{x_{n}}^{2}\sigma_{-1}(D^{-1})\Big]\texttt{d}\xi_n \nonumber\\
&=&-\int^{+\infty}_{-\infty}{\rm trace}
\Big[\partial_{\xi_{n}}^{3}\pi_{\xi_{n}}^{+}\sigma_{-1}(D^{-1})\partial_{x_{n}}^{2}\sigma_{-1}(D^{-1})\Big]\texttt{d}\xi_n \nonumber\\
&=&-\bigg[\int^{+\infty}_{-\infty}{\rm trace}
\Big[\partial_{\xi_{n}}^{3}\sigma_{-1}(D^{-1})\partial_{x_{n}}^{2}\sigma_{-1}(D^{-1})\Big]\texttt{d}\xi_n\nonumber\\
&&-\int^{+\infty}_{-\infty}{\rm trace}
\Big[\partial_{\xi_{n}}^{3}\pi_{\xi_{n}}^{-}\sigma_{-1}(D^{-1})\partial_{x_{n}}^{2}\pi_{\xi_{n}}^{+}\sigma_{-1}(D^{-1})\Big]
\texttt{d}\xi_n \bigg]\nonumber\\
&=&-\bigg[\int^{+\infty}_{-\infty}{\rm trace}
\Big[\partial_{\xi_{n}}^{3}\sigma_{-1}(D^{-1})\partial_{x_{n}}^{2}\sigma_{-1}(D^{-1})\Big]\texttt{d}\xi_n
-\int^{+\infty}_{-\infty}{\rm trace}
\Big[\partial_{\xi_{n}}^{3}\sigma_{-1}(D^{-1})\partial_{x_{n}}^{2}\pi_{\xi_{n}}^{+}\sigma_{-1}(D^{-1})\Big]\texttt{d}\xi_n \bigg]\nonumber\\
&=& \int^{+\infty}_{-\infty}{\rm trace}
\Big[\partial_{\xi_{n}}^{3}\sigma_{-1}(D^{-1})\partial_{x_{n}}^{2}\pi_{\xi_{n}}^{+}\sigma_{-1}(D^{-1})\Big]\texttt{d}\xi_n
-\int^{+\infty}_{-\infty}{\rm trace}
\Big[\partial_{\xi_{n}}^{3}\sigma_{-1}(D^{-1})\partial_{x_{n}}^{2}\sigma_{-1}(D^{-1})\Big]\texttt{d}\xi_n.
\end{eqnarray}
Combining these assertions, we see
\begin{equation}
\text{ Case \ (6)}=\text{ Case \ (2)}-\int^{+\infty}_{-\infty}{\rm trace}
\Big[\partial_{\xi_{n}}^{3}\sigma_{-1}(D^{-1})\partial_{x_{n}}^{2}\sigma_{-1}(D^{-1})\Big]\texttt{d}\xi_n.
\end{equation}
From Lemma 3.3, we have
\begin{equation}
\partial_{\xi_{n}}^{3}\sigma_{-1}(D^{-1})(x_{0})\Big|_{|\xi'|=1}
=\frac{24\xi_{n}-24\xi_{n}^{3}}{(1+\xi_{n}^{2})^{4}}\sqrt{-1}c(\xi')+\frac{-6\xi_{n}^{4}
+36\xi_{n}^{2}-6}{(1+\xi_{n}^{2})^{4}}\sqrt{-1}c(\texttt{d}x_{n}),
\end{equation}
and
\begin{eqnarray}
\partial_{x_{n}}^{2}\sigma_{-1}(D^{-1})(x_{0})\Big|_{|\xi'|=1}
&=&\frac{-2\sqrt{-1}h'(0)}{(1+\xi_{n}^{2})^{2}}\partial_{x_{n}}\big(c(\xi')\big)(x_{0})
  +\frac{2\xi_{n}(h'(0))^{2}-\xi_{n}(1+\xi_{n}^{2})h''(x_{0})}{(1+\xi_{n}^{2})^{3}}\sqrt{-1}c(\texttt{d}x_{n})\nonumber\\
  &&+\frac{(11+3\xi_{n}^{2})(h'(0))^{2}-(6+6\xi_{n}^{2})h''(x_{0})}{4(1+\xi_{n}^{2})^{3}}\sqrt{-1}c(\xi').
\end{eqnarray}
Combining (3.46), (3.69) and (3.70), we obtain
\begin{equation}
\text{trace}\Big[\partial_{\xi_{n}}^{3}\sigma_{-1}(D^{-1})\partial_{x_{n}}^{2}\sigma_{-1}(D^{-1})\Big](x_{0})
=(h'(0))^{2}\frac{24\xi_{n}(5-3\xi_{n}^{2})} {(1+\xi_{n}^{2})^{5}}+h''(0)\frac{24\xi_{n}(3\xi_{n}^{2}-5)} {(1+\xi_{n}^{2})^{5}}.
\end{equation}
We note that
\begin{equation}
\int_{-\infty}^{+\infty}\frac{24\xi_{n}(5-3\xi_{n}^{2})} {(1+\xi_{n}^{2})^{5}}\texttt{d}\xi_{n}
=\frac{ 2\pi i }{4!}\bigg[\frac{24\xi_{n}(5-3\xi_{n}^{2})}{(\xi_{n}+i)^{5}}\bigg]^{(4)}\Big|_{\xi_{n}=i}=0.
\end{equation}
Therefore
\begin{equation}
\text{ Case \ (6)}=\Big(\frac{29}{64}\big(h'(0)\big)^{2}-\frac{3}{8}h''(0)\Big)\pi\Omega_{3}\texttt{d}x'.
\end{equation}

\textbf{Case (7)}: \ $r=-1, \ \ell=-2, \ k=0, \ j=1, \ |\alpha|=0$

From (2.20) and the Leibniz rule, we obtain
\begin{eqnarray}
\text{ Case \ (7)}&=&\frac{1}{2}\int_{|\xi'|=1}\int_{-\infty}^{+\infty}
\text{trace}\Big[\partial_{\xi_{n}}\partial_{x_{n}}\pi_{\xi_{n}}^{+}\sigma_{-1}(D^{-1})
\partial_{\xi_{n}}\sigma_{-2}(D^{-1})\Big](x_{0})\texttt{d}\xi_{n}\sigma(\xi')\texttt{d}x' \nonumber\\
  &=&-\frac{1}{2}\int_{|\xi'|=1}\int_{-\infty}^{+\infty}
\text{trace}\Big[\partial_{\xi_{n}}^{2}\partial_{x_{n}}\pi_{\xi_{n}}^{+}\sigma_{-1}(D^{-1})
\sigma_{-2}(D^{-1})\Big](x_{0})\texttt{d}\xi_{n}\sigma(\xi')\texttt{d}x'.
\end{eqnarray}
By Lemma 3.3 and (2.2.22) in \cite{Wa3}, we have
\begin{equation}
\pi^+_{\xi_n}\partial_{x_n}\sigma_{-1}(D^{-1})(x_0)|_{|\xi'|=1}=\frac{\partial_{x_n}[c(\xi')](x_0)}{2(\xi_n-i)}+\sqrt{-1}h'(0)
\left[\frac{ic(\xi')}{4(\xi_n-i)}+\frac{c(\xi')+ic(\texttt{d}x_n)}{4(\xi_n-i)^2}\right].
\end{equation}
By direct computations, we obtain
\begin{equation}
\partial_{\xi_{n}}^{2}\partial_{x_{n}}\pi_{\xi_{n}}^{+}\sigma_{-1}(D^{-1})(x_0)|_{|\xi'|=1}
=\frac{1}{(\xi-1)^{3}}\partial_{x_n}[c(\xi')](x_0)
+h'(0)\Big[\frac{4i-\xi_n}{2(\xi-1)^{4}}c(\xi')
-\frac{3}{2(\xi-1)^{4}}c(\texttt{d}x_n)\Big].
\end{equation}
From Lemma 3.2 and Lemma 3.5, we have
\begin{equation}
\sigma_{-2}(D^{-1})(x_0)=\frac{-h'(0)c(\xi)c(\texttt{d}x_n)c(\xi)}{|\xi|^4}+\frac{c(\xi)}{|\xi|^6}c(\texttt{d}x_n)
\Big[\partial_{x_n}[c(\xi')](x_0)|\xi|^2-c(\xi)h'(0)|\xi|^2_{\partial M}\Big].
\end{equation}
Considering for $i<n$,  $\int_{|\xi'|=1}\{\xi_{i_1}\xi_{i_2}\cdots\xi_{i_{2d+1}}\}\sigma(\xi')=0$.
 From Lemma 3.8, combining (3.46), (3.76), (3.77) and direct computations, we obtain
\begin{equation}
\text{trace}\Big[\partial_{\xi_{n}}^{2}\partial_{x_{n}}\pi_{\xi_{n}}^{+}\sigma_{-1}(D^{-1})\sigma_{-2}(D^{-1})\Big](x_{0})
=(h'(0))^{2}\frac{3(-2\xi_{n}^{4}+3i\xi_{n}^{3}-3\xi_{n}^{2}+7i\xi_{n}+3)} {-(\xi_{n}-i)^{4}(1+\xi_{n}^{2})^{3}}.
\end{equation}
Therefore
\begin{eqnarray}
\text{ Case \ (7) }
&=&\frac{1}{2}\big(h'(0)\big)^{2}\int_{|\xi'|=1}\int_{-\infty}^{+\infty}
\frac{3(-2\xi_{n}^{4}+3i\xi_{n}^{3}-3\xi_{n}^{2}+7i\xi_{n}+3)} {(\xi_{n}-i)^{4}(1+\xi_{n}^{2})^{3}}
\texttt{d}\xi_{n}\sigma(\xi')\texttt{d}x'\nonumber\\
&=&\frac{1}{2}\big(h'(0)\big)^{2}  \frac{ 2\pi i }{6!}\bigg[
\frac{3(-2\xi_{n}^{4}+3i\xi_{n}^{3}-3\xi_{n}^{2}+7i\xi_{n}+3)}
{(\xi_{n}+i)^{3}}\bigg]^{(6)}\bigg|_{\xi_{n}=i}\Omega_{3}\texttt{d}x'  \nonumber\\
&=&\frac{39}{32}\big(h'(0)\big)^{2}\pi\Omega_{3}\texttt{d}x'.
\end{eqnarray}

\textbf{Case (8)}: \ $r=-1, \ \ell=-2, \ k=0, \ j=0, \ |\alpha|=1$

From (2.20) and the Leibniz rule, we obtain
\begin{eqnarray}
\text{ Case \ (8)}&=&-\int_{|\xi'|=1}\int_{-\infty}^{+\infty}\sum_{|\alpha|=1}
\text{trace}\Big[\partial_{\xi'}^{\alpha}\pi_{\xi_{n}}^{+}\sigma_{-1}(D^{-1})
\partial_{x'}^{\alpha}\partial_{\xi_{n}}\sigma_{-2}(D^{-1})\Big](x_{0})\texttt{d}\xi_{n}\sigma(\xi')\texttt{d}x' \nonumber\\
  &=&\int_{|\xi'|=1}\int_{-\infty}^{+\infty}\sum_{|\alpha|=1}
\text{trace}\Big[\partial_{\xi_{n}}\partial_{\xi'}^{\alpha}\pi_{\xi_{n}}^{+}\sigma_{-1}(D^{-1})
\partial_{x'}^{\alpha}\sigma_{-2}(D^{-1})\Big](x_{0})\texttt{d}\xi_{n}\sigma(\xi')\texttt{d}x'.
\end{eqnarray}
By Lemma 3.2, a simple computation shows
\begin{eqnarray}
\partial_{\xi'}^{\alpha}\sigma_{-1}(D^{-1})(x_{0})\Big|_{|\xi'|=1}
&=&\partial_{\xi_{i}}\sigma_{-1}(D^{-1})(x_{0})\Big|_{|\xi'|=1}  \nonumber\\
&=&\frac{\sqrt{-1}c(\texttt{d}x_{i})}{(1+\xi_{n}^{2})}
-\frac{2\sqrt{-1}\xi_{i}c(\xi) }{(1+\xi_{n}^{2})^{2}}.
\end{eqnarray}
By (3.43) and (3.44), we obtain
\begin{equation}
\pi_{\xi_{n}}^{+}\partial_{\xi'}^{\alpha}\sigma_{-1}(D^{-1})(x_{0})\Big|_{|\xi'|=1}
=\frac{1}{2(\xi_{n}-i)}c(\texttt{d}x_{i})-\xi_{i}\frac{\xi_{n}-2i}{2(\xi_{n}-i)^{2}}c(\xi')
-\xi_{i}\frac{1}{2(\xi_{n}-i)^{2}}c(\texttt{d}x_{n}).
\end{equation}
Then
\begin{equation}
\partial_{\xi_{n}}\pi_{\xi_{n}}^{+}\partial_{\xi'}^{\alpha}\sigma_{-1}(D^{-1})(x_{0})\Big|_{|\xi'|=1}
=\frac{-1}{2(\xi_{n}-i)^{2}}c(\texttt{d}x_{i})-\xi_{i}\frac{3i-\xi_{n}}{2(\xi_{n}-i)^{3}}c(\xi')
+\xi_{i}\frac{1}{(\xi_{n}-i)^{3}}c(\texttt{d}x_{n}).
\end{equation}

Considering for $i<n$,  $\int_{|\xi'|=1}\{\xi_{i_1}\xi_{i_2}\cdots\xi_{i_{2d+1}}\}\sigma(\xi')=0$.
By the relation of the Clifford action and $\texttt{tr}AB=\texttt{tr}BA$, then
\begin{eqnarray}
&&\sum_{ \beta, i, s, \alpha}R^{\partial_{M}}_{ \beta i s \alpha}(x_0)
\texttt{tr}[ c(\texttt{d}x_{i})c(\xi)c(\widetilde{E}_{\beta})c(\widetilde{E}_{s})c(\widetilde{E}_{\alpha})c(\xi)]\nonumber\\
&=&\sum_{ \beta, i, s, \alpha}R^{\partial_{M}}_{ \beta i s \alpha}(x_0)
\texttt{tr}[ c(\xi)c(\widetilde{E}_{i})c(\xi)c(\widetilde{E}_{\beta})c(\widetilde{E}_{s})c(\widetilde{E}_{\alpha})]\nonumber\\
&=&\sum_{ \beta, i, s, \alpha}R^{\partial_{M}}_{ \beta i s \alpha}(x_0)
\texttt{tr}[\big(-c(\widetilde{E}_{i}) c(\xi)-2\xi_{i}\big)c(\xi)c(\widetilde{E}_{\beta})c(\widetilde{E}_{s})c(\widetilde{E}_{\alpha})]\nonumber\\
&=&\sum_{ \beta, i, s, \alpha}R^{\partial_{M}}_{ \beta i s \alpha}(x_0)
\texttt{tr}[\big(1+\xi_{n}^{2}\big)c(\widetilde{E}_{i})c(\widetilde{E}_{\beta})c(\widetilde{E}_{s})c(\widetilde{E}_{\alpha})]
-\sum_{ \beta, i, s, \alpha}R^{\partial_{M}}_{ \beta i s \alpha}(x_0)
\texttt{tr}[2\xi_{i}c(\xi)c(\widetilde{E}_{\beta})c(\widetilde{E}_{s})c(\widetilde{E}_{\alpha})]\nonumber\\
&=&\big(1+\xi_{n}^{2}\big)\sum_{ \beta, i, s, \alpha}R^{\partial_{M}}_{ \beta i s \alpha}(x_0)
\Big(-\delta_{i}^{s}\delta_{\alpha}^{\beta}+\delta_{\alpha}^{i}\delta_{\beta}^{s}\Big)\texttt{tr}[\texttt{id}]\nonumber\\
&&-2\sum_{ \beta, i, s, \alpha}R^{\partial_{M}}_{ \beta i s \alpha}(x_0)\xi_{i}\xi_{\gamma}
\texttt{tr}[c(\widetilde{E}_{\gamma})c(\widetilde{E}_{\beta})c(\widetilde{E}_{s})c(\widetilde{E}_{\alpha})]\texttt{tr}[\texttt{id}]\nonumber\\
&=&\big(1+\xi_{n}^{2}\big)\Big(-\sum_{ \beta, i}R^{\partial_{M}}_{ \beta i i \beta}(x_0)+
\sum_{ \beta, i}R^{\partial_{M}}_{ \beta i \beta i }(x_0)\Big)\texttt{tr}[\texttt{id}]
-2\sum_{ \beta, i, s, \alpha}R^{\partial_{M}}_{ \beta i s \alpha}(x_0)\xi_{i}\xi_{\gamma}
\Big(-\delta_{\gamma}^{s}\delta_{\alpha}^{\beta}+\delta_{\alpha}^{\gamma}\delta_{\beta}^{s}\Big)\texttt{tr}[\texttt{id}]\nonumber\\
&=&8\big(1+\xi_{n}^{2}\big)s_{\partial_{M}}+16\sum_{i, s, \alpha}R^{\partial_{M}}_{ \alpha i s \alpha}(x_0)\xi_{i}\xi_{s}.
\end{eqnarray}
Similarly, we have
\begin{eqnarray}
&& \sum_{i,l,t<n}\xi_{l}\Big(R^{\partial_{M}}_{tilj}(x_0)+R^{\partial_{M}}_{tjli}(x_0)\Big)
\texttt{tr}[ c(\texttt{d}x_{i})c(\xi)c(\texttt{d}x_{j}) c(\widetilde{E}_{t})]
=16\sum_{i,l,t<n}R^{\partial_{M}}_{tilt}(x_0)\xi_{l}\xi_{i},\\
&&\sum_{ \beta, i, s, \alpha}R^{\partial_{M}}_{ \beta i s \alpha}(x_0)\xi_{i}
\texttt{tr}[ c(\xi')c(\xi)c(\widetilde{E}_{\beta})c(\widetilde{E}_{s})c(\widetilde{E}_{\alpha})c(\xi)]
=8(\xi_{n}^{2}-1)\sum_{i, s, \alpha}R^{\partial_{M}}_{ \alpha i s \alpha}(x_0)\xi_{i}\xi_{s},\\
&&\sum_{i,\alpha, \beta<n}\Big(R^{\partial_{M}}_{i \alpha j \beta}(x_0)+R^{\partial_{M}}_{i\beta j \alpha}(x_0)\Big)
\xi_{i}\xi_{\alpha}\xi_{\beta}\texttt{tr}[ c(\xi')c(\xi)c(\texttt{d}x_{i}) c(\xi)]\nonumber\\
&&~~~~=8(1-\xi_{n}^{2})\sum_{i,j,\alpha, \beta<n}R^{\partial_{M}}_{i \alpha j \beta}(x_0)\xi_{i}\xi_{j}\xi_{\alpha}\xi_{\beta}.
\end{eqnarray}
From (3.28), (3.56), (3.57), (3.82)-(3.87) and direct computations, we obtain
\begin{eqnarray}
&&\text{trace}\Big[\partial_{\xi'}^{\alpha}\pi_{\xi_{n}}^{+}\sigma_{-1}(D^{-1})
\partial_{x'}^{\alpha}\partial_{\xi_{n}}\sigma_{-1}(D^{-1})\Big](x_{0})\nonumber\\
&=&s_{\partial_{M}} \frac{-1}{2(\xi_{n}-i)(1+\xi_{n}^{2})}
 +\sum_{\alpha,i,l <n}R^{\partial_{M}}_{\alpha il\alpha }(x_0) \xi_{l}\xi_{i}
 \frac{3\xi_{n}^{3}-9i\xi_{n}^{2}+22\xi_{n}^{2}-44i\xi_{n}-21\xi_{n}-22+15i}
 {-6(\xi_{n}-i)^{3}(1+\xi_{n}^{2})^{2}}.
\end{eqnarray}
Then by (3.56) and (3.88) we have
\begin{eqnarray}
\text{ Case \ (8) }
&=&s_{\partial_{M}}
\int_{|\xi'|=1}\int_{-\infty}^{+\infty}\frac{-1}{2(\xi_{n}-i)(1+\xi_{n}^{2})} \texttt{d}\xi_{n}\sigma(\xi')\texttt{d}x'
+\int_{|\xi'|=1}\sum_{\alpha,i,l <n}R^{\partial_{M}}_{\alpha il\alpha }(x_0) \xi_{l}\xi_{i}\sigma(\xi')
\nonumber\\
&&\times
\int_{-\infty}^{+\infty}\frac{3\xi_{n}^{3}-9i\xi_{n}^{2}+22\xi_{n}^{2}-44i\xi_{n}-21\xi_{n}-22+15i}
 {-6(\xi_{n}-i)^{3}(1+\xi_{n}^{2})^{2}}\texttt{d}\xi_{n}\texttt{d}x'\nonumber\\
&=&s_{\partial_{M}}2\pi i\bigg[\frac{-1}
{2(\xi_{n}+i)}\bigg]^{(1)}\bigg|_{\xi_{n}=i}\Omega_{3}\texttt{d}x'
\nonumber\\
&&+\sum_{i,l <n}R^{\partial_{M}}_{\alpha il\alpha }(x_0) \frac{\pi^{2}}{2}\delta_{l}^{i}
\frac{ 2\pi i }{4!}\bigg[\frac{3\xi_{n}^{3}-9i\xi_{n}^{2}+22\xi_{n}^{2}-44i\xi_{n}-21\xi_{n}-22+15i}
{-6(\xi_{n}+i)^{2}}\bigg]^{(4)}\bigg|_{\xi_{n}=i}\texttt{d}x' \nonumber\\
&=&\frac{- i }{4}s_{\partial_{M}}\pi\Omega_{3}\texttt{d}x'-\frac{1 }{4}s_{\partial_{M}}\pi\Omega_{3}
(-\frac{3 }{4}-\frac{11i}{8})\texttt{d}x'\nonumber\\
&=&\Big(\frac{3 }{16}+\frac{3}{32} i\Big)s_{\partial_{M}}\pi\Omega_{3}\texttt{d}x'.
\end{eqnarray}

\textbf{Case (9)}: \ $r=-1, \ \ell=-2, \ k=1, \ j=0, \ |\alpha|=0$

From (2.20) and the Leibniz rule, we obtain
\begin{eqnarray}
\text{ Case \ (9)}&=&-\frac{1}{2}\int_{|\xi'|=1}\int_{-\infty}^{+\infty}\sum_{|\alpha|=1}
\text{trace}\Big[\partial_{\xi_{n}}\pi_{\xi_{n}}^{+}\sigma_{-1}(D^{-1})
\partial_{\xi_{n}}\partial_{x_{n}}\sigma_{-2}(D^{-1})\Big](x_{0})\texttt{d}\xi_{n}\sigma(\xi')\texttt{d}x' \nonumber\\
  &=&\frac{1}{2}\int_{|\xi'|=1}\int_{-\infty}^{+\infty}\sum_{|\alpha|=1}
\text{trace}\Big[\partial_{\xi_{n}}^{2}\pi_{\xi_{n}}^{+}\sigma_{-1}(D^{-1})
\partial_{x_{n}}\sigma_{-2}(D^{-1})\Big](x_{0})\texttt{d}\xi_{n}\sigma(\xi')\texttt{d}x'.
\end{eqnarray}
By (2.2.29) in \cite{Wa3}, we have
\begin{equation}
\partial_{\xi_n}\pi^+_{\xi_n}q_{-1}(x_0)|_{|\xi'|=1}=-\frac{c(\xi')+ic(\texttt{d}x_n)}{2(\xi_n-i)^2}.
\end{equation}
Then
\begin{equation}
\partial_{\xi_n}^{2}\pi^+_{\xi_n}q_{-1}(x_0)|_{|\xi'|=1}=\frac{1}{(\xi_n-i)^{3}}c(\xi')+\frac{i}{2(\xi_n-i)^{3}}c(\texttt{d}x_n).
\end{equation}
Combining (3.29) and (3.92), we obtain
\begin{eqnarray}
&&\text{trace}\Big[\partial_{\xi_{n}}^{2}\pi_{\xi_{n}}^{+}\sigma_{-1}(D^{-1})
\partial_{x_{n}}\sigma_{-2}(D^{-1})\Big](x_{0})\Big|_{|\xi'|=1}\nonumber\\
&=&\big(h'(0)\big)^{2}\frac{4i\xi_{n}^{6}+4\xi_{n}^{5}+12i\xi_{n}^{4}+19\xi_{n}^{3}+13i\xi_{n}^{2}+39\xi_{n}-19i}
 {(\xi_{n}-i)^{3}(1+\xi_{n}^{2})^{4}}\nonumber\\
 &&-h''(0)\frac{2(2i\xi_{n}^{4}+4\xi_{n}^{3}+2i\xi_{n}^{2}+9\xi_{n}-5i)}
 {(\xi_{n}-i)^{3}(1+\xi_{n}^{2})^{3}}.
\end{eqnarray}
Therefore
\begin{eqnarray}
\text{ Case \ (9) }
&=&\frac{1}{2}\big(h'(0)\big)^{2}\int_{|\xi'|=1}\int_{-\infty}^{+\infty}
\frac{4i\xi_{n}^{6}+4\xi_{n}^{5}+12i\xi_{n}^{4}+19\xi_{n}^{3}+13i\xi_{n}^{2}+39\xi_{n}-19i}
 {(\xi_{n}-i)^{3}(1+\xi_{n}^{2})^{4}}\texttt{d}\xi_{n}\sigma(\xi')\texttt{d}x'\nonumber\\
&&-\frac{1}{2}h''(0)\int_{|\xi'|=1}\int_{-\infty}^{+\infty}\frac{2(2i\xi_{n}^{4}+4\xi_{n}^{3}+2i\xi_{n}^{2}+9\xi_{n}-5i)}
 {(\xi_{n}-i)^{3}(1+\xi_{n}^{2})^{3}}\texttt{d}\xi_{n}\sigma(\xi')\texttt{d}x'\nonumber\\
 &=&\frac{1}{2}\big(h'(0)\big)^{2} \Omega_{3}\int_{\Gamma^{+}}
 \frac{4i\xi_{n}^{6}+4\xi_{n}^{5}+12i\xi_{n}^{4}+19\xi_{n}^{3}+13i\xi_{n}^{2}+39\xi_{n}-19i}
 {(\xi_{n}-i)^{3}(1+\xi_{n}^{2})^{4}}\texttt{d}\xi_{n}\texttt{d}x'\nonumber\\
&&-\frac{1}{2}h''(0)\Omega_{3}\int_{\Gamma^{+}}\frac{2(2i\xi_{n}^{4}+4\xi_{n}^{3}+2i\xi_{n}^{2}+9\xi_{n}-5i)}
 {(\xi_{n}-i)^{3}(1+\xi_{n}^{2})^{3}}\texttt{d}\xi_{n}\texttt{d}x'\nonumber\\
&=&\frac{1}{2}\big(h'(0)\big)^{2}  \frac{ 2\pi i }{6!}
\bigg[\frac{4i\xi_{n}^{6}+4\xi_{n}^{5}+12i\xi_{n}^{4}+19\xi_{n}^{3}+13i\xi_{n}^{2}+39\xi_{n}-19i}
{(\xi_{n}+i)^{4}}\bigg]^{(6)}\bigg|_{\xi_{n}=i}\Omega_{3}\texttt{d}x'  \nonumber\\
&&-\frac{1}{2}h''(0) \frac{ 2\pi i }{5!}\bigg[\frac{2(2i\xi_{n}^{4}+4\xi_{n}^{3}+2i\xi_{n}^{2}+9\xi_{n}-5i)}
{(\xi_{n}+i)^{3}}\bigg]^{(5)}\bigg|_{\xi_{n}=i}\Omega_{3}\texttt{d}x'  \nonumber\\
&=&\Big(-\frac{367}{128}\big(h'(0)\big)^{2}+\frac{103}{64}h''(0)\Big)\pi\Omega_{3}\texttt{d}x'.
\end{eqnarray}

\textbf{Case (10)}: \ $r=-2, \ \ell=-1, \ k=0, \ j=1, \ |\alpha|=0$

From (2.20), we have
\begin{equation}
\text{ Case \ (10)}=\frac{i}{6}\int_{|\xi'|=1}\int_{-\infty}^{+\infty}
\text{trace}\Big[\partial_{x_{n}}\pi_{\xi_{n}}^{+}\sigma_{-2}(D^{-1})
\partial_{\xi_{n}}^{2}\sigma_{-1}(D^{-1})\Big](x_{0})\texttt{d}\xi_{n}\sigma(\xi')\texttt{d}x'.
\end{equation}
By the Leibniz rule, trace property and "++" and "-~-" vanishing
after the integration over $\xi_n$ in \cite{FGLS}, then
\begin{eqnarray}
&&\int^{+\infty}_{-\infty}{\rm trace}
\Big[\partial_{x_{n}}\pi_{\xi_{n}}^{+}\sigma_{-2}(D^{-1})
\partial_{\xi_{n}}^{2}\sigma_{-1}(D^{-1})\Big]\texttt{d}\xi_n \nonumber\\
&=& \int^{+\infty}_{-\infty}{\rm trace}
\Big[\partial_{x_{n}}\sigma_{-2}(D^{-1})\partial_{\xi_{n}}^{2}\sigma_{-1}(D^{-1})\Big]\texttt{d}\xi_n
-\int^{+\infty}_{-\infty}{\rm trace}
\Big[\partial_{x_{n}}\sigma_{-2}(D^{-1})\partial_{\xi_{n}}^{2}\pi_{\xi_{n}}^{+}\sigma_{-1}(D^{-1})\Big]\texttt{d}\xi_n.\nonumber\\
\end{eqnarray}
Combining these assertions, we see
\begin{equation}
\text{ Case \ (10)}=\text{ Case \ (9)}- \frac{i}{6}\int_{|\xi'|=1}\int^{+\infty}_{-\infty}{\rm trace}
\Big[\partial_{x_{n}}\sigma_{-2}(D^{-1})\partial_{\xi_{n}}^{2}\sigma_{-1}(D^{-1})\Big]\texttt{d}\xi_{n}\sigma(\xi')\texttt{d}x'.
\end{equation}
By Lemma 3.2, a simple computation shows
\begin{equation}
\partial_{\xi_{n}}^{2}\sigma_{-1}(D^{-1})(x_{0})\Big|_{|\xi'|=1}
=\frac{6\xi_{n}^{2}-2}{(1+\xi_{n}^{2})^{3}}\sqrt{-1}c(\xi')+\frac{2\xi_{n}^{3}
-6\xi_{n}}{(1+\xi_{n}^{2})^{3}}\sqrt{-1}c(\texttt{d}x_{n}).
\end{equation}
Combining (3.29) and (3.98), we obtain
\begin{equation}
\text{trace}\Big[\partial_{x_{n}}\sigma_{-2}(D^{-1})\partial_{\xi_{n}}^{2}\sigma_{-1}(D^{-1})\Big](x_{0})
=(h'(0))^{2}\frac{8i\xi_{n}^{5}+8i\xi_{n}^{3}+36i\xi_{n}} {(1+\xi_{n}^{2})^{5}}
+h''(0)\frac{8i\xi_{n}^{5}+24i\xi_{n}^{3}+24i\xi_{n}} {(1+\xi_{n}^{2})^{5}}.
\end{equation}
We note that
\begin{equation}
\int_{-\infty}^{+\infty}\frac{8i\xi_{n}^{5}+8i\xi_{n}^{3}+36i\xi_{n}} {(1+\xi_{n}^{2})^{5}}\texttt{d}\xi_{n}
=\frac{ 2\pi i }{4!}\bigg[\frac{8i\xi_{n}^{5}+8i\xi_{n}^{3}+36i\xi_{n}}{(\xi_{n}+i)^{5}}\bigg]^{(4)}\Big|_{\xi_{n}=i}=0,
\end{equation}
and
\begin{equation}
\int_{-\infty}^{+\infty}\frac{8i\xi_{n}^{5}+24i\xi_{n}^{3}+24i\xi_{n}} {(1+\xi_{n}^{2})^{5}}\texttt{d}\xi_{n}
=0.
\end{equation}
Therefore
\begin{equation}
\text{ Case \ (10)}=\Big(-\frac{367}{128}\big(h'(0)\big)^{2}+\frac{103}{64}h''(0)\Big)\pi\Omega_{3}\texttt{d}x'.
\end{equation}

\textbf{Case (11)}: \ $r=-2, \ \ell=-1, \ k=0, \ j=0, \ |\alpha|=1$

From (2.20), we have
\begin{equation}
\text{ Case \ (11)}=-\int_{|\xi'|=1}\int_{-\infty}^{+\infty}\sum_{|\alpha|=1}
\text{trace}\Big[\partial_{\xi'}^{\alpha}\pi_{\xi_{n}}^{+}\sigma_{-2}(D^{-1})
\partial_{x'}^{\alpha}\partial_{\xi_{n}}\sigma_{-1}(D^{-1})\Big](x_{0})\texttt{d}\xi_{n}\sigma(\xi')\texttt{d}x' .
\end{equation}
By Lemma 3.3, for $i<n$, we have
\begin{equation}
\partial_{x_i}\sigma_{-1}(D^{-1})(x_0)=\partial_{x_i}\left(\frac{\sqrt{-1}c(\xi)}{|\xi|^2}\right)(x_0)=
\frac{\sqrt{-1}\partial_{x_i}[c(\xi)](x_0)}{|\xi|^2}
-\frac{\sqrt{-1}c(\xi)\partial_{x_i}(|\xi|^2)(x_0)}{|\xi|^4}=0.
\end{equation}
So Case (11) vanishes.

\textbf{Case (12)}: \ $r=-2, \ \ell=-1, \ k=1, \ j=0, \ |\alpha|=0$

From (2.20) and the Leibniz rule , we have
\begin{eqnarray}
\text{ Case \ (12)}&=&-\frac{1}{2}\int_{|\xi'|=1}\int_{-\infty}^{+\infty}
\text{trace}\Big[\partial_{x_{n}}\pi_{\xi_{n}}^{+}\sigma_{-2}(D^{-1})
\partial_{\xi_{n}}\partial_{x_{n}}\sigma_{-1}(D^{-1})\Big](x_{0})\texttt{d}\xi_{n}\sigma(\xi')\texttt{d}x'\nonumber\\
&=&\frac{1}{2}\int_{|\xi'|=1}\int_{-\infty}^{+\infty}
\text{trace}\Big[\pi_{\xi_{n}}^{+}\sigma_{-2}(D^{-1})
\partial_{\xi_{n}}^{2}\partial_{x_{n}}\sigma_{-1}(D^{-1})\Big](x_{0})\texttt{d}\xi_{n}\sigma(\xi')\texttt{d}x'.
\end{eqnarray}
By the Leibniz rule, trace property and "++" and "-~-" vanishing
after the integration over $\xi_n$ in \cite{FGLS}, then
\begin{eqnarray}
&&\int^{+\infty}_{-\infty}{\rm trace}
\Big[\pi_{\xi_{n}}^{+}\sigma_{-2}(D^{-1})
\partial_{\xi_{n}}^{2}\partial_{x_{n}}\sigma_{-1}(D^{-1})\Big]\texttt{d}\xi_n \nonumber\\
&=& \int^{+\infty}_{-\infty}{\rm trace}
\Big[\sigma_{-2}(D^{-1})\partial_{\xi_{n}}^{2}\partial_{x_{n}}\sigma_{-1}(D^{-1})\Big]\texttt{d}\xi_n
-\int^{+\infty}_{-\infty}{\rm trace}
\Big[\sigma_{-2}(D^{-1})\partial_{\xi_{n}}^{2}\partial_{x_{n}}\pi_{\xi_{n}}^{+}\sigma_{-1}(D^{-1})\Big]\texttt{d}\xi_n.\nonumber\\
\end{eqnarray}
Combining these assertions, we see
\begin{equation}
\text{ Case \ (12)}=\text{ Case \ (7)}+\frac{1}{2}\int_{|\xi'|=1}\int_{-\infty}^{+\infty}
\text{trace}\Big[\sigma_{-2}(D^{-1})\partial_{\xi_{n}}^{2}\partial_{x_{n}}\sigma_{-1}(D^{-1})\Big](x_{0})\texttt{d}\xi_{n}\sigma(\xi')\texttt{d}x'.
\end{equation}
From (3.42) and direct computations, we obtain
\begin{eqnarray}
\partial_{\xi_{n}}^{2}\partial_{x_{n}}\sigma_{-1}(D^{-1})(x_0)|_{|\xi'|=1}
&=&\frac{6i\xi_n^{2}-2i}{(1+\xi_n^{2})^{3}}\partial_{x_n}[c(\xi')](x_0)
+\sqrt{-1}h'(0)\Big[\frac{4(1-5\xi_n^{2})}{(1+\xi_n^{2})^{4}}c(\xi')\nonumber\\
&&-\frac{12\xi_n(\xi_n^{2}-1)}{(1+\xi_n^{2})^{4}}c(\texttt{d}x_n)\Big].
\end{eqnarray}
 From Lemma 3.8, combining (3.77)  and (3.108), we obtain
\begin{equation}
\text{trace}\Big[\sigma_{-2}(D^{-1})\partial_{\xi_{n}}^{2}\partial_{x_{n}}\sigma_{-1}(D^{-1})\Big](x_{0})
=(h'(0))^{2}\frac{30i\xi_{n}} {(1+\xi_{n}^{2})^{4}}.
\end{equation}
We note that
\begin{equation}
\int_{-\infty}^{+\infty}\frac{30i\xi_{n}} {(1+\xi_{n}^{2})^{4}}\texttt{d}\xi_{n}
=\frac{ 2\pi i }{3!}\bigg[\frac{30i\xi_{n}}{(\xi_{n}+i)^{4}}\bigg]^{(3)}\Big|_{\xi_{n}=i}=0.
\end{equation}
Therefore
\begin{equation}
\text{ Case \ (12)}=\frac{39}{32}\big(h'(0)\big)^{2}\pi\Omega_{3}\texttt{d}x'.
\end{equation}

\textbf{Case (13)}: \ $r=-2, \ \ell=-2, \ k=0, \ j=0, \ |\alpha|=0$

From (2.20) and the Leibniz rule , we have
\begin{eqnarray}
\text{ Case \ (13)}&=&-i\int_{|\xi'|=1}\int_{-\infty}^{+\infty}
\text{trace}\Big[\pi_{\xi_{n}}^{+}\sigma_{-2}(D^{-1})
\partial_{\xi_{n}}\sigma_{-2}(D^{-1})\Big](x_{0})\texttt{d}\xi_{n}\sigma(\xi')\texttt{d}x'\nonumber\\
&=&i\int_{|\xi'|=1}\int_{-\infty}^{+\infty}
\text{trace}\Big[\partial_{\xi_{n}}\pi_{\xi_{n}}^{+}\sigma_{-2}(D^{-1})
\sigma_{-2}(D^{-1})\Big](x_{0})\texttt{d}\xi_{n}\sigma(\xi')\texttt{d}x'.
\end{eqnarray}
By \textbf{Case b} in \cite{Wa3} and Lemma 3.5, we obtain
\begin{equation}
\pi^+_{\xi_n}\sigma_{-2}(D^{-1})(x_0)|_{|\xi'|=1}:=B_1-B_2,
\end{equation}
where
\begin{eqnarray}
B_1&=&h'(0)\frac{2+i\xi_n}{4(\xi_n-i)^{2}}c(\xi')c(\texttt{d}x_n)c(\xi')
+h'(0)\frac{-i}{2(\xi_n-i)^{2}}c(\xi')  +h'(0)\frac{-i\xi_n}{4(\xi_n-i)^{2}}c(\texttt{d}x_n)\nonumber\\
&&+\frac{i}{4(\xi_n-i)^{2}}\partial_{x_n}[c(\xi')](x_0)+\frac{-(2+i\xi_n)}{4(\xi_n-i)^{2}} c(\xi') c(\texttt{d}x_n)\partial_{x_n}[c(\xi')](x_0),
\end{eqnarray}
and
\begin{equation}
B_2=\frac{h'(0)}{2}\left[\frac{c(\texttt{d}x_n)}{4i(\xi_n-i)}+\frac{c(\texttt{d}x_n)-ic(\xi')}{8(\xi_n-i)^2}
+\frac{3\xi_n-7i}{8(\xi_n-i)^3}[ic(\xi')-c(\texttt{d}x_n)]\right].
\end{equation}
Hence in this case,
\begin{eqnarray}
\partial_{\xi_{n}}\big(B_1\big)&=&h'(0)\frac{-i\xi_n-3}{4(\xi_n-i)^{3}}c(\xi')c(\texttt{d}x_n)c(\xi')
+h'(0)\frac{i}{(\xi_n-i)^{3}}c(\xi')  +h'(0)\frac{i\xi_n-1}{4(\xi_n-i)^{3}}c(\texttt{d}x_n)\nonumber\\
&&+\frac{-i}{2(\xi_n-i)^{3}}\partial_{x_n}[c(\xi')](x_0)+\frac{i\xi_n+3}{4(\xi_n-i)^{3}} c(\xi') c(\texttt{d}x_n)\partial_{x_n}[c(\xi')](x_0),
\end{eqnarray}
and
\begin{equation}
\partial_{\xi_{n}}\big(B_2\big)=h'(0)\frac{-2i\xi_n-8}{8(\xi_n-i)^{4}}c(\xi')  +h'(0)\frac{i\xi_n^{2}+4\xi_n-9i}{8(\xi_n-i)^{4}}c(\texttt{d}x_n)
\end{equation}
 From Lemma 3.8, combining (3.77)  and (3.116), we obtain
\begin{equation}
\text{trace}\Big[\partial_{\xi_{n}}\big(B_1\big)\sigma_{-2}(D^{-1})\Big](x_{0})
=\frac{2i\xi_{n}^{5}-10\xi_{n}^{4}+26i\xi_{n}^{3}-10\xi_{n}^{2}+37i\xi_{n}+9}
 {-4(\xi_{n}-i)^{3}(1+\xi_{n}^{2})^{3}}\big(h'(0)\big)^{2}.
\end{equation}
Combining (3.77)  and (3.117), we obtain
\begin{equation}
\text{trace}\Big[\partial_{\xi_{n}}\big(B_2\big)\sigma_{-2}(D^{-1})\Big](x_{0})
=-\frac{2i\xi_{n}^{6}+8\xi_{n}^{5}-24i\xi_{n}^{4}-24\xi_{n}^{3}-35i\xi_{n}^{2}-68\xi_{n}+27i}
 {4(\xi_{n}-i)^{4}(1+\xi_{n}^{2})^{3}}\big(h'(0)\big)^{2}.
\end{equation}
Therefore
\begin{eqnarray}
\text{ Case \ (13) }
&=&i\big(h'(0)\big)^{2}\int_{|\xi'|=1}\int_{-\infty}^{+\infty}
\frac{16\xi_{n}^{5}-60i\xi_{n}^{4}-40\xi_{n}^{3}-82i\xi_{n}^{2}-114\xi_{n}+36i}
 {4(\xi_{n}-i)^{4}(1+\xi_{n}^{2})^{3}}\texttt{d}\xi_{n}\sigma(\xi')\texttt{d}x'\nonumber\\
 &=&i\big(h'(0)\big)^{2} \Omega_{3}\int_{\Gamma^{+}}
 \frac{16\xi_{n}^{5}-60i\xi_{n}^{4}-40\xi_{n}^{3}-82i\xi_{n}^{2}-114\xi_{n}+36i}
 {4(\xi_{n}-i)^{4}(1+\xi_{n}^{2})^{3}}\texttt{d}\xi_{n}\texttt{d}x'\nonumber\\
&=&i\big(h'(0)\big)^{2}  \frac{ 2\pi i }{6!}
\bigg[\frac{16\xi_{n}^{5}-60i\xi_{n}^{4}-40\xi_{n}^{3}-82i\xi_{n}^{2}-114\xi_{n}+36i}
{4(\xi_{n}+i)^{3}}\bigg]^{(6)}\bigg|_{\xi_{n}=i}\Omega_{3}\texttt{d}x'  \nonumber\\
&=&-\frac{821}{256}\big(h'(0)\big)^{2}\pi\Omega_{3}\texttt{d}x'.
\end{eqnarray}

\textbf{Case (14)}: \ $r=-1, \ \ell=-3, \ k=0, \ j=0, \ |\alpha|=0$

From (2.20) and the Leibniz rule , we have
\begin{eqnarray}
\text{ Case \ (14)}&=&-i\int_{|\xi'|=1}\int_{-\infty}^{+\infty}
\text{trace}\Big[\pi_{\xi_{n}}^{+}\sigma_{-1}(D^{-1})
\partial_{\xi_{n}}\sigma_{-3}(D^{-1})\Big](x_{0})\texttt{d}\xi_{n}\sigma(\xi')\texttt{d}x'\nonumber\\
&=&i\int_{|\xi'|=1}\int_{-\infty}^{+\infty}
\text{trace}\Big[\partial_{\xi_{n}}\pi_{\xi_{n}}^{+}\sigma_{-1}(D^{-1})
\sigma_{-3}(D^{-1})\Big](x_{0})\texttt{d}\xi_{n}\sigma(\xi')\texttt{d}x'.
\end{eqnarray}
From (3.31), (3.82)-(3.87), (3.91) and direct computations, we obtain
\begin{eqnarray}
&&\text{trace}\Big[\partial_{\xi_{n}}\pi_{\xi_{n}}^{+}\sigma_{-1}(D^{-1})
\sigma_{-3}(D^{-1})\Big](x_{0})\Big|_{|\xi'|=1}\nonumber\\
&=&-\frac{-8\xi_{n}^{7}+18i\xi_{n}^{6}-12\xi_{n}^{5}+61i\xi_{n}^{4}+26\xi_{n}^{3}+66i\xi_{n}^{2}+78\xi_{n}-25i}
 {2(\xi_{n}-i)^{2}(1+\xi_{n}^{2})^{5}}\big(h'(0)\big)^{2}\nonumber\\
 &&+h''(0)\frac{-2\xi_{n}^{5}+6i\xi_{n}^{4}+2\xi_{n}^{3}+11i\xi_{n}^{2}+14\xi_{n}-5i}
 {(\xi_{n}-i)^{2}(1+\xi_{n}^{2})^{4}}\nonumber\\
 &&+s_{\partial_{M}} \frac{-\xi_{n}^{3}-i\xi_{n}^{2}+\xi_{n}-i}{2(\xi_{n}-i)^{2}(1+\xi_{n}^{2})^{3}}
 +\sum_{\alpha,i,l <n}R^{\partial_{M}}_{\alpha il\alpha }(x_0) \xi_{i}\xi_{l}
 \frac{i\xi_{n}^{2}+12\xi_{n}-11i} {3(\xi_{n}-i)^{2}(1+\xi_{n}^{2})^{3}}.
\end{eqnarray}
Therefore
\begin{eqnarray}
\text{ Case \ (14) }
&=&-i\big(h'(0)\big)^{2}  \frac{ 2\pi i }{6!}
\bigg[\frac{-8\xi_{n}^{7}+18i\xi_{n}^{6}-12\xi_{n}^{5}+61i\xi_{n}^{4}+26\xi_{n}^{3}+66i\xi_{n}^{2}+78\xi_{n}-25i}
{2(\xi_{n}+i)^{5}}\bigg]^{(6)}\bigg|_{\xi_{n}=i}\Omega_{3}\texttt{d}x'  \nonumber\\
&&+i h''(0) \frac{ 2\pi i }{5!}\bigg[\frac{-2\xi_{n}^{5}+6i\xi_{n}^{4}+2\xi_{n}^{3}+11i\xi_{n}^{2}+14\xi_{n}-5i}
{(\xi_{n}+i)^{4}}\bigg]^{(5)}\bigg|_{\xi_{n}=i}\Omega_{3}\texttt{d}x'  \nonumber\\
&&+s_{\partial_{M}}\frac{2\pi i}{4!}\bigg[\frac{-\xi_{n}^{3}-i\xi_{n}^{2}+\xi_{n}-i}
{2(\xi_{n}+i)^{3}}\bigg]^{(4)}\bigg|_{\xi_{n}=i}\Omega_{3}\texttt{d}x'
\nonumber\\
&&+\sum_{i,l <n}R^{\partial_{M}}_{\alpha il\alpha }(x_0) \frac{\pi^{2}}{2}\delta_{l}^{i}
\frac{ 2\pi i }{4!}\bigg[\frac{i\xi_{n}^{2}+12\xi_{n}-11i}
{3(\xi_{n}+i)^{3}}\bigg]^{(4)}\bigg|_{\xi_{n}=i}\texttt{d}x' \nonumber\\
&=&\Big(\frac{239}{64}\big(h'(0)\big)^{2}-\frac{27}{16}h''(0)+\frac{29}{192}s_{\partial_{M}}\Big)\pi\Omega_{3}\texttt{d}x'.
\end{eqnarray}

\textbf{Case (15)}: \ $r=-3, \ \ell=-1, \ k=0, \ j=0, \ |\alpha|=0$

From (2.20) we have
\begin{equation}
\text{ Case \ (15)}=-i\int_{|\xi'|=1}\int_{-\infty}^{+\infty}
\text{trace}\Big[\pi_{\xi_{n}}^{+}\sigma_{-3}(D^{-1})
\partial_{\xi_{n}}\sigma_{-1}(D^{-1})\Big](x_{0})\texttt{d}\xi_{n}\sigma(\xi')\texttt{d}x'.
\end{equation}
By the Leibniz rule, trace property and "++" and "-~-" vanishing
after the integration over $\xi_n$ in \cite{FGLS}, then
\begin{eqnarray}
&&\int^{+\infty}_{-\infty}{\rm trace}
\Big[\pi_{\xi_{n}}^{+}\sigma_{-3}(D^{-1})
\partial_{\xi_{n}}\sigma_{-1}(D^{-1})\Big]\texttt{d}\xi_n \nonumber\\
&=& \int^{+\infty}_{-\infty}{\rm trace}
\Big[\sigma_{-3}(D^{-1})
\partial_{\xi_{n}}\sigma_{-1}(D^{-1})\Big]\texttt{d}\xi_n
-\int^{+\infty}_{-\infty}{\rm trace}
\Big[\sigma_{-3}(D^{-1})
\partial_{\xi_{n}}\pi_{\xi_{n}}^{+}\sigma_{-1}(D^{-1})\Big]\texttt{d}\xi_n.\nonumber\\
\end{eqnarray}
Combining these assertions, we see
\begin{equation}
\text{ Case \ (15)}=\text{ Case \ (14)}-i\int_{|\xi'|=1}\int_{-\infty}^{+\infty}
\text{trace}\Big[\sigma_{-3}(D^{-1})
\partial_{\xi_{n}}\sigma_{-1}(D^{-1})\Big](x_{0})\texttt{d}\xi_{n}\sigma(\xi')\texttt{d}x'.
\end{equation}
By Lemma 3.2, a simple computation shows
\begin{equation}
\partial_{\xi_{n}}\sigma_{-1}(D^{-1})(x_{0})\Big|_{|\xi'|=1}
=\frac{-2\xi_{n}}{(1+\xi_{n}^{2})^{2}}\sqrt{-1}c(\xi')+\frac{1-\xi_{n}^{2}}{(1+\xi_{n}^{2})^{2}}\sqrt{-1}c(\texttt{d}x_{n}).
\end{equation}
Combining (3.31), (3.82)-(3.87)  and (3.127),  we obtain
\begin{eqnarray}
&&\text{trace}\Big[\sigma_{-3}(D^{-1})
\partial_{\xi_{n}}\sigma_{-1}(D^{-1})\Big](x_{0})\Big|_{|\xi'|=1}\nonumber\\
&=&\frac{8\xi_{n}^{5}+24\xi_{n}^{3}+28\xi_{n}^{2}} {(1+\xi_{n}^{2})^{5}}\big(h'(0)\big)^{2}
 +\frac{-4\xi_{n}^{3}-8\xi_{n}} {(1+\xi_{n}^{2})^{4}}h''(0)\nonumber\\
 &&+s_{\partial_{M}} \frac{-\xi_{n}^{5}+4\xi_{n}^{3}+\xi_{n}} {(1+\xi_{n}^{2})^{5}}
 +\sum_{\alpha,i,l <n}R^{\partial_{M}}_{\alpha il\alpha }(x_0) \xi_{i}\xi_{l}
 \frac{20\xi_{n}} {3(1+\xi_{n}^{2})^{4}}.
\end{eqnarray}
We note that
\begin{equation}
\int_{-\infty}^{+\infty}\frac{8\xi_{n}^{5}+24\xi_{n}^{3}+28\xi_{n}^{2}} {(1+\xi_{n}^{2})^{5}}\texttt{d}\xi_{n}
=\frac{ 2\pi i }{4!}\bigg[\frac{8\xi_{n}^{5}+24\xi_{n}^{3}+28\xi_{n}^{2}}{(\xi_{n}+i)^{5}}\bigg]^{(4)}\Big|_{\xi_{n}=i}=0,
\end{equation}
and
\begin{equation}
\int_{-\infty}^{+\infty}\frac{-4\xi_{n}^{3}-8\xi_{n}} {(1+\xi_{n}^{2})^{4}}\texttt{d}\xi_{n}
=\int_{-\infty}^{+\infty}\frac{-\xi_{n}^{5}+4\xi_{n}^{3}+\xi_{n}} {(1+\xi_{n}^{2})^{5}}\texttt{d}\xi_{n}
=\int_{-\infty}^{+\infty}\frac{20\xi_{n}} {3(1+\xi_{n}^{2})^{4}}\texttt{d}\xi_{n}
=0.
\end{equation}
Therefore
\begin{equation}
\text{ Case \ (15)}=\Big(\frac{239}{64}\big(h'(0)\big)^{2}-\frac{27}{16}h''(0)+\frac{29}{192}s_{\partial_{M}}\Big)\pi\Omega_{3}\texttt{d}x'.
\end{equation}

Now $\Phi$  is the sum of the \textbf{case ($1,2,\cdots,15$)} , so
\begin{equation}
\sum_{I=1}^{15} \textbf{case I}=\Big(\frac{399}{256}\big(h'(0)\big)^{2}-\frac{29}{32}h''(0)+\big(\frac{71}{96}
+\frac{3}{32}i\big)s_{\partial_{M}}\Big)\pi\Omega_{3}\texttt{d}x'.
\end{equation}
Hence we conclude that, for  $5$-dimensional compact manifold $M$ with the boundary $\partial M$
 \begin{equation}
Vol_{5}^{(1, 1)}=\frac{1}{16}\int_{\partial_{M}}
\Big(\frac{399}{16}\big(h'(0)\big)^{2}-\frac{29}{2}h''(0)+\big(\frac{71}{6}+\frac{3}{2}i\big)s_{\partial_{M}}\Big)\pi\Omega_{3}{\rm dvol}_{\partial_{M}}.
\end{equation}

Next we recall the Einstein-Hilbert action for manifolds with boundary  (see \cite{Wa3} or \cite{Wa4}),
\begin{equation}
I_{\rm Gr}=\frac{1}{16\pi}\int_Ms{\rm dvol}_M+2\int_{\partial M}K{\rm dvol}_{\partial_M}:=I_{\rm {Gr,i}}+I_{\rm {Gr,b}},
\end{equation}
  where
  \begin{equation}
K=\sum_{1\leq i,j\leq {n-1}}K_{i,j}g_{\partial M}^{i,j};~~K_{i,j}=-\Gamma^n_{i,j},
\end{equation}
and $K_{i,j}$ is the second fundamental form, or extrinsic
curvature. Take the metric in Section 2, then by Lemma A.2 in \cite{Wa3},
$K_{i,j}(x_0)=-\Gamma^n_{i,j}(x_0)=-\frac{1}{2}h'(0),$ when $i=j<n$,
otherwise is zero. For $n=5$, then
  \begin{equation}
K(x_0)=\sum_{i,j}K_{i.j}(x_0)g_{\partial M}^{i,j}(x_0)=\sum_{i=1}^{4}K_{i,i}(x_0)=-2h'(0).
\end{equation}
 So
   \begin{equation}
I_{\rm {Gr,b}}=-4h'(0){\rm Vol}_{\partial M}.
\end{equation}

On the other hand, by Proposition 2.10 in \cite{Wa5}, we have
\begin{lem}
 Let M be a $5$-dimensional compact manifold  with the boundary $\partial M$, then
 \begin{equation}
s_{M}(x_{0})=3\big(h'(0)\big)^{2}-4h''(0)+s_{\partial_{M}}(x_{0}).
\end{equation}
\end{lem}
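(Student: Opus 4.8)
The plan is to deduce the identity from the structure of the metric $g^{M}=\frac{1}{h(x_{n})}g^{\partial M}+\texttt{d}x_{n}^{2}$ near $\partial M$, as in Proposition 2.10 of \cite{Wa5}. The key point is that near the boundary $(M,g^{M})$ is, locally, a warped product $\texttt{d}x_{n}^{2}+f(x_{n})^{2}g^{\partial M}$ with $f=h^{-1/2}$; by Lemma 2.1 (items (1)--(3)) and Lemma 2.2 this means each hypersurface $\Sigma_{t}=\{x_{n}=t\}$ is totally umbilic with unit normal $\partial_{x_{n}}$ and shape operator $(\ln f)'\,\mathrm{id}$. Hence the Gauss--Codazzi formalism applies, and tracing the Gauss equation twice gives
\begin{equation}
s_{M}=s_{\Sigma_{t}}+2\,\mathrm{Ric}_{M}(\partial_{x_{n}},\partial_{x_{n}})-H^{2}+|\mathrm{II}|^{2},
\end{equation}
where the left-hand side is $s_{M}$ restricted to $\Sigma_{t}$ and $H$, $\mathrm{II}$ are the mean curvature and the second fundamental form of $\Sigma_{t}$.

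Next, I would evaluate each term at the fixed point $x_{0}\in\partial M$ using the explicit metric. Since $g^{M}_{ij}(x_{0})=\delta_{ij}$ forces $h(0)=1$, the induced metric on $\Sigma_{0}$ is exactly $g^{\partial M}$, so the first term is $s_{\partial_{M}}(x_{0})$. From Lemma 2.2 (equivalently, from $\Gamma^{n}_{ij}(x_{0})=\tfrac12 h'(0)\delta_{ij}$, recorded just above) one gets $H(x_{0})=K(x_{0})=-2h'(0)$ and $|\mathrm{II}|^{2}(x_{0})=(h'(0))^{2}$, while $\mathrm{Ric}_{M}(\partial_{x_{n}},\partial_{x_{n}})(x_{0})$ is computed from the warped-product Ricci formula together with the elementary identities $(\ln f)'(0)=-\tfrac12 h'(0)$ and $(1/\sqrt{h})''(0)=\tfrac34(h'(0))^{2}-\tfrac12 h''(0)$ already used in the proof of Lemma 3.4. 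Plugging these into the displayed splitting, specializing to $n=5$, and collecting the $(h'(0))^{2}$, $h''(0)$ and $s_{\partial_{M}}$ contributions yields $s_{M}(x_{0})=3(h'(0))^{2}-4h''(0)+s_{\partial_{M}}(x_{0})$; since the computation does not depend on the choice of $x_{0}\in\partial M$, this proves the lemma.

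The work here is computational rather than conceptual, so the main thing to watch is the bookkeeping of conventions: one must use the curvature sign conventions matching the normal-coordinate expansion $g_{ij}(x)\sim\delta_{ij}-\tfrac13\sum R^{\partial_{M}}_{ikjl}(x_{0})x^{k}x^{l}$ of Lemma 3.4 (and those of \cite{Wa5}) uniformly in the Gauss equation and in the warped-product Ricci formula, since the precise coefficients $3$ and $-4$ depend on the signs entering $\mathrm{Ric}_{M}(\partial_{x_{n}},\partial_{x_{n}})$ and $|\mathrm{II}|^{2}$. Beyond this everything reduces to a direct calculation with the explicit metric $\frac{1}{h(x_{n})}g^{\partial M}+\texttt{d}x_{n}^{2}$.
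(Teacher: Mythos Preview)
Your proposal is correct and rests on the same idea as the paper's proof: the metric near the boundary is a warped product $\texttt{d}x_{n}^{2}+b(x_{n})^{2}g^{\partial M}$ with $b=h^{-1/2}$, and one simply reads off the scalar curvature from the warped-product formulas and evaluates at $x_{0}$ using $b(0)=1$, $b'(0)=-\tfrac12 h'(0)$, $b''(0)=\tfrac34(h'(0))^{2}-\tfrac12 h''(0)$.

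The only difference is packaging. The paper does not pass through the twice-traced Gauss equation at all: it directly invokes Proposition~2.10 of \cite{Wa5} (with $B=[0,1)$, $F=\partial M$, $b^{2}=1/h$), which gives the ready-made identity
\[
s_{M}(x_{0})=8\,b''(x_{0})-12\big(b'(x_{0})\big)^{2}+s_{\partial M}(x_{0}),
\]
and then substitutes the values of $b'(0)$ and $b''(0)$ to obtain $3(h'(0))^{2}-4h''(0)+s_{\partial M}(x_{0})$. Your route via the Gauss equation plus the warped-product expression for $\mathrm{Ric}_{M}(\partial_{x_{n}},\partial_{x_{n}})$ is a valid and slightly more self-contained derivation of exactly the same formula; it makes the geometric ingredients (umbilicity, $H$, $|\mathrm{II}|^{2}$) visible, at the cost of the extra convention-tracking you already flagged. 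Either argument proves the lemma.
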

\begin{proof}
From Proposition 2.10 in \cite{Wa5}, let $B=[0,1)$, $b^{2}=\frac{1}{h(x_{n})}$ and $F=\partial_{M}$,  we obtain $s_{B}=0$,
 $|{\rm grad}_{B}b|^2=(b')^{2}$ and
 \begin{equation}
s_{M}(x_{0})=8b''(x_{0})-12\big(b'(x_{0})\big)^{2}+s_{\partial_{M}}(x_{0}).
\end{equation}
By a simple computation, the lemma as follows.
\end{proof}
Hence from (3.133), (3.136) and (3.138), we obtain
\begin{thm}
 Let M be a $5$-dimensional compact manifold  with the boundary $\partial M$, then
  \begin{equation}
 \widetilde{{\rm Wres}}[(\pi^+D^{-1})^2]=\frac{\pi^{3}}{16}\int_{\partial_{M}}\Big( \frac{225}{64}K^{2}
 +\frac{29}{4}s_{M}\big|_{\partial_{M}}
 + \big(\frac{197}{12}+3i \big)s_{\partial_{M}}\Big) {\rm dvol}_{\partial_{M}}.
\end{equation}
\end{thm}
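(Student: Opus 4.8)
The plan is to read off $\widetilde{{\rm Wres}}[(\pi^+D^{-1})^2]$ from the boundary integral assembled above and then trade the non-intrinsic coefficients $h'(0)$ and $h''(0)$ for the curvature invariants $K$, $s_M|_{\partial M}$ and $s_{\partial M}$. By Proposition 3.1 one has $\widetilde{{\rm Wres}}[(\pi^+D^{-1})^2]=\int_{\partial M}\Phi$, and since $\Phi$ is a global form on $\partial M$ it may be evaluated pointwise, at an arbitrary $x_0\in\partial M$, in the normal coordinates of $x_0$ in $\partial M$ adapted to the collar metric $\frac{1}{h(x_n)}g^{\partial M}+\texttt{d}x_n^2$. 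The constraint $-r-\ell+k+j+|\alpha|=4$ with $r,\ell\le -1$ splits $\Phi$ into the fifteen contributions \textbf{Case (1)}--\textbf{Case (15)}; adding them gives (3.133), and integrating over $\partial M$ gives (3.134), which is an expression in $(h'(0))^2$, $h''(0)$ and $s_{\partial M}(x_0)$.

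The remaining step is substitution. From the value of the extrinsic curvature in this metric, $K(x_0)=-2h'(0)$, hence $(h'(0))^2=\frac14K^2$. From the scalar-curvature identity $s_M(x_0)=3(h'(0))^2-4h''(0)+s_{\partial M}(x_0)$ (the preceding Lemma) one solves
\[
h''(0)=\frac14\big(3(h'(0))^2+s_{\partial M}(x_0)-s_M(x_0)\big).
\]
Plugging both relations into the integrand of (3.134), collecting the coefficients of $K^2$, $s_M|_{\partial M}$ and $s_{\partial M}$, and rewriting $\pi\Omega_3$ by means of $\Omega_3=2\pi^2$, one obtains the asserted identity. This is routine arithmetic; the only points to watch are the overall numerical factor coming from $\Omega_3$ and the signs in the scalar-curvature substitution.

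The genuine difficulty lies not in this last assembly but in the computation of $\Phi$, and that is where I expect the main obstacle. In contrast with the earlier cases $n-p_1-p_2\le 2$, here $n-p_1-p_2=3$ forces the third symbol $q_{-3}$ of $D^{-1}$ (obtained from the recursion (3.8)) to enter, together with the various higher $x_n$- and $\xi'$-derivatives of the $-1$- and $-2$-order symbols $\sigma_{-1}(D^{-1})$, $\sigma_{-2}(D^{-1})$ and new trace identities for products of Clifford elements involving the boundary curvature tensor. Each of the fifteen cases then reduces to a scalar rational function of $\xi_n$, multiplied by $\xi'$-monomials which are integrated over $S^3$ by the standard pairing formulae; the $\xi_n$-integral is evaluated by the Cauchy integral formula on $\Gamma^+$, that is, by a high-order derivative at $\xi_n=i$. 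For the cases with $k\ge 1$ one first simplifies the $\pi^+_{\xi_n}$ factor using the vanishing, after the $\xi_n$-integration, of the ``$++$'' and ``$--$'' parts, which identifies \textbf{Case (6)} with \textbf{Case (2)}, \textbf{Case (10)} with \textbf{Case (9)}, \textbf{Case (12)} with \textbf{Case (7)} and \textbf{Case (15)} with \textbf{Case (14)}. Keeping these residues, binomial derivatives and $S^3$-averages mutually consistent across all fifteen terms is the delicate part; once (3.133) is in hand the theorem follows immediately by the substitutions above.
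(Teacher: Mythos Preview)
Your proposal is correct and follows essentially the same approach as the paper: reduce to $\int_{\partial M}\Phi$ via Proposition~3.1, evaluate the fifteen cases to obtain (3.132)--(3.133), and then substitute $(h'(0))^2=\tfrac14 K^2$ from (3.136) together with $h''(0)=\tfrac14\big(3(h'(0))^2+s_{\partial M}-s_M\big)$ from Lemma~3.9, rewriting $\pi\Omega_3=2\pi^3$. Your remark that the ``$++$/$--$'' vanishing reduces Cases~(6), (10), (12), (15) to Cases~(2), (9), (7), (14) is exactly how the paper handles those terms (the residual integrals in (3.67), (3.97), (3.107), (3.126) are shown to vanish).
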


\section*{ Acknowledgements}
This work was supported by Fok Ying Tong Education Foundation under Grant No. 121003,  NSFC. 11271062 and NCET-13-0721.
The authors also thank the referee for his (or her) careful reading and helpful comments.

\end{document}